\documentclass[a4paper,onecolumn,11pt,accepted=2024-05-06]{quantumarticle}
\pdfoutput=1
\usepackage[utf8]{inputenc}
\usepackage[english]{babel}
\usepackage[T1]{fontenc}
\usepackage{amsmath}
\usepackage{hyperref}
\usepackage[numbers,sort&compress]{natbib}
\usepackage{tikz}
\usepackage{lipsum}
\usepackage{amsmath,amsfonts,amssymb,caption,color,epsfig,graphics,graphicx,hyperref,latexsym,mathrsfs,revsymb,theorem,url,verbatim,enumerate,epstopdf,tikz,float,multirow,booktabs,appendix,enumitem}

\hypersetup{colorlinks,linkcolor={blue},citecolor={red},urlcolor={blue}}
\usetikzlibrary{arrows, decorations.markings}

\tikzstyle{vecArrow} = [thick, decoration={markings,mark=at position
	1 with {\arrow[semithick]{open triangle 60}}},
double distance=1.4pt, shorten >= 5.5pt,
preaction = {decorate},
postaction = {draw,line width=1.4pt, white,shorten >= 4.5pt}]
\tikzstyle{innerWhite} = [semithick, white,line width=1.4pt, shorten >= 4.5pt]

\newtheorem{definition}{Definition}
\newtheorem{proposition}{Proposition}
\newtheorem{lemma}{Lemma}

\newtheorem{theorem}{Theorem}
\newtheorem{corollary}[definition]{Corollary}
\newtheorem{conjecture}[definition]{Conjecture}

\newtheorem{remark}[definition]{Remark}
\newtheorem{example}{Example}
\newtheorem{question}[definition]{Question}

\def\bcj{\begin{conjecture}}
	\def\ecj{\end{conjecture}}
\def\bcr{\begin{corollary}}
	\def\ecr{\end{corollary}}
\def\bd{\begin{definition}}
	\def\ed{\end{definition}}
\def\bea{\begin{eqnarray}}
\def\eea{\end{eqnarray}}
\def\bem{\begin{enumerate}}
	\def\eem{\end{enumerate}}
\def\bex{\begin{example}}
	\def\eex{\end{example}}
\def\bim{\begin{itemize}}
	\def\eim{\end{itemize}}
\def\bl{\begin{lemma}}
	\def\el{\end{lemma}}
\def\bma{\begin{bmatrix}}
	\def\ema{\end{bmatrix}}
\def\bpf{\begin{proof}}
	\def\epf{\end{proof}}
\def\bpp{\begin{proposition}}
	\def\epp{\end{proposition}}
\def\bqu{\begin{question}}
	\def\equ{\end{question}}
\def\br{\begin{remark}}
	\def\er{\end{remark}}
\def\bt{\begin{theorem}}
	\def\et{\end{theorem}}


\def\squareforqed{\hbox{\rlap{$\sqcap$}$\sqcup$}}
\def\qed{\ifmmode\squareforqed\else{\unskip\nobreak\hfil
		\penalty50\hskip1em\null\nobreak\hfil\squareforqed
		\parfillskip=0pt\finalhyphendemerits=0\endgraf}\fi}
\def\endenv{\ifmmode\;\else{\unskip\nobreak\hfil
		\penalty50\hskip1em\null\nobreak\hfil\;
		\parfillskip=0pt\finalhyphendemerits=0\endgraf}\fi}
\newenvironment{proof}{\noindent \textbf{{Proof.~} }}{\qed}

\def\Dbar{\leavevmode\lower.6ex\hbox to 0pt
	{\hskip-.23ex\accent"16\hss}D}
\makeatletter
\def\url@leostyle{%
	\@ifundefined{selectfont}{\def\UrlFont{\sf}}{\def\UrlFont{\small\ttfamily}}}
\makeatother
\urlstyle{leo}

\def\bcj{\begin{conjecture}}
	\def\ecj{\end{conjecture}}
\def\bcr{\begin{corollary}}
	\def\ecr{\end{corollary}}
\def\bd{\begin{definition}}
	\def\ed{\end{definition}}
\def\bea{\begin{eqnarray}}
\def\eea{\end{eqnarray}}
\def\bem{\begin{enumerate}}
	\def\eem{\end{enumerate}}
\def\bex{\begin{example}}
	\def\eex{\end{example}}
\def\bim{\begin{itemize}}
	\def\eim{\end{itemize}}
\def\bl{\begin{lemma}}
	\def\el{\end{lemma}}
\def\bpf{\begin{proof}}
	\def\epf{\end{proof}}
\def\bpp{\begin{proposition}}
	\def\epp{\end{proposition}}
\def\bqu{\begin{question}}
	\def\equ{\end{question}}
\def\br{\begin{remark}}
	\def\er{\end{remark}}
\def\bt{\begin{theorem}}
	\def\et{\end{theorem}}

\def\btb{\begin{tabular}}
	\def\etb{\end{tabular}}

\newcommand{\nc}{\newcommand}



\nc{\bbA}{\mathbb{A}} \nc{\bbB}{\mathbb{B}} \nc{\bbC}{\mathbb{C}}
\nc{\bbD}{\mathbb{D}} \nc{\bbE}{\mathbb{E}} \nc{\bbF}{\mathbb{F}}
\nc{\bbG}{\mathbb{G}} \nc{\bbH}{\mathbb{H}} \nc{\bbI}{\mathbb{I}}
\nc{\bbJ}{\mathbb{J}} \nc{\bbK}{\mathbb{K}} \nc{\bbL}{\mathbb{L}}
\nc{\bbM}{\mathbb{M}} \nc{\bbN}{\mathbb{N}} \nc{\bbO}{\mathbb{O}}
\nc{\bbP}{\mathbb{P}} \nc{\bbQ}{\mathbb{Q}} \nc{\bbR}{\mathbb{R}}
\nc{\bbS}{\mathbb{S}} \nc{\bbT}{\mathbb{T}} \nc{\bbU}{\mathbb{U}}
\nc{\bbV}{\mathbb{V}} \nc{\bbW}{\mathbb{W}} \nc{\bbX}{\mathbb{X}}
\nc{\bbZ}{\mathbb{Z}}


\nc{\bA}{{\bf A}} \nc{\bB}{{\bf B}} \nc{\bC}{{\bf C}}
\nc{\bD}{{\bf D}} \nc{\bE}{{\bf E}} \nc{\bF}{{\bf F}}
\nc{\bG}{{\bf G}} \nc{\bH}{{\bf H}} \nc{\bI}{{\bf I}}
\nc{\bJ}{{\bf J}} \nc{\bK}{{\bf K}} \nc{\bL}{{\bf L}}
\nc{\bM}{{\bf M}} \nc{\bN}{{\bf N}} \nc{\bO}{{\bf O}}
\nc{\bP}{{\bf P}} \nc{\bQ}{{\bf Q}} \nc{\bR}{{\bf R}}
\nc{\bS}{{\bf S}} \nc{\bT}{{\bf T}} \nc{\bU}{{\bf U}}
\nc{\bV}{{\bf V}} \nc{\bW}{{\bf W}} \nc{\bX}{{\bf X}}
\nc{\bZ}{{\bf Z}} \nc{\bm}{{\bf m}} \nc{\bv}{{\bf v}}
\nc{\ba}{{\bf a}} \nc{\be}{{\bf e}} \nc{\bu}{{\bf u}}
\nc{\brr}{{\bf r}}


\nc{\cA}{{\cal A}} \nc{\cB}{{\cal B}} \nc{\cC}{{\cal C}}
\nc{\cD}{{\cal D}} \nc{\cE}{{\cal E}} \nc{\cF}{{\cal F}}
\nc{\cG}{{\cal G}} \nc{\cH}{{\cal H}} \nc{\cI}{{\cal I}}
\nc{\cJ}{{\cal J}} \nc{\cK}{{\cal K}} \nc{\cL}{{\cal L}}
\nc{\cM}{{\cal M}} \nc{\cN}{{\cal N}} \nc{\cO}{{\cal O}}
\nc{\cP}{{\cal P}} \nc{\cQ}{{\cal Q}} \nc{\cR}{{\cal R}}
\nc{\cS}{{\cal S}} \nc{\cT}{{\cal T}} \nc{\cU}{{\cal U}}
\nc{\cV}{{\cal V}} \nc{\cW}{{\cal W}} \nc{\cX}{{\cal X}}
\nc{\cZ}{{\cal Z}}


\nc{\hA}{{\hat{A}}} \nc{\hB}{{\hat{B}}} \nc{\hC}{{\hat{C}}}
\nc{\hD}{{\hat{D}}} \nc{\hE}{{\hat{E}}} \nc{\hF}{{\hat{F}}}
\nc{\hG}{{\hat{G}}} \nc{\hH}{{\hat{H}}} \nc{\hI}{{\hat{I}}}
\nc{\hJ}{{\hat{J}}} \nc{\hK}{{\hat{K}}} \nc{\hL}{{\hat{L}}}
\nc{\hM}{{\hat{M}}} \nc{\hN}{{\hat{N}}} \nc{\hO}{{\hat{O}}}
\nc{\hP}{{\hat{P}}} \nc{\hR}{{\hat{R}}} \nc{\hS}{{\hat{S}}}
\nc{\hT}{{\hat{T}}} \nc{\hU}{{\hat{U}}} \nc{\hV}{{\hat{V}}}
\nc{\hW}{{\hat{W}}} \nc{\hX}{{\hat{X}}} \nc{\hZ}{{\hat{Z}}}

\nc{\hn}{{\hat{n}}}



























\def\dim{\mathop{\rm Dim}}









\newcommand{\upb}{\mathcal{U}}
\newcommand{\bra}[1]{\langle #1|}
\newcommand{\ket}[1]{| #1\rangle}

\newcommand{\ketbra}[2]{|#1\rangle\!\langle#2|}
\newcommand{\braket}[2]{\langle#1|#2\rangle}

\newcommand{\fl}[2]{\lfloor\frac{#1}{#2}\rfloor}


















\def\Dbar{\leavevmode\lower.6ex\hbox to 0pt
	{\hskip-.23ex\accent"16\hss}D}

\newcommand{\indexset}{\Lambda}
\newcommand{\ifel}{\text{ if }}
\newcommand{\0}{{\bf 0}}
\newcommand{\partition}{\cB}
\renewcommand{\mod}{\;\mathrm{mod}\,}

\newcommand{\ops}{\cO}

\begin{document}
	\title{Strong quantum nonlocality and unextendibility without entanglement in $N$-partite systems with odd $N$}
	\author{Yiyun He}
	\affiliation{Department of Mathematics,
		University of California, Irvine, 92697, CA, United States}
	
	\author{Fei Shi}
 \affiliation{QICI Quantum Information and Computation Initiative, Department of Computer Science,
The University of Hong Kong, Pokfulam Road, Hong Kong}
	
	\author{Xiande Zhang}
	\email[]{
 drzhangx@ustc.edu.cn}
	\affiliation{School of Mathematical Sciences,
		University of Science and Technology of China, Hefei, 230026, People's Republic of China}
\affiliation{Hefei National Laboratory, University of Science and Technology of China, Hefei, 230088, China}
	%
	
	\begin{abstract}
     A set of orthogonal product states is strongly nonlocal if it is locally irreducible in every bipartition, which shows the phenomenon of strong quantum nonlocality without entanglement [\href{https://journals.aps.org/prl/abstract/10.1103/PhysRevLett.122.040403}{Phys. Rev. Lett. \textbf{122}, 040403 (2019)}].  Although such a phenomenon has been shown to any three-, four-, and five-partite systems, the existence of strongly nonlocal orthogonal product sets in multipartite systems remains unknown. In this paper, by using a general decomposition of the $N$-dimensional hypercubes, we present strongly nonlocal orthogonal product sets in $N$-partite systems for all odd $N\geq 3$. Based on this decomposition, we give explicit constructions of unextendible product bases in $N$-partite systems for odd $N\geq 3$. Furthermore, we apply our results to quantum secret sharing, uncompletable product bases, and PPT entangled states.
	\end{abstract}		
	\maketitle

\vspace{-0.5cm}
\indent{\textbf{Keywords}}: strong quantum nonlocality, unextendible product bases, hypercubes	

	\section{Introduction}
	
		Quantum nonlocality is one of the most fundamental property in quantum world. Entangled states show Bell nonlocality for
	violating Bell-type inequalities \cite{horodecki2009quantum,brunner2014bell}. However, besides Bell-type nonlocality, there is a different kind of nonlocality which arises from local indistinguishability. A set of orthogonal states is locally indistinguishable if it is impossible to distinguish them under local operations and classical communications (LOCC). Bennett \emph{et al.} showed the phenomenon of quantum nonlocality without entanglement, by presenting a locally indistinguishable orthogonal product basis (OPB) in $\bbC^3\otimes \bbC^3$ \cite{bennett1999quantum}. Later, quantum nonlocality based on local indistinguishability  has received much attention \cite{walgate2000local,ghosh2001distinguishability,Horodecki2003,DivincenzoDavidP2003,de2004distinguishability,GhoshSibasish2004,FanHeng2004,Niset2006,fan2007distinguishing,FengY2009,YuNengkun2012,Bandyopadhyay2012,cosetino2013,Zhang2014,li2015d,zhang2016local,xuguangbao2017,wangyanling2017,zhangzhichao2017,halder2018,xu2021novel,xiong2019positive,zuo2021nonlocal,li2021local,zhu2022nonlocal,Zhen2022,wang2022small,li2023bounds,cao2023locally}.
	
	Recently, a stronger version of local indistinguishability was introduced by Halder \emph{et al.} - local irreducibility  \cite{Halder2019Strong}. A set of orthogonal states is locally irreducible if it is not possible to eliminate one or more states from the set by orthogonality-preserving local measurements. Moreover, a set of orthogonal states is strongly nonlocal if it is locally irreducible in every bipartition. They also showed the phenomenon of strong quantum nonlocality without entanglement, by presenting two strongly nonlocal OPBs in $\bbC^{3}\otimes \bbC^{3}\otimes \bbC^{3}$ and $\bbC^{4}\otimes \bbC^{4}\otimes \bbC^{4}$, respectively. Then strongly nonlocal orthogonal product sets (OPSs) and orthogonal entangled sets (OESs) were also widely investigated \cite{2020Strong,yuan2020strong,Wang2021,Shi2021strongUPB,Shi2022UPB,li2023bounds,zhou2022orthogonal,Shi2022,li2023strongest,xiong2024existence,hu2024strong,bhunia2024strong}. 
 However, the phenomenon of strong quantum nonlocality without entanglement  has been limited to three-, four-, and five-partite systems up to now \cite{shi2021hyper}. It is difficult to show this phenomenon in $N$-partite systems. This is because  the main construction of strongly nonlocal OPSs relies on the decomposition of $N$-dimensional hypercubes \cite{shi2021hyper}, and when $N$ is large, the decomposition can be more complex. In this paper, we will give a general decomposition of the  $N$-dimensional hypercubes for odd $N\geq 3$, and construct strongly nonlocal OPSs in $N$-partite systems for odd $N\geq 3$.

	An unextendible product basis (UPB) is a set of orthonormal product states whose complementary space has	no product states \cite{bennett1999unextendible}. UPBs can be used to construct bound entangled states \cite{bennett1999unextendible,DivincenzoDavidP2003}, Bell-type inequalities without quantum violation \cite{Augusiak2012tight,augusiak2011bell} and
	fermionic systems \cite{Chen2014Unextendible}. UPBs are also connected to quantum nonlocality and strong quantum nonlocality \cite{de2004distinguishability,Shi2022UPB,Shi2021strongUPB}. By using tile structures, Shi \emph{et al.} gave some explicit constructions of UPBs in $\bbC^m\otimes \bbC^n$ \cite{Shi2020Unextendible}. Then by the decomposition of three- and four-dimensional hypercubes, the authors of \cite{Shi2021strongUPB,Agrawal2019Genuinely} showed some UPBs in three- and four-partite systems. In this paper, based on the decomposition  $N$-dimensional hypercubes for odd $N\geq 3$, we give explicit constructions of UPBs in $N$-partite systems for odd $N\geq 3$.


   Here is some brief outlines of some underlying motivations of our work. Strong quantum nonlocality can be used for quantum secret sharing. Suppose that information is encoded into a strongly nonlocal OPS in an $N$-partite system, and sent to $N$ players, 
    where the $N$ players can only communicate classically and perform orthogonality-preserving local measurements. Then then the original information cannot be perfectly  recovered by the $N$ players, even if $k$ ($k<N$) players collude with each other. Thus it is important to construct strongly nonlocal OPSs in $N$-partite systems, which is the first motivation of this work. An OPS is uncompletable if it cannot be extended to a fully OPB. In 2003, DiVincenzo \emph{et al.} proposed an open question, whether there eixsts a UPB which is uncompletable in every bipartition \cite{DivincenzoDavidP2003}.  Recently, Shi \emph{et al.} showed such a UPB exists in abitrary three-, and four-partite system \cite{shi2022unextendible}. 
     However, the existence of such a UPB in arbitrary $N$-partite system remains unknown. This is because  there are few explicit constructions of UPBs in  $N$-partite systems.  We  show that our UPBs will hopefully solve this problem for odd $N\geq 3$.
    This is the second motivation of this work. 
    A mixed state is a PPT state if it is positive under partial transpose (PPT). PPT entangled states corresponds to bound entangled states, where no pure entanglement can be distilled \cite{horodecki1998mixed}. The normalized projector on the orthogonal complement  of the subspace spanned by a UPB  is a PPT entangled state \cite{bennett1999unextendible}.
    If a mixed state is a PPT entangled state across every bipartition, then it  shows that the set of states separable across every bipartition is a proper subset of the set of states PPT across every bipartition
\cite{ranjan2021state}. We will show that our UPBs in  $N$-partite systems with $N=5$ can be used to construct mixed states which are PPT entangled states in every bipartition. This is the third motivation of this work.

    This paper is organized as follows. In Sec. \ref{sec:preliminaries}, we introduce strong quantum nonlocality and UPBs. In Sec. \ref{sec:construction}, we give a decomposition of the hypercube $\bbZ_3^N$, and construct an OPS from this decomposition. We also give a construction of a strongly nonlocal OPS from the decomposition of the hypercube. In Sec. \ref{sec:upb}, we introduce another main result, a construction of UPBs in $(\bbC^3)^{\otimes N}$.  More general results $N$-partite systems for odd $N\geq 3$ can be found in Sec.~\ref{sec:general}. In Sec.~\ref{sec:app}, we give some applications of our results. Finally, we  conclude  in Sec.~\ref{sec:con}.

	\section{Preliminaries}\label{sec:preliminaries}
	In the section, we will introduce the concepts of strong quantum nonlocality and UPBs.
	To simplify the notation, we do not normalize states and operators.
	The positive operator-valued measure (POVM) is a set of positive semi-definite operators $\{E_m=M_m^{\dagger} M_m\}$ acting on $\cH$, a Hilbert space, with a property that $\sum_m{M_m^{\dagger}M_m}=\bbI_{\cH}$, where $\bbI_{\cH}$ is an identity operator on $\cH$. If the set is a POVM, we call each $E_m$ as a POVM element. We focus on this measurement in this paper, and we regard the POVM measurement as trivial measurement if all the POVM elements $E_m$ are proportional to the identity operator.

	In particular, we consider a specific local measurement called orthogonality-preserving local measurement (OPLM). It is performed to distinguish multipartite orthogonal states, and it is defined with the post-measurement states remaining to be orthogonal.

 \begin{definition}\cite{Halder2019Strong}
     A set of multipartite orthogonal states is \emph{locally irreducible} if it is not possible to eliminate one or more states from the set by OPLMs. Furthermore, a set of multipartite orthogonal states is  \emph{strongly nonlocal} if it is locally irreducible for every bipartition of the subsystems.
 \end{definition}

For example, the Bell basis $\{\Psi_i\}_{i=1}^4$ is locally  irreducible, where
\begin{equation*}
\begin{aligned}
    \ket{\Psi_1}=\ket{0}_{A_1}\otimes \ket{0}_{A_2}+\ket{1}_{A_1}\otimes \ket{1}_{A_2},\\
     \ket{\Psi_2}=\ket{0}_{A_1}\otimes \ket{0}_{A_2}-\ket{1}_{A_1}\otimes \ket{1}_{A_2},\\
    \ket{\Psi_3}=\ket{0}_{A_1}\otimes \ket{1}_{A_2}+\ket{1}_{A_1}\otimes \ket{0}_{A_2},\\
    \ket{\Psi_4}=\ket{0}_{A_1}\otimes \ket{1}_{A_2}-\ket{1}_{A_1}\otimes \ket{0}_{A_2}.
    \end{aligned}
\end{equation*}
If $A_1$ performs an OPLM $\{E=M^\dagger M\}$, where $E$ can be written as a $2\times 2$ matrix $(a_{i,j})_{i,j\in \bbZ_2}$ under the basis $\{\ket{0}_{A_1},\ket{1}_{A_1}\}$,  then $\{M\otimes\bbI_{A_2} \ket{\Psi_i}\}_{i=1}^4$ should be mutually orthogonal. Since $\bra{\Psi_1}E\otimes\bbI_{A_2}\ket{\Psi_2}=0$, we obtain $a_{0,0}=a_{1,1}$.  Moreover, since $\bra{\Psi_1}E\otimes\bbI_{A_2}\ket{\Psi_3}=\bra{\Psi_1}E\otimes\bbI_{A_2}\ket{\Psi_4}=0$, we have $a_{0,1}=a_{1,0}=0$. It means that arbitrary OPLM performed by $A_1$ is trvial.   By the symmetry of the Bell basis, arbitrary OPLM performed by $A_2$ is also trvial. 
Thus, the Bell basis $\{\ket{\Psi_i}\}_{i=1}^4$ is locally irreducible.


There is a simple method for showing strong quantum nonlocality.
	\begin{lemma}\cite{Shi2022UPB}\label{lem:cyc}
		Let $\cS:=\{\ket{\psi_j}\}$ be a set of orthogonal states in a multipartite system $\otimes_{i=1}^{N}\cH_{A_i}$. For each $i=1,2,\ldots,N$, define $B_i=\{A_1A_2\ldots A_N\}\setminus \{A_i\}$ be the joint party of all but the $i$th party. Then the set  $\cS$ is strongly nonlocal if the following condition holds for any $1\leq i\leq N$: if party $B_i$  performs any OPLM, then the OPLM is trivial.
	\end{lemma}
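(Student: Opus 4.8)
The plan is to peel back the definition of strong nonlocality and reduce an arbitrary bipartition to one of the $N$ special bipartitions $B_i\,|\,A_i$ that the hypothesis already controls. By definition $\cS$ is strongly nonlocal exactly when it is locally irreducible across every splitting of $\{A_1,\dots,A_N\}$ into two nonempty joint parties $T$ and $\overline{T}$. Fix one such splitting and write $\cH_T:=\bigotimes_{A_i\in T}\cH_{A_i}$ and $\cH_{\overline{T}}$ for the two tensor pieces. The whole argument rests on a single claim: under the hypothesis, every OPLM that $T$ can perform on $\cH_T$ is trivial, and the same holds for $\overline{T}$.

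To prove the claim, let $\{E_m=M_m^{\dagger}M_m\}$ be an OPLM carried out by $T$, so the $M_m$ act on $\cH_T$ with $\sum_m M_m^{\dagger}M_m=\bbI_{\cH_T}$. Since $\overline{T}\neq\es$, pick any party $A_i\in\overline{T}$. Then $B_i=T\cup(\overline{T}\setminus\{A_i\})$ as a disjoint union, so $\cH_{B_i}=\cH_T\otimes\cH_{\overline{T}\setminus\{A_i\}}$ while $\cH_{\overline{T}}=\cH_{\overline{T}\setminus\{A_i\}}\otimes\cH_{A_i}$. Set $\widetilde{M}_m:=M_m\otimes\bbI_{\overline{T}\setminus\{A_i\}}$ on $\cH_{B_i}$; then $\sum_m\widetilde{M}_m^{\dagger}\widetilde{M}_m=\bbI_{\cH_{B_i}}$, so $\{\widetilde{M}_m^{\dagger}\widetilde{M}_m\}$ is a legitimate measurement for party $B_i$. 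Because $\widetilde{M}_m\otimes\bbI_{A_i}=M_m\otimes\bbI_{\overline{T}}$, the post-measurement states that $B_i$ produces from $\ket{\psi_j}$ coincide with those that $T$ produces, and in particular the orthogonality-preserving condition for $B_i$,
\[
\bra{\psi_j}\bigl(\widetilde{M}_m^{\dagger}\widetilde{M}_m\otimes\bbI_{A_i}\bigr)\ket{\psi_k}=0\quad(j\neq k,\ \text{all }m),
\]
is literally the same system of equations as the orthogonality-preserving condition for the original OPLM by $T$. Hence $\{\widetilde{M}_m^{\dagger}\widetilde{M}_m\}$ is an OPLM for $B_i$, so by hypothesis it is trivial: $\widetilde{M}_m^{\dagger}\widetilde{M}_m\propto\bbI_{\cH_{B_i}}$ for every $m$. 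Tracing out $\cH_{\overline{T}\setminus\{A_i\}}$ (equivalently, sandwiching with a fixed unit vector on that factor) forces $M_m^{\dagger}M_m\propto\bbI_{\cH_T}$, so the OPLM by $T$ was already trivial. Swapping the roles of $T$ and $\overline{T}$ and picking instead a party in $T$ (nonempty by assumption) gives the claim for $\overline{T}$.

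The claim yields local irreducibility across $T\,|\,\overline{T}$ by an easy induction on the number of rounds of an LOCC protocol. A trivial measurement maps each $\ket{\psi_j}$ to a scalar multiple of itself, so after renormalization it leaves the ensemble equal to $\cS$ and eliminates no state; therefore after any number of rounds the current set of states is still $\cS$, whichever of $T,\overline{T}$ moves next is performing an OPLM on $\cS$, which by the claim is trivial, and the induction continues. Thus no sequence of OPLMs across the cut $T\,|\,\overline{T}$ removes a state, i.e.\ $\cS$ is locally irreducible there; since the cut was arbitrary, $\cS$ is strongly nonlocal.

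The one genuinely load-bearing step is the reduction in the second paragraph, and it is more bookkeeping than depth: one must line up the tensor-factor decompositions so that an OPLM on the smaller joint party $T$ is seen as an OPLM on the larger joint party $B_i$, and then check that ``proportional to $\bbI_{\cH_{B_i}}$'' descends to ``proportional to $\bbI_{\cH_T}$.'' I would also flag that all $N$ hypotheses are needed: for an unbalanced bipartition one may have to take the witnessing party $A_i$ on the small side of the cut to handle one direction and a party $A_j$ on the large side to handle the other, so none of the $N$ cuts $B_i\,|\,A_i$ is redundant.
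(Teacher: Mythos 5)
Your argument is correct. Note that the paper itself supplies no proof of this lemma --- it is imported verbatim from Ref.~\cite{shi2021} --- so there is nothing internal to compare against; but your reduction is exactly the standard one used there: for an arbitrary cut $T\,|\,\overline{T}$, embed an OPLM performed by $T$ into an OPLM performed by $B_i$ for some $A_i\in\overline{T}$ by tensoring each Kraus operator with $\bbI_{\overline{T}\setminus\{A_i\}}$, observe that the orthogonality-preserving conditions coincide, invoke the hypothesis to get $M_m^{\dagger}M_m\otimes\bbI\propto\bbI_{\cH_{B_i}}$, and descend to $M_m^{\dagger}M_m\propto\bbI_{\cH_T}$; the symmetric choice of a witness party inside $T$ handles $\overline{T}$. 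Your closing observation that one needs witnesses on both sides of the cut (hence all $N$ hypotheses) and the induction over protocol rounds (a trivial POVM element is proportional to a local unitary, so the ensemble is unchanged up to local unitaries and no state can ever be excluded) are the right finishing touches.
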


Next, we introduce the concept of UPBs.
\begin{definition}\cite{bennett1999unextendible}
    A set of orthogonal product states $\{\ket{\psi_i}\}$ is an unextendible product basis (UPB) if the orthogonal complement of ${\sf Span}\{\ket{\psi_i}\}$    has non-zero dimension and contains no product state. 
\end{definition}

  For example, let
	\begin{equation}
	\begin{aligned}
	&\ket{\psi_1}=\ket{0}_{A_1}\otimes \ket{1}_{A_2}\otimes \ket{+}_{A_3}, \\ 	&\ket{\psi_2}=\ket{1}_{A_1}\otimes \ket{+}_{A_2}\otimes \ket{0}_{A_3}, \\
	&\ket{\psi_3}=\ket{+}_{A_1}\otimes \ket{0}_{A_2}\otimes \ket{1}_{A_3}, \\	&\ket{\psi_4}=\ket{-}_{A_1}\otimes \ket{-}_{A_2}\otimes \ket{-}_{A_3}, \\
	\end{aligned}
	\end{equation}
	where $\ket{\pm}=\ket{0}\pm \ket{1}$. Then
	$\{\ket{\psi_i}\}_{i=1}^4$ is  a UPB in $\bbC^2\otimes \bbC^2 \otimes \bbC^2$.

	Let $(\bbC^{d})^{\otimes N}:=\bbC^d\otimes \bbC^d\otimes \cdots \otimes \bbC^d$, and denote $\bbZ_d^N:=\bbZ_d\times \bbZ_d \times \cdots\times \bbZ_d$,  where $\bbZ_d:=\{0,1,2,\ldots,d-1\}$ and repeat $N$ times.
	We assume that $\{\ket{j}\}_{j\in\bbZ_d}$ is the computational basis of $\bbC^d$, then the computational basis of $(\bbC^{d})^{\otimes N}$ is
	\begin{equation}
	\{\ket{j_1}_{A_1}\otimes \ket{j_2}_{A_2}\otimes\cdots\otimes\ket{j_N}_{A_N}\}_{(j_1,j_2,\ldots,j_N)\in \bbZ_d^N}.
	\end{equation}
    In the space $\bbZ_d^N$, when there is no ambiguity, we denote the vector $(j_1,j_2,\ldots,j_N)$ together with the single set $\{(j_1,j_2,\ldots,j_N)\}$ as
	\begin{equation}
 {\{j_1\}}_{A_1}\times{\{j_2\}}_{A_2}\times\cdots\times{\{j_N\}}_{A_N}.
	\end{equation}
	There are $d^N$ vectors in $\bbZ_d^N$, and all vectors in $\bbZ_d^N$ form an $N$-dimensional hypercube.
	Each product state $\ket{j_1}_{A_1}\otimes \ket{j_2}_{A_2}\otimes\cdots\otimes\ket{j_N}_{A_N}\in (\bbC^{d})^{\otimes N}$ corresponds to a vector $ {\{j_1\}}_{A_1}\times{\{j_2\}}_{A_2}\times\cdots\times{\{j_N\}}_{A_N}\in \bbZ_d^N$ in the hypercube.
	
	For a general product state $\ket{\psi}\in (\bbC^{d})^{\otimes N}$, there exists a unique subset $E_{A_i}\subseteq\bbZ_{d}$ for $1\leq i\leq N$ and nonzero $a_{j_i}^{(A_i)}\in\bbC$ for $j_i\in E_{A_i}$, such that
	
	\begin{equation}
	\ket{\psi}=\left(\sum_{j_1\in E_{A_1}}a^{(A_1)}_{j_1}\ket{j_1}\right)_{A_1}\otimes\left(\sum_{j_2\in E_{A_2}}a^{(A_2)}_{j_2}\ket{j_2}\right)_{A_2}\otimes
	\cdots\otimes\left(\sum_{j_N\in E_{A_N}}a^{(A_N)}_{j_N}\ket{j_N}\right)_{A_N}.
	\end{equation}
    Then the \emph{support set} $E=E_{A_1}\times E_{A_2}\times \cdots\times E_{A_N}\subseteq \bbZ_d^N$ is a  subcube of $\bbZ_d^N$. We denote $|E|:=\prod_{i=1}^N|E_{A_i}|$ as the number of vectors contained in $E$. For instance, in the example above, $\ket{\psi_1} = \ket{0}_{A_1}\otimes \ket{1}_{A_2}\otimes (\ket{0}_{A_3}+\ket{1}_{A_3})$, and its support set is $E = \{0\}_{A_1}\times\{1\}_{A_2}\times \{0,1\}_{A_3}$.
 
    We say a set of subcubes forms a \emph{decomposition} of $\bbZ_d^N$, if any two subcubes are disjoint, and the union of all subcubes is $\bbZ_d^N$.
	
   \section{Construction of OPSs in $(\bbC^3)^{\otimes N}$ for odd $N\geq 3$ }	
   \label{sec:construction}
	
    In this section, we show a decomposition of $\bbZ_3^N$ for odd $N\geq 3$, and we show an OPS from this decomposition.
      When $N=3$, the authors of Ref.~\cite{Agrawal2019Genuinely} gave a decomposition $\{\cB_i\}_{i=1}^9$ of  $\bbZ_3^3$ (we exchange $A_1$ and $A_3$ parties of Ref.~\cite{Agrawal2019Genuinely}), where the 9 subcubes are
	\begin{equation}\label{eq:decomposition_333}
	\begin{aligned}
	\cB_1:=&\{0, 1\}_{A_1}\times\{1, 2\}_{A_2}\times\{2\}_{A_3}, \\
	\cB_2:=&\{0, 1\}_{A_1}\times\{0\}_{A_2}\times\{1,2\}_{A_3},\\
	\cB_3:=&\{0\}_{A_1}\times\{1, 2\}_{A_2}\times\{0, 1\}_{A_3},\\
	\cB_4:=&\{1, 2\}_{A_1}\times\{0, 1\}_{A_2}\times\{0\}_{A_3}, \\
	\cB_5:=&\{1, 2\}_{A_1}\times\{2\}_{A_2}\times\{0, 1\}_{A_3},\\
	\cB_6:=&\{2\}_{A_1}\times\{0, 1\}_{A_2}\times\{1, 2\}_{A_3}, \\
	\cB_7:=&\{0\}_{A_1}\times\{0\}_{A_2}\times\{0\}_{A_3},\\ 
     \cB_8:=&\{1\}_{A_1}\times\{1\}_{A_2}\times\{1\}_{A_3},\\ 
     \cB_9:=&\{2\}_{A_1}\times\{2\}_{A_2}\times\{2\}_{A_3}.
	\end{aligned}
	\end{equation}
	Note that $\cB_i$ contains four vectors of $\bbZ_3^3$ for $1\leq i\leq 6$, and $\cB_i$ contains one vector of $\bbZ_3^3$ for $7\leq i\leq 9$.   See Fig.~\ref{fig:333} for this decomposition.
	\begin{figure}[h]
		\centering
		\includegraphics[scale=0.55]{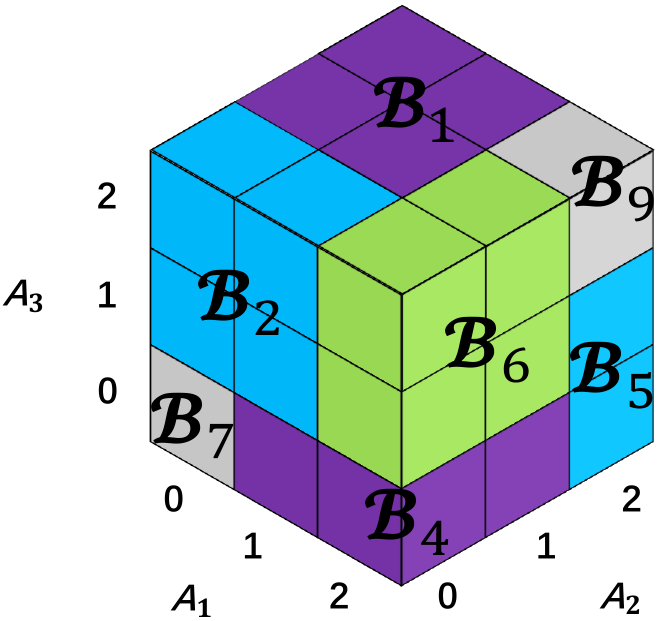}
		\caption{The decomposition for the $\bbZ_3^3$ hypercube.   }\label{fig:333}
	\end{figure}


	Then by these subcubes, the authors constructed an OPB $\cup_{i=1}^9\cB_i$ in $(\bbC^{3})^{\otimes 3}$, where
	\begin{equation}\label{eq:opb333}
	\begin{aligned}
    \ket{\cB_1}&:=\{\ket{\psi_1{(i,j)}}=\ket{\eta_i}_{A_1}\otimes\ket{\xi_j}_{A_2}\otimes\ket{2}_{A_3}\mid (i,j)\in\bbZ_2^2\},\\
	\ket{\cB_2}&:=\{\ket{\psi_2{(i,j)}}=\ket{\eta_i}_{A_1}\otimes\ket{0}_{A_2}\otimes\ket{\xi_j}_{A_3}\mid (i,j)\in\bbZ_2^2\},\\
    \ket{\cB_3}&:=\{\ket{\psi_3{(i,j)}}=\ket{0}_{A_1}\otimes\ket{\xi_i}_{A_2}\otimes\ket{\eta_j}_{A_3}\mid (i,j)\in\bbZ_2^2\},\\
	\ket{\cB_4}&:=\{\ket{\psi_4{(i,j)}}=\ket{\xi_i}_{A_1}\otimes\ket{\eta_j}_{A_2}\otimes\ket{0}_{A_3}\mid (i,j)\in\bbZ_2^2\},\\
	\ket{\cB_5}&:=\{\ket{\psi_5{(i,j)}}=\ket{\xi_i}_{A_1}\otimes\ket{2}_{A_2}\otimes\ket{\eta_j}_{A_3}\mid (i,j)\in\bbZ_2^2\},\\
    \ket{\cB_6}&:=\{\ket{\psi_6   {(i,j)}}=\ket{2}_{A_1}\otimes\ket{\eta_i}_{A_2}\otimes\ket{\xi_j}_{A_3}\mid (i,j)\in\bbZ_2^2\},\\
 \ket{\cB_7}&:=\{\ket{0}_{A_1}\otimes\ket{0}_{A_2}\otimes\ket{0}_{A_3}\},\\
	\ket{\cB_8}&:=\{\ket{1}_{A_1}\otimes\ket{1}_{A_2}\otimes\ket{1}_{A_3}\},\\
	\ket{\cB_9}&:=\{\ket{2}_{A_1}\otimes\ket{2}_{A_2}\otimes\ket{2}_{A_3}\},\\
	\end{aligned}
	\end{equation}
	and $\ket{\eta_i}=\ket{0}+(-1)^{i}\ket{1}$, $\ket{\xi_j}=\ket{1}+(-1)^{j}\ket{2}$, for $i,j\in\bbZ_2$ \cite{Agrawal2019Genuinely}. The authors of Ref.~\cite{shi2021hyper} showed that the OPS $\bigcup_{i=1}^7\ket{\cB_i}\cup\ket{\cB_9}$ is strongly nonlocal, which corresponds to the outermost layer of Fig.~\ref{fig:333}.
	They also gave a similar decomposition for the $5$-dimensional hypercube 
	and showed that the OPS from the outermost layer is strongly nonlocal \cite{shi2021hyper}.

    Before giving the rigorous mathematical construction of the generalized case, one can look back to Figure~\ref{fig:333} -- what makes this decomposition special? There are plenty of different ways to decompose a cube, but not all of them can provide an OPS with nonlocal or UPB property. The crucial part of the construction in Figure~\ref{fig:333} is that: we cannot get a subcube other than $\bbZ_3^n$ by combining some of the subcubes in this partition. We will explain it via the high dimensional construction and the generalized theorems. \\

    Next, we generalize the decomposition of Fig.~\ref{fig:333}  to any $N$-dimensional hypercube  $\bbZ_3^N$ and odd $N\geq 3$.
	\vspace{0.4cm}

	
	\noindent{\bf Construction.} Let $\cB_0=\{1\}_{A_1}\times \{1\}_{A_2}\times\ldots\times \{1\}_{A_N}$, where $N$ is odd.  We denote
	\begin{equation}
	\{\eta\}:=\{0,1\}, \quad \{\xi\}:=\{1,2\}.
	\end{equation}
	Next, assume that  $K\subseteq\{A_1,A_2,\ldots,A_N\}$ and $|K|$ is even. In particular, $|K|=0$ implies that $K=\emptyset$.
	For any fixed $K$, we can construct two subcubes
	\begin{equation}
	\begin{aligned}
	\cC_K:=&\cC_K^{(A_1)}\times \cC_K^{(A_2)} \times \cdots\times \cC_K^{(A_N)},\\
	\cD_K:=&\cD_K^{(A_1)}\times \cD_K^{(A_2)} \times \cdots\times \cD_K^{(A_N)},\\
	\end{aligned}
	\end{equation}
	as follows. Let
	\begin{equation}\label{eq:A1}
	\begin{aligned}
	\cC_{K}^{(A_1)}:= \left\{
	\begin{array}{lll}
	\{0\}_{A_1}  &\text{if} \ A_1\notin K,\\
	\{\eta\}_{A_1}  & \text{if} \ A_1 \in K,
	\end{array}
	\right.
	\\
    \cD_{K}^{(A_1)}:= \left\{
	\begin{array}{lll}
	\{2\}_{A_1}  & \text{if} \ A_1\notin K,\\
	\{\xi\}_{A_1}  & \text{if} \  A_1 \in K.
	\end{array}
	\right.
	\end{aligned}
	\end{equation}
	For $2\leq i \leq N$, $\cC_{K}^{(A_i)}$ and $\cD_{K}^{(A_i)}$ are defined recursively as in Table~\ref{Table:def}.
	\begin{table}[h]
		\centering
		\caption{The construction of $\cC_{K}^{(A_i)}$ and $\cD_{K}^{(A_i)}$ for $0\leq 1 \leq N-1$}\label{Table:def}
		\begin{tabular}{c|c|c}
			$\cP\!\!\in\!\!\{\cC,\cD\}$, if\, & $\cP_{K}^{(A_{i})}\!=\!\{0\}_{\!A_i} \!\text{ or }\! \{\eta\}_{\!A_i}$& $\cP_{K}^{(A_i)}\!=\!\{2\}_{\!A_i} \!\text{ or }\! \{\xi\}_{\!A_i}$\\[3pt]
			\hline
			if {$A_{i+1}\notin K$} & $\cP_{K}^{(A_{i+1})}=\{0\}_{A_{i+1}}$ & $\cP_{K}^{(A_{i+1})}=\{2\}_{A_{i+1}}$\\[5pt]
			\hline
			if {$A_{i+1} \in K$} & $\cP_{K}^{(A_{i+1})}=\{\xi\}_{A_{i+1}}$ & $\cP_{K}^{(A_{i+1})}=\{\eta\}_{A_{i+1}}$\\[5pt]
		\end{tabular}
	\end{table}
	
\noindent	We denote
	\begin{equation}\label{eq:indexset}
	\indexset:=\left\{K\subseteq \{A_1,A_2,\ldots,A_N\}\mid |K| \ \text{is even}\right\}.
	\end{equation}
	Then $|\indexset|=2^{N-1}$. We also denote
	\begin{equation}\label{eq:p_k}
	\partition:     =\{\cB_0\}\cup\bigcup_{\begin{subarray}c
		K\in\indexset\\\cP\in\{\cC,\cD\}
		\end{subarray}}\{\cP_K\}.
	\end{equation}
	where $|\partition|=2^N+1$, and $|\cP_K|=2^{|K|}$.

 To show that the set $\partition$ of $2^{N}+1$ subcubes is a decomposition of $\bbZ_3^N$, we need to show some important properties of this set. These properties are also useful for comprehension and further discussion on the nonlocal OPS and UPB. The proofs of the following two lemmas can be found in Appendix~\ref{appendix:lem-cyclic} and \ref{appendix:lem-decomposition}.
	\begin{lemma}\label{Lemma:cyclic}
		The set $\partition$ of $2^{N}+1$ subcubes is invariant under the cyclic permutation of the $N$
		parties.
	\end{lemma}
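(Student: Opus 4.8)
The plan is to attach to each member of $\partition$ a compact ``coordinate code'', to show that a cyclic relabelling of the parties acts on these codes in a transparent way, and to conclude that such a relabelling maps $\partition$ into, hence onto, itself.

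First I would set up a normal form for the subcubes $\cP_K$. From $(\ref{eq:A1})$ and Table~\ref{Table:def} one sees that each coordinate set $\cP_K^{(A_i)}$ is one of $\{0\},\{2\},\{0,1\},\{1,2\}$; since none of these equals $\{1\}$ and none contains both $0$ and $2$, it is determined by its size together with the bit $c_i$ defined by $c_i=0$ if $0\in\cP_K^{(A_i)}$ and $c_i=1$ if $2\in\cP_K^{(A_i)}$. A routine induction on $i$, reading off $(\ref{eq:A1})$ and Table~\ref{Table:def}, then yields: (i) $|\cP_K^{(A_i)}|=2$ exactly when $A_i\in K$; (ii) $c_1=0$ if $\cP=\cC$ and $c_1=1$ if $\cP=\cD$; and (iii) $c_{i+1}=c_i$ if $A_i\notin K$ while $c_{i+1}=1-c_i$ if $A_i\in K$, for $1\le i\le N-1$. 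Summing the flips in (iii), the number $\#\{\,1\le i\le N-1:A_i\in K\,\}$ has the same parity as $c_N-c_1$; and since $|K|$ is even, it also has the same parity as the indicator of ``$A_N\in K$''. Hence the wrap-around relation ``$c_1=c_N$ if $A_N\notin K$, and $c_1=1-c_N$ if $A_N\in K$'' holds as well. Consequently, for every $K\in\indexset$ and every choice of $c_1\in\{0,1\}$, conditions (i)--(iii) single out a unique subcube, which is $\cC_K$ when $c_1=0$ and $\cD_K$ when $c_1=1$; in other words the pair $\bigl(K,(c_i)_{i=1}^{N}\bigr)$ is a faithful label for the $2^{N}$ subcubes $\cP_K$, the only constraint on an admissible label being that $K$ be even and that $(c_i)$ obey (iii).

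Next I would examine the cyclic shift $\sigma$ of order $N$ that relabels party $A_i$ as $A_{i+1}$ (all party indices read mod $N$, so $A_0=A_N$). On subcubes, $\sigma(P_1\times\cdots\times P_N)=Q_1\times\cdots\times Q_N$ with $Q_i=P_{i-1}$; on subsets, $A_i\in\sigma(K)\Leftrightarrow A_{i-1}\in K$, so $|\sigma(K)|=|K|$ and $\sigma(K)\in\indexset$; and plainly $\sigma(\cB_0)=\cB_0$ since every coordinate of $\cB_0$ equals $\{1\}$. Now take a subcube $\cP_K$ with code $\bigl(K,(c_i)\bigr)$. Its image $\sigma(\cP_K)$ has $i$-th coordinate $\cP_K^{(A_{i-1})}$, so the code bits of $\sigma(\cP_K)$ are $\tilde c_i=c_{i-1}$, and its $i$-th coordinate has size $2$ iff $A_{i-1}\in K$ iff $A_i\in\sigma(K)$, which is condition (i) for $\sigma(K)$. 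For condition (iii): for $1\le i\le N-1$ we must check that $\tilde c_{i+1}$ equals $\tilde c_i$ or $1-\tilde c_i$ according as $A_i\notin\sigma(K)$ or $A_i\in\sigma(K)$; substituting $\tilde c_j=c_{j-1}$ and $A_i\in\sigma(K)\Leftrightarrow A_{i-1}\in K$, this is exactly the requirement that $c_i$ equal $c_{i-1}$ or $1-c_{i-1}$ according as $A_{i-1}\notin K$ or $A_{i-1}\in K$ --- which for $2\le i\le N-1$ is (iii) for $\cP_K$, and for $i=1$ is precisely the wrap-around relation proved above. Thus $\sigma(\cP_K)$ is the unique subcube with the admissible code $\bigl(\sigma(K),(\tilde c_i)\bigr)$, namely $\cC_{\sigma(K)}$ if $\tilde c_1=c_N=0$ and $\cD_{\sigma(K)}$ if $\tilde c_1=c_N=1$. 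Either way, $\sigma(\cP_K)\in\partition$.

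Finally, $\sigma$ is a bijection of the finite family of all subcubes of $\bbZ_3^{N}$, and the two previous paragraphs show $\sigma(P)\in\partition$ for every $P\in\partition$; hence $\sigma(\partition)\subseteq\partition$ and, by cardinality, $\sigma(\partition)=\partition$. As the group of cyclic permutations of the $N$ parties is generated by $\sigma$, $\partition$ is invariant under all of them. The delicate point, and the place where the hypothesis that each $K\in\indexset$ has even size is used, is the wrap-around relation: Table~\ref{Table:def} only propagates the coordinate data along the path $A_1\to A_2\to\cdots\to A_N$, and the whole argument rests on the fact that this data closes up consistently around the cycle $A_N\to A_1$.
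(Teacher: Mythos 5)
Your overall strategy is the same as the paper's: both proofs reduce the lemma to showing that the local rule of Eq.~\eqref{eq:A1} and Table~\ref{Table:def}, which a priori only propagates coordinate data along the open path $A_1\to A_2\to\cdots\to A_N$, in fact closes up consistently around the cycle $A_N\to A_1$; once that wrap-around relation is in hand, a cyclic shift of any $\cP_K$ manifestly satisfies the rule at every step and is therefore some $\cC_{K'}$ or $\cD_{K'}$, and the cardinality argument finishes. Where you differ is in how the wrap-around is established: the paper writes out $\cC_K$ and $\cD_K$ explicitly in four cases according to whether $A_1$ and $A_N$ lie in $K$, whereas you count flips of the bit $c_i$ and invoke the parity of $|K|$. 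Your version is shorter and makes the role of the hypothesis that $|K|$ is even explicit rather than leaving it buried in a case check, which is a genuine, if modest, improvement in transparency.

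There is, however, an indexing slip that makes your normal form internally inconsistent as written. Your (i) says $|\cP_K^{(A_i)}|=2$ iff $A_i\in K$, while your (iii) says the flip $c_i\to c_{i+1}$ --- equivalently, whether $\cP_K^{(A_{i+1})}$ is a doubleton --- is governed by $A_i\in K$; these cannot both hold. Comparing with Eq.~\eqref{eq:A1} and the paper's worked examples (Tables~\ref{Table:N3} and~\ref{Table:N5}; e.g.\ $\cC_{\{A_1,A_2\}}=\{\eta\}_{A_1}\times\{\xi\}_{A_2}\times\{2\}_{A_3}$ for $N=3$, where $A_2\in K$ yet $\cC_K^{(A_3)}$ is a singleton), the intended construction is the one matching your (i): the transition into party $A_{i+1}$ is governed by whether $A_{i+1}\in K$. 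So (iii) should read ``$c_{i+1}=1-c_i$ iff $A_{i+1}\in K$'', the flip count along the path is $\#\{2\le j\le N:A_j\in K\}$, and the wrap-around relation is governed by $A_1\in K$ rather than $A_N\in K$. (The row labels of Table~\ref{Table:def} test membership of $A_i$ while the cells assign $\cP_K^{(A_{i+1})}$; you evidently resolved this ambiguity by the literal reading, which contradicts the paper's own examples and the counting $|\cP_K|=2^{|K|}$.) Your parity argument is symmetric under this shift of index, so every subsequent step survives the correction verbatim; but as submitted, (iii) and your wrap-around statement do not describe the family $\partition$ whose invariance you are asserting, so the fix, though mechanical, is necessary.
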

	
	
	By the cyclic property, we always assume that
	\begin{equation}\label{eq:cyclic}
  	A_j=A_{(j\mod{N})},  \quad   \text{if} \quad j\geq N+1.
	\end{equation}

	\begin{lemma}\label{lem:decomposition}
		The set $\partition$ of $2^{N}+1$ subcubes is a decomposition of $\bbZ_3^N$. Moreover, each $\cP_K$ contains exactly one vector in $\{0,2\}^N:=\{0,2\}_{A_1}\times \{0,2\}_{A_2}\times\cdots\times \{0,2\}_{A_N}$.
	\end{lemma}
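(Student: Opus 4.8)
The plan is to count the vectors covered by the subcubes in $\partition$, check that the total is exactly $3^{N}$, and then reduce the statement to pairwise disjointness of the $2^{N}+1$ subcubes; the assertion about $\{0,2\}^{N}$ will fall out of the same analysis. I will use two facts that are immediate from the Construction. (F1) For every $K\in\indexset$ and $\cP\in\{\cC,\cD\}$, the block $\cP_K^{(A_i)}$ is a two-element set ($\{\eta\}$ or $\{\xi\}$) when $A_i\in K$ and a singleton ($\{0\}$ or $\{2\}$) when $A_i\notin K$; hence $|\cP_K|=2^{|K|}$. (F2) Call the \emph{orientation} of a block $+$ if it contains $0$ (that is, if it is $\{0\}$ or $\{\eta\}$) and $-$ if it contains $2$ (if it is $\{2\}$ or $\{\xi\}$). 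Then the orientation of $\cP_K^{(A_1)}$ is $+$ for $\cC_K$ and $-$ for $\cD_K$, and in passing from block $i-1$ to block $i$ the orientation is preserved when the party stepped over lies outside $K$ and is reversed when it lies in $K$; hence the orientation of $\cP_K^{(A_i)}$ equals $\pm(-1)^{c_i(K)}$ for a running prefix-count $c_i(K)$ of elements of $K$, and it forces every vector of $\cP_K$ to avoid $2$ in its $i$-th coordinate if the orientation is $+$, and to avoid $0$ there if it is $-$.

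By (F1), the number of vectors lying in the subcubes of $\partition$ is
\[
\sum_{E\in\partition}|E| \;=\; 1+2\sum_{K\in\indexset}2^{|K|} \;=\; 1+2\sum_{\substack{0\le j\le N\\ j\text{ even}}}\binom{N}{j}2^{j} \;=\; 1+\bigl(3^{N}+(-1)^{N}\bigr) \;=\; 3^{N},
\]
where the last two steps use $(1+2)^{N}=\sum_j\binom{N}{j}2^{j}$, $(1-2)^{N}=\sum_j\binom{N}{j}(-2)^{j}$, and the hypothesis that $N$ is odd. So it suffices to prove the members of $\partition$ are pairwise disjoint. The subcube $\cB_0=\{(1,\dots,1)\}$ meets no $\cP_K$: a vector all of whose coordinates equal $1$ would need every block of $\cP_K$ to have two elements, i.e.\ $K=\{A_1,\dots,A_N\}$, which is impossible because $|K|$ is even and $N$ is odd. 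For each $K$ the subcubes $\cC_K$ and $\cD_K$ are disjoint: since $|K|<N$ there is a party $A_{i^{\ast}}\notin K$, and at that coordinate the two blocks are the singletons $\{0\}$ and $\{2\}$ in some order, by (F1) and (F2).

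The remaining and crucial case is $\cP_K$ against $\cQ_{K'}$ with $K\neq K'$, which I expect to be the main obstacle. Suppose a vector $v$ lies in both; by the first case $v\neq(1,\dots,1)$, so $I:=\{i:v_i\neq1\}$ is nonempty. Each coordinate of $v$ equal to $1$ sits in a two-element block of each subcube, so by (F1) every such index lies in $K\cap K'$; hence $L:=K\triangle K'$ is contained in $I$. For every $i\in I$ the orientation of the $i$-th block of each subcube is pinned by $v_i$ ($+$ if $v_i=0$, $-$ if $v_i=2$), so by (F2) the difference $c_i(K)-c_i(K')$ has one and the same parity for all $i\in I$. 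Telescoping between two consecutive indices of $I$, and using that all coordinates of $v$ strictly between them equal $1$ so the intervening parties are absent from $L$, forces all but at most one element of $I$ to lie outside $L$ (trivially so when $|I|=1$). Hence $|L|\le1$; but $|L|=|K\triangle K'|\equiv|K|+|K'|\equiv0\pmod 2$, so $L=\emptyset$, i.e.\ $K=K'$, and then $\cP=\cQ$ by the previous case. Together with the count, this shows $\partition$ is a decomposition of $\bbZ_3^{N}$. Finally, a vector of $\cP_K$ lies in $\{0,2\}^{N}$ exactly when it avoids $1$ in every coordinate: coordinates with $A_i\notin K$ do so automatically, and for each $A_i\in K$ there is a unique admissible value, the element of $\{0,1\}$ or of $\{1,2\}$ that differs from $1$; so $\cP_K$ meets $\{0,2\}^{N}$ in exactly one vector. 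The delicate point throughout is the orientation bookkeeping in the last case, which is also where the oddness of $N$ enters (via $|K|<N$ and the parity of $|K\triangle K'|$), in addition to its role in the count.
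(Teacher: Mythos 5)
Your proof is correct, and for the crucial step it takes a genuinely different route from the paper's. Both arguments begin with the same cardinality count $1+2\sum_{K\in\indexset}2^{|K|}=3^N$ (the paper evaluates the binomial sum in Eq.~(\ref{combinatorics}); you use the $(1\pm 2)^N$ identity, where oddness of $N$ enters identically). They then diverge: the paper proves \emph{coverage} --- given a vector $v\neq(1,\dots,1)$ it anchors at a coordinate $j_k\neq 1$ and propagates around the cycle of parties via Table~\ref{Table:def}, showing each block of a containing subcube is forced in turn; this exhibits a subcube containing $v$ and simultaneously shows it is unique. You instead prove \emph{pairwise disjointness} directly, via the orientation/parity bookkeeping: a common vector of $\cP_K$ and $\cQ_{K'}$ forces $c_i(K)-c_i(K')$ to have constant parity on its support, the telescoping kills all but one element of $K\triangle K'$, and evenness of $|K|,|K'|$ finishes it; coverage then follows from the count. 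The two halves are logically interchangeable given the count, and I verified your telescoping step (for consecutive support indices $i<j$, the intervening parties lie in $K\cap K'$, so the parity constraint forces $A_j\notin K\triangle K'$, leaving only the minimal support index, whence $|K\triangle K'|\le 1$ and hence $=0$). What each buys: your parity argument is self-contained and makes explicit exactly where the evenness of $|K|$ and the oddness of $N$ are used, whereas the paper's propagation is constructive --- it identifies \emph{which} subcube contains a given vector, a fact reused later (e.g.\ in the proofs of Theorems~\ref{thm:nonlocal} and \ref{thm:upb}). Your direct argument for the ``exactly one vector in $\{0,2\}^N$'' claim matches in substance what the paper extracts from its propagation. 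One housekeeping remark: your facts (F1)--(F2) rest on reading the row condition of Table~\ref{Table:def} as governing the party whose block is being defined (so that $\cP_K^{(A_i)}$ is a doubleton iff $A_i\in K$); that is the reading consistent with Eq.~(\ref{eq:A1}), with $|\cP_K|=2^{|K|}$, and with Tables~\ref{Table:N3} and \ref{Table:N5}, and under it your orientation recursion is exactly the paper's rule.
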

	
	In other words, if we regard the set $\{0,2\}^N$ as the ``vectors at the corners", the lemma above claims that these corner vectors distribute evenly in our construction.
	
	For better understanding, we give the decomposition of $\bbZ_3^3$ in Table~\ref{Table:N3}. We also give the decomposition of $\bbZ_3^5$ in Table~\ref{Table:N5} in Appendix~\ref{appendix:eg}.

	\begin{table}[htbp]
		\centering		
			\caption{The decomposition of $\bbZ_3^3$, where $\{\eta\}=\{0,1\}$, and $\{\xi\}=\{1,2\}$.} \label{Table:N3}
		\begin{tabular}{|c|c|c|}
			\hline
			$K$ & $\cC_K$ & $\cD_K$\\
			\hline
			$\emptyset$ & $\{0\}_{A_1}\times \{0\}_{A_2}\times\{0\}_{A_3}$ & $\{2\}_{A_1}\times \{2\}_{A_2}\times \{2\}_{A_3}$ \\
			\hline
			$\{A_1,A_2\}$ & $\{\eta\}_{A_1}\times \{\xi\}_{A_2}\times\{2\}_{A_3}$ & $\{\xi\}_{A_1}\times \{\eta\}_{A_2}\times \{0\}_{A_3}$ \\
			\hline
			$\{A_1,A_3\}$ & $\{\eta\}_{A_1}\times \{0\}_{A_2}\times\{\xi\}_{A_3}$ & $\{\xi\}_{A_1}\times \{2\}_{A_2}\times \{\eta\}_{A_3}$ \\
			\hline
			$\{A_2,A_3\}$ & $\{0\}_{A_1}\times \{\xi\}_{A_2}\times\{\eta\}_{A_3}$ & $\{2\}_{A_1}\times \{\eta\}_{A_2}\times \{\xi\}_{A_3}$ \\
			\hline
		\end{tabular}
	\end{table}

    \begin{remark}
        We can check the behaviour of Lemma~\ref{Lemma:cyclic} and Lemma~\ref{lem:decomposition} under the case $\bbZ_3^3$. From Table~\ref{Table:N3}, one can see that the construction contains all $\xi\eta0, \eta0\xi, 0\xi\eta$ and   $\eta\xi2,\xi2\eta,2\eta\xi$ in the chart, which satisfies Lemma~\ref{Lemma:cyclic}. And from Figure~\ref{fig:333}, we can see each $\cB_i$ occupies a corner vector in the cube, which explains Lemma~\ref{lem:decomposition}. Roughly speaking, we can think of this abstract construction as a delicate partition, where each subcube in the partition grows from a corner, and the whole pattern is symmetric in a certain sense of rotation.\\
    \end{remark}

    Next we can associate the partition with the tensor product space. For each subcube, we can construct an OPS in $(\bbC^3)^{\otimes N}$. More specifically, for the subcube $\cP_K=\cP_K^{(A_1)}\times \cP_K^{(A_2)}\times\cdots\times \cP_K^{(A_N)} \in\cB$,  we can define
	\begin{equation}\label{eq:ket}
	\ket{\cP_K}_{A_j}:= \left\{
	\begin{array}{lll}
	\ket{0}_{A_j}  &\text{if} \ \cP_K^{(A_j)}=\{0\}_{A_j},\\
	\ket{2}_{A_j}  &\text{if} \ \cP_K^{(A_j)}=\{2\}_{A_j},\\
	\ket{\eta}_{A_j}  &\text{if} \ \cP_K^{(A_{j})}=\{\eta\}_{A_{j}},\\
	\ket{\xi}_{A_{j}}  &\text{if} \ \cP_K^{(A_{j})}=\{\xi\}_{A_{j}},\\
	\end{array}
	\right.
	\end{equation}
	where $\ket{\eta}=\ket{0}\pm\ket{1}$,  $\ket{\xi}=\ket{1}\pm\ket{2}$. Then
	\begin{equation}
	\ket{\cP_K}:=\ket{\cP_K}_{A_1}\otimes \ket{\cP_K}_{A_2}\otimes \cdots\otimes \ket{\cP_K}_{A_N}
	\end{equation}
	is an OPS of size $2^{|K|}$ in $(\bbC^{3})^{\otimes N}$. Further,  we let
	\begin{equation}
	\ket{\cB_0}:=\ket{1}_{A_1}\otimes \ket{1}_{A_2}\otimes\cdots\otimes\ket{1}_{A_N}.
	\end{equation}
    By Lemma~\ref{lem:decomposition}, we can construct an OPB in $(\bbC^3)^{\otimes N}$.

	\begin{lemma}\label{Lem:orthogonal_basis}
		For odd $N$, the set
\begin{equation} 	
	\{\ket{\cB_0}\}\cup\bigcup_{\begin{subarray}c
			K\in\indexset\\\cP\in\{\cC,\cD\}
			\end{subarray}}\\\ket{\cP_K}
\end{equation}
	is an OPB in $(\bbC^3)^{\otimes N}$.
	\end{lemma}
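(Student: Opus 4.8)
The plan is to prove that the claimed set is an OPB by verifying two things: that it has the right size, and that any two distinct states in it are orthogonal; since it consists of $\dim(\bbC^3)^{\otimes N} = 3^N$ mutually orthogonal states (once the size count is done), spanning follows automatically, so orthogonality plus cardinality suffices. For the cardinality, I would invoke Lemma~\ref{lem:decomposition}: the subcubes in $\partition$ form a decomposition of $\bbZ_3^N$, so $\sum_{\cP_K \in \partition}|\cP_K| + |\cB_0| = 3^N$. Each $\ket{\cP_K}$ is an OPS of size $2^{|K|}=|\cP_K|$ by construction (the $\ket{\eta}$ and $\ket{\xi}$ factors each contribute two orthogonal vectors, one for each sign choice), and $\{\ket{\cB_0}\}$ has one element, so the total number of states is exactly $3^N$.

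For orthogonality I would split into cases. First, orthogonality \emph{within} a single $\ket{\cP_K}$: two states in $\ket{\cP_K}$ differ in the sign choice of at least one $\ket{\eta}_{A_j}$ or $\ket{\xi}_{A_j}$ factor, and since $\braket{\eta_0}{\eta_1}=\braket{\xi_0}{\xi_1}=0$ while all other corresponding factors are identical, the inner product vanishes. Second, orthogonality \emph{between} $\ket{\cB_0}$ and any $\ket{\cP_K}$: the vector $(1,1,\ldots,1)$ lies in no subcube $\cP_K$ (as $\cP_K \subseteq \{0,2,\{\eta\},\{\xi\}\}^N$... more precisely, by the decomposition lemma $\cB_0$ is its own subcube disjoint from all $\cP_K$), so in at least one coordinate $A_j$ the factor $\ket{1}_{A_j}$ is orthogonal to $\ket{\cP_K}_{A_j}$ — indeed if $\cP_K^{(A_j)} = \{0\}$ or $\{2\}$ then $\braket{1}{0}=\braket{1}{2}=0$, and the point is that at least one such coordinate exists because otherwise every coordinate of $\cP_K$ would be $\{\eta\}$ or $\{\xi\}$, i.e. $|K|=N$, which is impossible since $N$ is odd and $|K|$ is even. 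Third, and this is the main case, orthogonality between $\ket{\cP_K}$ and $\ket{\cQ_L}$ for $(\cP,K)\neq(\cQ,L)$: here I would appeal to the fact that $\cP_K$ and $\cQ_L$ are disjoint subcubes of $\bbZ_3^N$ (Lemma~\ref{lem:decomposition}), so there is a coordinate $A_j$ with $\cP_K^{(A_j)}\cap \cQ_L^{(A_j)}=\emptyset$; the four possible one-coordinate sets are $\{0\},\{2\},\{0,1\},\{1,2\}$, and one checks directly that whenever two of these are disjoint, the corresponding single-party vectors from \eqref{eq:ket} are orthogonal — the only pairs of these sets that are disjoint are $\{0\}$ vs $\{2\}$, $\{0\}$ vs $\{1,2\}$, and $\{2\}$ vs $\{0,1\}$, and in each case $\braket{0}{2}=0$, $\braket{0}{\xi}=\braket{0}{1}\pm\braket{0}{2}=0$, $\braket{2}{\eta}=\braket{2}{0}\pm\braket{2}{1}=0$ respectively. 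Hence $\braket{\cP_K}{\cQ_L}=0$.

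The main obstacle is the bookkeeping in the disjointness-to-orthogonality step: I must be careful that subcube-disjointness genuinely forces a coordinate where the two one-dimensional index sets are \emph{disjoint as subsets of $\bbZ_3$} (not merely unequal), since e.g. $\{\eta\}=\{0,1\}$ and $\{\xi\}=\{1,2\}$ are distinct but share the element $1$, and a shared element would leave the factors non-orthogonal. This is exactly why Lemma~\ref{lem:decomposition} (which asserts genuine disjointness of the subcubes) is the right hammer: if $\cP_K\cap\cQ_L=\emptyset$ as subsets of $\bbZ_3^N$, then $\prod_j(\cP_K^{(A_j)}\cap\cQ_L^{(A_j)})=\emptyset$, forcing some factor intersection to be empty. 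Once that is pinned down, the remaining verification is the short finite check over the four symbols $\{0\},\{2\},\{\eta\},\{\xi\}$ sketched above, and the argument concludes. Everything else reduces to the dimension count and the elementary inner-product identities $\braket{0}{2}=0$, $\braket{\eta_0}{\eta_1}=0$, $\braket{\xi_0}{\xi_1}=0$.
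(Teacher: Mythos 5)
Your proof is correct and follows exactly the route the paper intends: the paper states this lemma as an immediate consequence of Lemma~\ref{lem:decomposition} without writing out the details, and your argument (cardinality $3^N$ from the decomposition, plus the observation that disjointness of two subcubes forces some coordinate where the index sets are disjoint as subsets of $\bbZ_3$, which in turn forces orthogonality of the corresponding single-party factors) is precisely the verification being left implicit. Your care about the distinction between \emph{unequal} and \emph{disjoint} coordinate sets (e.g.\ $\{\eta\}$ versus $\{\xi\}$ sharing the element $1$) is the right point to worry about, and Lemma~\ref{lem:decomposition} resolves it as you say.
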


Then we are ready to introduce the first main result in this paper:

	
	\begin{theorem}
		\label{thm:nonlocal}
		In $(\bbC^3)^{\otimes N}$ where $N$ is odd, the OPS
		$\ops := \bigcup_{\begin{subarray}c
			K\in\indexset\\\cP\in\{\cC,\cD\}
			\end{subarray}}\ket{\cP_K}$ is strongly nonlocal, where $|\ops|=3^N-1$.
	\end{theorem}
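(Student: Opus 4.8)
The plan is to verify the sufficient condition of Lemma~\ref{lem:cyc}: for every $i$, any orthogonality-preserving local measurement (OPLM) carried out by the joint party $B_i=\{A_1,\ldots,A_N\}\setminus\{A_i\}$ on $\ops$ must be trivial. Since $\ops$ is invariant under cyclic permutations of the parties by Lemma~\ref{Lemma:cyclic}, it suffices to treat $i=1$. So let $E=M^\dagger M$ be a POVM element of an OPLM performed by $B_1=A_2A_3\cdots A_N$; regarding $E$ as a positive semidefinite operator on $(\bbC^3)^{\otimes(N-1)}$ and writing it in the computational basis $\{\ket{\vec s}\}_{\vec s\in\bbZ_3^{N-1}}$ of $A_2\cdots A_N$, the goal is to prove $E=c\,\bbI$ for some constant $c$.

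The basic mechanism is as follows. Each state of $\ops$ is a product $\ket{u}_{A_1}\otimes\ket{v}_{B_1}$ whose $A_1$-factor, by Eq.~\eqref{eq:ket}, lies in the list $\{\ket0,\ket2,\ket0\pm\ket1,\ket1\pm\ket2\}$. If two states $\ket{u}_{A_1}\ket{v}_{B_1}$ and $\ket{u'}_{A_1}\ket{v'}_{B_1}$ of $\ops$ satisfy $\braket{u}{u'}\neq0$, then their orthogonality forces $\braket{v}{v'}=0$ and the OPLM property forces $\braket{u}{u'}\bra vE\ket{v'}=0$, hence $\bra vE\ket{v'}=0$. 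Among the allowed $A_1$-factors the only orthogonal pairs are $\{\ket0,\ket2\}$, $\{\ket0+\ket1,\ket0-\ket1\}$, $\{\ket1+\ket2,\ket1-\ket2\}$, $\{\ket0,\ket1\pm\ket2\}$ and $\{\ket2,\ket0\pm\ket1\}$, so one gets a vanishing relation $\bra vE\ket{v'}=0$ from many pairs of states of $\ops$: from two distinct states of a single subcube with $A_1\notin K$ (equal $A_1$-factors), and more generally whenever the two $A_1$-factors both lie in the ``$\{0,1\}$-sector'', both lie in the ``$\{1,2\}$-sector'', or are $\ket0\pm\ket1$ against $\ket1\pm\ket2$.

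The heart of the proof is a combinatorial step turning these relations into a determination of $E$. From Eq.~\eqref{eq:A1} and Table~\ref{Table:def} I would first record the structure of each subcube with respect to the split $A_1\mid B_1$: if $A_1\notin K$ then all $2^{|K|}$ states of $\ket{\cP_K}$ share the $A_1$-factor $\ket0$ (for $\cP=\cC$) or $\ket2$ (for $\cP=\cD$), and their $B_1$-parts form an orthogonal product basis of the face $\cP_K^{(A_2)}\times\cdots\times\cP_K^{(A_N)}$; if $A_1\in K$ then the states split into two halves of size $2^{|K|-1}$ whose $A_1$-factors are $\ket0\pm\ket1$ (for $\cP=\cC$) or $\ket1\pm\ket2$ (for $\cP=\cD$), each half's $B_1$-parts being an orthogonal product basis of that face. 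Feeding compatible halves of different subcubes, together with the within-face relations, into the mechanism above, and using that the faces tile $\bbZ_3^{N-1}$ and that the corner vectors $\{0,2\}^N$ are evenly distributed among the subcubes (Lemma~\ref{lem:decomposition}), I would show, in order of increasing Hamming distance, that: (a) for every pair $\vec s\neq\vec t$ in $\bbZ_3^{N-1}$ there is a pair, or a short chain of pairs, of states of $\ops$ forcing $(E)_{\vec s\vec t}=0$ once the entries handled earlier are known to vanish, so $E$ is diagonal in the computational basis; and (b) the same states simultaneously force $(E)_{\vec s\vec s}=(E)_{\vec t\vec t}$, and since these relations connect all of $\bbZ_3^{N-1}$ the diagonal of $E$ is constant. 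Hence $E=c\,\bbI$, the OPLM is trivial, and Lemma~\ref{lem:cyc} gives strong nonlocality; finally $|\ops|=3^N-1$ because $\bbZ_3^N$ is, by Lemma~\ref{lem:decomposition}, the disjoint union of $\cB_0$ and the subcubes $\cP_K$, the latter containing $\sum_{K\in\indexset}2\cdot2^{|K|}=3^N-1$ vectors for odd $N$.

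The main obstacle is step (a)--(b): organizing the bookkeeping so that \emph{every} pair $\vec s\neq\vec t$ is reached, while respecting the parity constraint that $|K|$ is even and the cyclic wraparound convention~\eqref{eq:cyclic}, and ordering the pairs so that each chain rests only on entries already annihilated. The cleanest route is likely to kill ``nearest-neighbour'' entries (those with $\vec s,\vec t$ differing in a single coordinate) first, using overlapping $B_1$-faces of suitably chosen subcubes, and then to propagate, generalizing uniformly in odd $N$ the chain arguments used for $N=3$ and $N=5$ in Ref.~\cite{shi2021hyper}.
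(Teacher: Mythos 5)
Your proposal sets up the proof exactly as the paper does---reduce to the single cut $A_1\mid B_1$ via Lemma~\ref{lem:cyc} and the cyclic invariance of Lemma~\ref{Lemma:cyclic}, then use orthogonality preservation on pairs of states whose $A_1$-factors are non-orthogonal to constrain $E$---but it stops short of the part that actually constitutes the proof. Two things are missing. First, the relations you derive have the form $\bra{v}E\ket{v'}=0$ for specific \emph{product} states $\ket{v},\ket{v'}$ (superpositions over a face), and passing from these to statements about individual computational-basis entries $(E)_{\vec s\vec t}$ is not automatic: the paper does this with the Block Zeros Lemma (Lemma~\ref{Lemma:1}, which needs the two faces to be disjoint over $B_1$---true here because distinct subcubes with intersecting $A_1$-components must have disjoint $B_1$-projections, a point you should verify) and the Block Trivial Lemma (Lemma~\ref{Lemma:2}, which is what forces a face's diagonal block to be proportional to the identity rather than merely diagonal in the face's own Fourier-type basis). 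You never supply these tools or a substitute, so your step ``hence $E$ is diagonal in the computational basis'' is unsupported as written. Second, and more seriously, the entire propagation argument---your items (a) and (b)---is left as a plan (``I would show\ldots'', ``the cleanest route is likely\ldots''). That propagation is the heart of the theorem: the paper (i) kills all cross-face off-diagonal blocks at once by observing that for any $u\notin\cP_K^{(A_J)}$ there is another subcube containing $u$ over $B_1$ whose $A_1$-component meets $\cP_K^{(A_1)}$; (ii) anchors a common diagonal value at $\{0\}^{N-1}$ and its neighbour using $\cC_{\emptyset}$ and $\cD_{\{A_1,A_2\}}$; (iii) propagates membership in the ``good'' set $M$ through entire faces via Lemma~\ref{Lemma:2}, reaching the all-ones vector by an explicit chain; and (iv) runs a backward induction exploiting the alternating $\{\eta\}/\{\xi\}$ pattern of Table~\ref{Table:def} and the even distribution of corner vectors (Lemma~\ref{lem:decomposition}) to pull all of $\{0,2\}^{N-1}$, hence all of $\bbZ_3^{N-1}$, into $M$. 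Your proposed organization ``by increasing Hamming distance, nearest-neighbour entries first'' is not how this works and is not obviously realizable, because which pairs can be connected is governed by the face structure of the decomposition, not by Hamming distance.

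A smaller inaccuracy: you assert that a vanishing relation is obtained ``whenever the two $A_1$-factors both lie in the $\{0,1\}$-sector,'' but the pair $\ket{0}+\ket{1}$ versus $\ket{0}-\ket{1}$ lies in that sector and is orthogonal, so it yields no constraint---your own list of orthogonal pairs already records this, so the later sentence contradicts it. The size count $|\ops|=3^N-1$ is correct.
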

	
    The proof of the theorem is in Appendix~\ref{appendix:nonlocal}. It mainly use two important lemmas in Ref.~\cite{Shi2022UPB}, which allow us to discuss the POVM matrix directly from the construction of OPS.  Comparing Theorem~\ref{thm:nonlocal} and Lemma~\ref{Lem:orthogonal_basis}, we can deduce that the OPB in Lemma~\ref{Lem:orthogonal_basis} would be a strongly nonlocal OPS after removing the state $\ket{\cB_0}$. More general result can be found in Sec.~\ref{sec:general}.

	\section{UPB in $(\bbC^3)^{\otimes N}$}
	\label{sec:upb}
	
	In this section, we will introduce another main result in this paper. We construct a UPB in $(\bbC^3)^{\otimes N}$ for odd $N$.  First, we consider the simple case $N=3$. Let
	\begin{equation}\label{eq:333S}
	\ket{S}=\left(\sum_{i\in \bbZ_3}\ket{i}\right)_{A_1}\otimes\left(\sum_{j\in \bbZ_3}\ket{j}\right)_{A_2}\otimes\left(\sum_{k\in \bbZ_3}\ket{k}\right)_{A_3}.
	\end{equation}
	The authors of Ref.~\cite{Agrawal2019Genuinely} showed that  $\bigcup_{i=1}^6(\cB_i\setminus\{\ket{\psi_i{(0,0)}}\})\cup\{\ket{S}\}$ given by Eqs.~\eqref{eq:opb333} and \eqref{eq:333S} is a UPB in $(\bbC^{3})^{\otimes 3}$. Inspired by this idea, we next show a UPB in $(\bbC^3)^{\otimes N}$.

    For any subcube $\cP_K=\cP_K^{(A_1)}\times \cP_K^{(A_2)}\times\cdots\times \cP_K^{(A_N)}$,  we define
	\begin{equation}
	\ket{\cP_K^+}_{A_j}:= \left\{
	\begin{array}{lll}
	\ket{0}_{A_j}  &\text{if} \ \cP^{(A_j)}=\{0\}_{A_j},\\
	\ket{2}_{A_j}  &\text{if} \ \cP^{(A_j)}=\{2\}_{A_j},\\
	(\ket{0}+\ket{1})_{A_j}  &\text{if} \ \cP^{(A_{j})}=\{\eta\}_{A_{j}},\\
	(\ket{1}+\ket{2})_{A_j}  &\text{if} \ \cP^{(A_{j})}=\{\xi\}_{A_{j}}\\
	\end{array}
	\right.
	\end{equation}
	and the state
	\begin{equation}
	\ket{\cP_K^+}=\ket{\cP_K^+}_{A_1} \ket{\cP_K^+}_{A_2} \cdots \ket{\cP_K^+}_{A_N}\in \mathrm{Span}\{\ket{v}\mid v\in \cP_K\}.
	\end{equation}
	Also, we define the ``stopper" state
	\begin{equation}
	\ket{S}\!:=\!\left(\sum_{i_1\in \bbZ_3}\ket{i_1}\!\right)_{A_1}\!\!\!\!\!
	\otimes\!\left(\sum_{i_2\in \bbZ_3}\ket{i_2}\!\right)_{A_2}\!\!\!\!\!
	\otimes\cdots\otimes\left(\sum_{i_N\in \bbZ_3}\ket{i_N}\!\right)_{A_N}\!\!\!\!\!\!\!.
	\end{equation}
	Now we can give a UPB in $(\bbC^3)^{\otimes N}$.
	
	\begin{theorem}
		\label{thm:upb}
		In $(\bbC^3)^{\otimes N}$ where $N$ is odd, the set of orthogonal product states
		\begin{equation}
            \label{eq:upb333}
		\upb:=\left\{\ket{S}\right\}\cup \bigcup_{\begin{subarray}c
			K\in\indexset\\\cP\in\{\cC,\cD\}
			\end{subarray}}\left(\ket{\cP_K}\setminus\{\ket{\cP_K^+}\}\right)
		\end{equation} is a UPB of size $3^N-2^N$.
	\end{theorem}

    The proof is given via contradiction. We would like to explain a sketch proof for the space $\bbZ_3^3$, where the partition for this specific example is given in Figure~\ref{fig:333} and Table~\ref{Table:N3}. The proof is arranged as follows:
    \begin{enumerate}
        \item[(i)] If $\ket{\psi}$ is a product state, its support set $E\subseteq \bbZ_3^3$ must be a subcube.
        \item[(ii)] If $\ket{\psi}$ is orthogonal to $\upb$, then $E$ must be a union of some subcubes in Table~\ref{Table:N3}.
        \item[(iii)] It's easy to check from Figure~\ref{fig:333} that it is not possible to satisfy both (i) and (ii) unless $E=\bbZ_3^3$.
        \item[(iv)] Show that $\ket{\psi}$ is proportional to $\ket{S}$, which means that all coefficients in the natural linear expansion of $\ket{\psi}$ have the same value.
    \end{enumerate}
    The detailed proof can be found in Appendix~\ref{appendix:upb}. We note that the general proof for (iii) and (iv) would be delicate because of the complexity of the partition, and we can see from step (iv) that the stopper state $\ket{S}$ is inevitable.
    
    When $d$ is larger than $3$, the construction needs to be generalized. A layer-by-layer generalization of this UPB can be found in Sec.~\ref{sec:general}.

    \section{The construction for general system $\bbC^{d_1}\otimes \bbC^{d_2}\otimes \dots \otimes \bbC^{d_N}$ for odd $N\geq 3$ }
    \label{sec:general}
	We have showed strongly nonlocal OPSs and UPBs in  ${(\bbC^3)}^{\otimes N}$ for odd $N$. In this section, we will give a general result in $\bbC^{d_1}\otimes \bbC^{d_2}\otimes \dots \otimes \bbC^{d_N}$.  First, we need to give a decomposition of the hypercube $\bbZ_{d_1}\times\bbZ_{d_2}\times\dots \times \bbZ_{d_N}$. Then we show a strongly nonlocal OPS in $\bbC^{d_1}\otimes \bbC^{d_2}\otimes \dots \otimes \bbC^{d_N}$. Finally, we construct a UPB in $\bbC^{d_1}\otimes \bbC^{d_2}\otimes \dots \otimes \bbC^{d_N}$. Without loss of generality, we always assume that $3\leq d_1\leq d_2\leq \cdots\leq d_N$, and $N\geq 3$ is odd. Denote $\prod_{i=1}^{N}\{d_i\}_{A_i}:=\{d_1\}_{A_1}\times \{d_2\}_{A_2}\times \cdots \times \{d_N\}_{A_N}$.
	
    \subsection{The decomposition of a general cube}
    We define $\delta = {\lfloor \frac{d_1-1}{2}\rfloor}$ many layers $\cL_1,\cL_2,\dots,\cL_\delta$, where each layer is defined as the difference set
    \begin{equation}
    \begin{aligned}
    \cL_k:=&\prod_{i=1}^N\left\{k-1,k,k+1,\dots,d_i-k-1,d_i-k\right\}_{A_i} \setminus\prod_{i=1}^N \left\{k,k+1,\dots,d_i-k-1\right\}_{A_i}.
    \end{aligned}
    \end{equation}
     We also denote the central block as $\cB_0$, which is defined as
    $$\cB_0:=\prod_{i=1}^N \left\{\delta,\delta+1,\dots,d_i-\delta-1\right\}_{A_i}.$$
    Then it is easy to see that $\bbZ_d^N = \cB_0 \cup \cL_1\cup\dots\cup\cL_\delta$ forms a decomposition of the hypercube $\bbZ_{d_1}\times\bbZ_{d_2}\times\dots \times \bbZ_{d_N}$.

    Next, we will give a further decomposition of each layer $\cL_k$. The decomposition of $\cL_k$ is similar to the decomposition of $\bbZ_3^{N}$, and see Eq.~\eqref{eq:p_k}.  
   For the set $\bigcup_{\begin{subarray}c
		K\in\indexset\\\cP\in\{\cC,\cD\}
		\end{subarray}}\{\cP_K\}$ given in Eq.~\eqref{eq:p_k}, we can obtain a decomposition $\bigcup_{\begin{subarray}c
		K\in\indexset\\\cP\in\{\cC,\cD\}
		\end{subarray}}\{\cP_{k,K}\}$ of $\cL_k$ by the following replacements:
	\begin{itemize}[leftmargin=15px]
	    \item replace $\{0\}_{A_i}$ with $\{k-1\}_{A_i}$;
	    \item replace $\{2\}_{A_i}$ with $\{d_i-k\}_{A_i}$;
	    \item replace $\{\eta\}_{A_i}$ with $\{\eta_k\}_{A_i} \!=\! \{k-1,k,\dots,d_i-k-1\}_{A_i}$;
	    \item replace $\{\xi\}_{A_i}$ with $\{\xi_k\}_{A_i} \!=\! \{k,k+1,\dots,d_i-k\}_{A_i}$.
	\end{itemize}
	
   Therefore, we have the following lemma.
	\begin{lemma}
	    The set $\{\cB_0\}\cup \bigcup_{\begin{subarray}c
	    		K\in\indexset\\
	    		\cP\in\{\cC,\cD\}\\
	    		1\leq k\leq \delta
	    \end{subarray}}\{\cP_{k,K}\}$  is a decomposition of $\bbZ_{d_1}\times\bbZ_{d_2}\times\dots \times \bbZ_{d_N}$.
	\end{lemma}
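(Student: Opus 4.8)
The plan is to reduce the statement to a single-layer claim and then recognise each layer as the pullback of the $\bbZ_3^N$-decomposition of Lemma~\ref{lem:decomposition} under a natural ``collapse the middle'' map.

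First I would record the block/layer decomposition $\bbZ_{d_1}\times\cdots\times\bbZ_{d_N}=\cB_0\cup\cL_1\cup\cdots\cup\cL_{\lfloor(d_1-1)/2\rfloor}$ stated just above the lemma (a routine nested-boxes check: writing $\mathrm{out}_k:=\prod_i\{k-1,\dots,d_i-k\}_{A_i}$ and $\mathrm{in}_k:=\prod_i\{k,\dots,d_i-k-1\}_{A_i}$, one has $\mathrm{in}_k=\mathrm{out}_{k+1}$, the $\mathrm{in}_k$ are strictly decreasing in $k$, and $\mathrm{in}_m=\cB_0$ for $m=\lfloor(d_1-1)/2\rfloor$, so the shells $\cL_k=\mathrm{out}_k\setminus\mathrm{in}_k$ are pairwise disjoint and exhaust $\bbZ_{d_1}\times\cdots\times\bbZ_{d_N}\setminus\cB_0$; here $d_i\ge d_1\ge 2k+1$ guarantees every $\mathrm{in}_k$ is nonempty and that $\{k-1\}$ lies strictly to its left and $\{d_i-k\}$ strictly to its right, so all the sets involved are well defined). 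Granting this, it suffices to prove that for each fixed $k$ the family $\{\cP_{k,K}\mid K\in\indexset,\ \cP\in\{\cC,\cD\}\}$ is a decomposition of $\cL_k$; unioning over $k$ and adjoining $\cB_0$ then finishes the proof.

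For fixed $k$, I would define for each $i$ the collapse map $\pi_i\colon\{k-1,k,\dots,d_i-k\}\to\bbZ_3$ by $\pi_i(k-1)=0$, $\pi_i(d_i-k)=2$ and $\pi_i(j)=1$ for every interior $j$ with $k\le j\le d_i-k-1$ (well defined since $d_i\ge 2k+1$), and set $\pi=(\pi_1,\dots,\pi_N)\colon\mathrm{out}_k\to\bbZ_3^N$. Then $\pi_i^{-1}(\{0\})=\{k-1\}_{A_i}$, $\pi_i^{-1}(\{2\})=\{d_i-k\}_{A_i}$, $\pi_i^{-1}(\{\eta\})=\pi_i^{-1}(\{0,1\})=\{\eta_k\}_{A_i}$ and $\pi_i^{-1}(\{\xi\})=\pi_i^{-1}(\{1,2\})=\{\xi_k\}_{A_i}$ — which are exactly the coordinatewise replacement rules defining $\cP_{k,K}$ from $\cP_K$. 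Since every coordinate set $\cP_K^{(A_i)}$ occurring in the construction of any subcube $\cP_K$ (see Eq.~\eqref{eq:A1} and Table~\ref{Table:def}) is one of $\{0\},\{2\},\{\eta\},\{\xi\}$ — the singleton $\{1\}$ never appears — it follows that $\cP_{k,K}=\pi^{-1}(\cP_K)$ for all $K$ and $\cP$, and likewise $\pi^{-1}(\cB_0)=\pi^{-1}\big(\{1\}_{A_1}\times\cdots\times\{1\}_{A_N}\big)=\mathrm{in}_k$, so $\cL_k=\mathrm{out}_k\setminus\mathrm{in}_k=\pi^{-1}(\bbZ_3^N)\setminus\pi^{-1}(\cB_0)=\pi^{-1}\big(\bbZ_3^N\setminus\cB_0\big)$.

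Finally I invoke Lemma~\ref{lem:decomposition}, which gives $\bbZ_3^N=\cB_0\ \sqcup\ \bigsqcup_{K\in\indexset,\ \cP\in\{\cC,\cD\}}\cP_K$, hence $\bbZ_3^N\setminus\cB_0=\bigsqcup_{K,\cP}\cP_K$. Because $\pi^{-1}$ commutes with unions and with complements, applying it yields $\cL_k=\pi^{-1}(\bbZ_3^N\setminus\cB_0)=\bigsqcup_{K,\cP}\pi^{-1}(\cP_K)=\bigsqcup_{K,\cP}\cP_{k,K}$, i.e. $\{\cP_{k,K}\}$ decomposes $\cL_k$; combining with the layer decomposition gives $\bbZ_{d_1}\times\cdots\times\bbZ_{d_N}=\{\cB_0\}\ \sqcup\ \bigsqcup_{k}\bigsqcup_{K,\cP}\cP_{k,K}$, as claimed. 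The only genuinely new content is the observation that each layer-$k$ subcube is the $\pi$-preimage of the corresponding $\bbZ_3^N$ subcube; the remaining work — the nested-boxes layer decomposition and the bookkeeping that $\{1\}$ never occurs as a coordinate of a $\cP_K$ — is routine, and the place where I would be most careful is the boundary cases ($d_1$ small, or $k$ maximal) to make sure $\{\eta_k\}_{A_i}$, $\{\xi_k\}_{A_i}$ and $\mathrm{in}_k$ remain well defined and nonempty.
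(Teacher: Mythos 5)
Your proof is correct and follows the same route the paper takes: peel the hypercube into nested layers, then transfer the $\bbZ_3^N$ decomposition of Lemma~\ref{lem:decomposition} to each layer via the stated replacement rules. The paper in fact leaves this lemma unproved (it only asserts that the replacements yield a decomposition of each $\cL_k$), and your collapse-map $\pi$ with $\cP_{k,K}=\pi^{-1}(\cP_K)$ and $\mathrm{in}_k=\pi^{-1}(\cB_0)$ is a clean, rigorous formalization of exactly that assertion, including the needed checks that $d_i\ge 2k+1$ and that $\{1\}$ never occurs as a coordinate of any $\cP_K$.
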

	
	\subsection{Strongly nonlocal OPS in $\bbC^{d_1}\otimes\bbC^{d_2}\otimes \dots\otimes \bbC^{d_N}$ for odd $N\geq 3$}
 For the subcube $\cP_{k,K}=\cP_{k,K}^{(A_1)}\times \cP_{k,K}^{(A_2)}\times\cdots\times \cP_{k,K}^{(A_N)}$, we define
	\begin{equation}
		\ket{\cP_{k,K}}_{A_j}:= \left\{
		\begin{array}{lll}
		\ket{k-1}_{A_j}  &\text{if} \ \cP^{(A_j)}_{k,K}=\{k-1\}_{A_j},\\
		\ket{d_j-k}_{A_j}  &\text{if} \ \cP^{(A_j)}_{k,K}=\{d_j-k\}_{A_j},\\
		\ket{\eta_k}_{A_j}  &\text{if} \ \cP^{(A_{j})}_{k,K}=\{\eta_k\}_{A_{j}},\\
		\ket{\xi_k}_{A_{j}}  &\text{if} \ \cP^{(A_{j})}_{k,K}=\{\xi_k\}_{A_{j}},\\
		\end{array}
		\right.
	\end{equation}
	where $\ket{\eta_k}_{A_j}$, $\ket{\xi_k}_{A_j}$ are the Fourier basis spanned by  $\left\{\ket{m}_{A_j}\right\}_{m=k-1}^{d_j-k-1}$ and $\left\{\ket{m}_{A_j}\right\}_{m=k}^{d_j-k}$, respectively. That is
	\begin{equation}
	\begin{aligned}
 	\ket{\eta_k}_{A_j}&:=\left\{\sum_{m=k-1}^{d_j-k-1}\kappa_j^{(m-k+1)n}\ket{m}_{A_j}\right\}_{n=0}^{d_j-2k},\\
	\ket{\xi_k}_{A_j}&:=\left\{\sum_{m=k}^{d_j-k}\kappa_j^{(m-k)n}\ket{m}_{A_j}\right\}_{n=0}^{d_j-2k},\\
	\end{aligned}
	\end{equation}
	where $\kappa_j=e^{\frac{2\pi\sqrt{-1}}{d_j-2k+1}}$ is a root of unity.
	 We also denote
	\begin{equation*}
	\ket{\cP_{k,K}}=\ket{\cP_{k,K}}_{A_1}\otimes \ket{\cP_{k,K}}_{A_2}\otimes \cdots \otimes\ket{\cP_{k,K}}_{A_N},
	\end{equation*}
	and
	\begin{equation}
	\ket{\cB_0}:=\ket{\beta_1}_{A_1}\otimes\ket{\beta_2}_{A_2}\otimes\cdots\otimes\ket{\beta_N}_{A_N},
	\end{equation}
	 where $\ket{\beta_j}_{A_j}$ is the Fourier basis spanned by  $\left\{\ket{m}_{A_j}\right\}_{m=\delta}^{d_j-\delta-1}$. That is
	\begin{equation}
 	\ket{\beta_j}_{A_j}:=\left\{\sum_{m=\delta}^{d_j-\delta-1} \tau_j^{(m-\delta)n}\ket{m}_{A_j}\right\}_{n=0}^{d_j-2\delta-1},
	 \end{equation}
where $\tau_j=e^{\frac{2\pi\sqrt{-1}}{d_j-2\delta}}$ is another root of unity.
Now, we have the following lemma.

\begin{lemma}
	When $N$ is odd, in $\bbC^{d_1}\otimes\bbC^{d_2}\otimes \cdots\otimes \bbC^{d_N}$, $3\leq d_1\leq d_2 \leq \cdots\leq d_N$, the set  
    $$\{\ket{\cB_0}\}\cup \bigcup_{\begin{subarray}c
		K\in\indexset\\
		\cP\in\{\cC,\cD\}\\
		1\leq k\leq \delta
		\end{subarray}}\ket{\cP_{k,K}}$$
	is an OPB.
\end{lemma}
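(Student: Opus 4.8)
The plan is to deduce the lemma from the decomposition lemma stated just above, in exactly the way Lemma~\ref{Lem:orthogonal_basis} was obtained for $\bbZ_3^N$, now carrying the extra layer index $k$. Write $H:=\bbZ_{d_1}\times\bbZ_{d_2}\times\cdots\times\bbZ_{d_N}$ and let $\cF$ be the collection of subcubes $\{\cB_0\}\cup\bigcup\{\cP_{k,K}\}$ (union over all admissible triples $(k,K,\cP)$). By the preceding lemma $\cF$ is a decomposition of $H$, i.e.\ a partition of $H$ into pairwise disjoint nonempty subcubes, so $\sum_{E\in\cF}|E|=|H|=\prod_{i=1}^N d_i$, which is the dimension of $\bbC^{d_1}\otimes\cdots\otimes\bbC^{d_N}$. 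For a subcube $E=\prod_{i=1}^N E^{(A_i)}$ (including $E=\cB_0$, which is also a Cartesian product of nonempty intervals), the attached family of product states $\ket{E}$ has exactly $\prod_{i=1}^N|E^{(A_i)}|=|E|$ members, because in each coordinate $A_i$ the factor $E^{(A_i)}$ is either a singleton — contributing the single computational vector $\ket m$ — or one of the intervals $\{\eta_k\},\{\xi_k\}$ (or the $\cB_0$-range), in which case the attached single-party vectors are the discrete Fourier vectors on $\lin\{\ket m:m\in E^{(A_i)}\}$.

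First I would check orthogonality inside one subcube. In each coordinate the single-party vectors attached to $E$ are pairwise orthogonal: a singleton gives one vector, and the Fourier vectors on a set of size $s$ are mutually orthogonal since the relevant Gram entries are geometric sums $\sum_{l=0}^{s-1}e^{2\pi\sqrt{-1}\,l(n-n')/s}$ which vanish for $0\le n\ne n'\le s-1$. Tensoring over $i=1,\dots,N$ shows that $\ket{E}$ is a pairwise orthogonal family of $|E|$ nonzero product vectors spanning $\cH_E:=\bigotimes_{i=1}^N\lin\{\ket m:m\in E^{(A_i)}\}$; since $\cH_E$ has dimension $|E|$, the family $\ket{E}$ is an orthogonal basis of $\cH_E$, and in particular its members are distinct.

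Next I would check orthogonality across two distinct subcubes $E,F\in\cF$. Because $\cF$ is a decomposition, $E\cap F=\emptyset$ as subsets of $H$; since $E$ and $F$ are Cartesian products of nonempty sets, there must be a coordinate $A_{i_0}$ with $E^{(A_{i_0})}\cap F^{(A_{i_0})}=\emptyset$ (otherwise a common point of $E$ and $F$ could be assembled coordinatewise). In that coordinate every vector of $\ket E$ lies in $\lin\{\ket m:m\in E^{(A_{i_0})}\}$ and every vector of $\ket F$ in $\lin\{\ket m:m\in F^{(A_{i_0})}\}$, two orthogonal subspaces of $\bbC^{d_{i_0}}$; hence every state in $\ket E$ is orthogonal to every state in $\ket F$, and the two families share no state. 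Combining the last two paragraphs, $\{\ket{\cB_0}\}\cup\bigcup\ket{\cP_{k,K}}$ is a pairwise orthogonal family of nonzero product states of total size $\sum_{E\in\cF}|E|=\prod_{i=1}^N d_i$, hence an orthogonal basis consisting of product states, i.e.\ an OPB.

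I do not expect a real obstacle here: all the combinatorics sits in the decomposition lemma, and the only points that need a line of argument are the orthogonality of the Fourier bases in each coordinate (routine) and the elementary observation that two disjoint subcubes must have a coordinate on which their factor sets are disjoint — this is precisely what turns disjointness of boxes into orthogonality of the associated product states. If one preferred not to use the decomposition lemma as a black box, the same proof can be run layer by layer: first show $\bigcup_{K,\cP}\ket{\cP_{k,K}}$ is an orthogonal basis of the span of the computational vectors indexed by $\cL_k$ (as in Lemma~\ref{Lem:orthogonal_basis}), then note that computational vectors indexed by different layers, and by $\cB_0$, again have disjoint index sets on some coordinate — but that is strictly more bookkeeping than the box-disjointness route.
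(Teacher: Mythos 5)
Your proof is correct and follows the same route the paper intends: the lemma is treated there as an immediate consequence of the decomposition of the hypercube into disjoint subcubes, with orthogonality within a subcube coming from the coordinatewise Fourier bases and orthogonality across subcubes coming from a coordinate on which the factor sets are disjoint. You have simply made explicit the counting and the box-disjointness observation that the paper leaves implicit.
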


We can also give a strongly nonlocal OPS in $\bbC^{d_1}\otimes\bbC^{d_2}\otimes \cdots\otimes \bbC^{d_N}$.
	\begin{theorem}
		\label{thm:nonlocal_general}
		When $N$ is odd, in $\bbC^{d_1}\otimes\bbC^{d_2}\otimes \cdots\otimes \bbC^{d_N}$, the OPS
		$\cO:=\bigcup_{\begin{subarray}c
			K\in\indexset\\\cP\in\{\cC,\cD\}
			\end{subarray}}\ket{\cP_{1,K}}$ is strongly nonlocal, and $$|\cO|=d_1d_2\cdots d_N-(d_1-2)(d_2-2)\cdots(d_N-2).$$
	\end{theorem}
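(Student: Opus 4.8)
The plan is to reduce the general statement to the already-established $d=3$ case by exploiting the fact that the relevant OPS lives entirely in the outermost layer $\cL_1$, which has exactly the same combinatorial structure as the decomposition $\partition$ of $\bbZ_3^N$. First I would observe that $\ket{\cP_{1,K}}$ is obtained from $\ket{\cP_K}$ by the substitution rules preceding the theorem: $\ket{0}_{A_i}\mapsto\ket{0}_{A_i}$, $\ket{2}_{A_i}\mapsto\ket{d_i-1}_{A_i}$, $\ket{\eta}_{A_i}\mapsto\ket{\eta_1}_{A_i}$ (the Fourier basis on $\{\ket{0},\dots,\ket{d_i-2}\}$), and $\ket{\xi}_{A_i}\mapsto\ket{\xi_1}_{A_i}$ (the Fourier basis on $\{\ket{1},\dots,\ket{d_i-1}\}$). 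In particular $|\cO|=\sum_{K\in\indexset}2\cdot\prod_{A_i\in K}(d_i-2)\cdot\prod_{A_i\notin K}1$-type counting; more cleanly, $\cO$ together with $\ket{\cB_0}$ and the inner layers forms an OPB (previous lemma), the inner layers plus $\ket{\cB_0}$ span the subspace supported on $\prod_i\{1,\dots,d_i-2\}_{A_i}$, which has dimension $\prod_i(d_i-2)$, so $|\cO|=\prod_i d_i-\prod_i(d_i-2)$, giving the stated cardinality.

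For strong nonlocality I would invoke Lemma~\ref{lem:cyc}: it suffices to show that for each $i$, any OPLM performed by $B_i$ (all parties but $A_i$) on $\cO$ must be trivial. By the cyclic invariance (Lemma~\ref{Lemma:cyclic}, which carries over to the $\cL_1$-decomposition verbatim since the replacement rules are party-independent) it is enough to treat one fixed $i$, say $i=N$. Following the method of Ref.~\cite{shi2021}, an OPLM by $B_N$ is described by a POVM element $E=M^\dagger M$ acting on $\bigotimes_{j=1}^{N-1}\bbC^{d_j}$; writing $E$ in the block form indexed by the computational basis of the first $N-1$ parties, the orthogonality-preservation condition forces many off-diagonal blocks to vanish and many diagonal entries to be equal, whenever two basis vectors appear inside a common $\ket{\cP_{1,K}}$ with the remaining ($A_N$-)coordinate taking a value for which the $A_N$-components are non-orthogonal. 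The key structural inputs are: (a) every vector of $\cL_1$ lies in some $\cP_{1,K}$ (decomposition), (b) each $\cP_K$ meets $\{0,2\}^N$ in exactly one point — the analogue here is that each $\cP_{1,K}$ contains exactly one "corner" vector of $\prod_i\{0,d_i-1\}_{A_i}$ (Lemma~\ref{lem:decomposition}, which transfers), and (c) the $\{\eta_1\},\{\xi_1\}$ blocks are Fourier bases, so that within a $\ket{\cP_{1,K}}$ the off-diagonal relations chain together all computational basis vectors lying in the projection of that block to the first $N-1$ coordinates. Concatenating these relations along the graph whose vertices are the computational basis vectors of $B_N$ and whose edges are "forced-equal/forced-zero" pairs coming from the various $\cP_{1,K}$, one shows this graph is connected and forces $E$ to be a scalar multiple of the identity on each fibre over $A_N$; a final argument using that the $A_N$-components themselves span $\bbC^{d_N}$ in a way that links different fibres upgrades this to $E\propto \bbI$.

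The main obstacle — and where I would spend most of the effort — is step (c) and the connectivity argument: showing that the "forced-equal" relations genuinely propagate across all of $\bbZ_d^N$'s outermost layer, i.e.\ that the auxiliary graph is connected. This is exactly where the general decomposition of the hypercube is doing real work, and where the $d>3$ case is more delicate than $d=3$, because the $\{\eta_1\},\{\xi_1\}$ blocks now have size $d_i-1$ rather than $2$, so one must check that the Fourier structure (rather than the trivial $\ket{0}\pm\ket{1}$ structure) still yields enough nonzero overlaps to link everything. I expect this to follow by first handling the interior of a single $\cP_{1,K}$ (a product of intervals, where Fourier non-orthogonality is automatic), then using the corner-uniqueness property (b) together with the cyclic symmetry to stitch neighbouring blocks together exactly as in the proof of Theorem~\ref{thm:nonlocal} in Appendix~\ref{appendix:nonlocal}; the bulk of the write-up would be a careful but routine bookkeeping of which blocks of $E$ are tied to which, essentially parallel to the $d=3$ argument with "$2$" replaced by "$d_i-1$" and the $\pm$ basis replaced by the Fourier basis.
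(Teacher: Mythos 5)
Your proposal is correct and follows essentially the same route as the paper, which itself only states that the proof of Theorem~\ref{thm:nonlocal_general} is ``similar to Theorem~\ref{thm:nonlocal}'': you adapt the Appendix~\ref{appendix:nonlocal} argument (Block Zeros/Block Trivial lemmas, corner-uniqueness from Lemma~\ref{lem:decomposition}, cyclic propagation) to the layer $\cL_1$ with $\{d_i-1\}$ replacing $\{2\}$ and Fourier bases replacing the $\pm$ bases. The overlap issue you flag as the main obstacle is already dispatched by the parenthetical remark in Lemma~\ref{Lemma:2}, which guarantees $\langle u_t|\psi_j\rangle\neq 0$ for Fourier bases, so the connectivity argument goes through exactly as in the $d=3$ case.
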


The proof of Theorem~\ref{thm:nonlocal_general} is similar to Theorem~\ref{thm:nonlocal}.		
%

	\subsection{UPB in $\bbC^{d_1}\otimes\bbC^{d_2}\otimes \dots\otimes \bbC^{d_N}$ for odd $N\geq 3$}
We denote
	\begin{equation}
	    \ket{S}\! := \!\left(\sum_{i_1=0}^{d_1-1}\ket{i_1}\!\right)_{\!A_1}\!\!\!\!\!\otimes\left(\sum_{i_2=0}^{d_2-1}\ket{i_2}\!\right)_{\!A_2}\!\!\!\!\!\otimes\cdots\otimes\left(\sum_{i_N=0}^{d_N-1}\ket{i_N}\!\right)_{\!A_N}\!\!\!\!\!.
	\end{equation}
	Also, for each element $\cP_{k,K}$ (and also $\cB_0$) in the decomposition, we denote that
	\begin{equation}
	\ket{\cP_{k,K}^+}_{A_j}:= \left\{
	\begin{array}{lll}
	\ket{k-1}_{A_j}  &\text{if} \ \cP^{(A_j)}=\{k-1\}_{A_j},\\
	\ket{d_j-k}_{A_j}  &\text{if} \ \cP^{(A_j)}=\{d_j-k\}_{A_j},\\
	\sum_{i\in \eta_k}\ket{i}_{A_j}  &\text{if} \ \cP^{(A_{j})}=\{\eta_k\}_{A_{j}},\\
	\sum_{i\in \xi_k}\ket{i}_{A_j}  &\text{if} \ \cP^{(A_{j})}=\{\xi_k\}_{A_{j}}\\
	\end{array}
	\right.
	\end{equation}
	and the product state
	\begin{equation}
	\ket{\cP_{k,K}^+}=\ket{\cP_{k,K}^+}_{A_1}\otimes \ket{\cP_{k,K}^+}_{A_2}\otimes \cdots\otimes \ket{\cP_{k,K}^+}_{A_N}\in \ket{\cP_{k,K}}.
	\end{equation}
	We also denote
	\begin{equation}
	\ket{\cB_0^+}=\left(\sum_{i_1=\delta}^{d_1-\delta-1}\ket{i_1}\right)_{A_1}\otimes\left(\sum_{i_2=\delta}^{d_2-\delta-1}\ket{i_2}\right)_{A_2}\otimes
	\cdots\otimes\left(\sum_{i_N=\delta}^{d_N-\delta-1}\ket{i_N}\right)_{A_N}\in\ket{\cB_0}.
	\end{equation}
	Then we are ready to generalize Theorem~\ref{thm:upb}.

	\begin{theorem}
	\label{thm:upb_general}
		When $N$ is odd, in $\bbC^{d_1}\otimes\bbC^{d_2}\otimes \cdots\otimes \bbC^{d_N}$, $3\leq d_1\leq d_2 \leq \cdots\leq d_N$,  the set of orthogonal product states
		\begin{equation}\label{eq:upb}
		\upb:=\left\{\ket{S}\right\}\cup \left(\ket{\cB_0} \setminus \{\ket{\cB_0^+}\}\right) \cup\bigcup_{\begin{subarray}c
			K\in\indexset\\
			\cP\in\{\cC,\cD\}\\
			1\leq k\leq \delta
			\end{subarray}}\!\!\!\left(\ket{\cP_K}\setminus\{\ket{\cP_K^+}\}\right)
		\end{equation} is a UPB of size $d_1d_2\cdots d_N-2^N\delta$, where $\delta=\fl{d_1-1}{2}$.
	\end{theorem}

        We note that in Theorem~\ref{thm:upb_general}, $\ket{\cB_0} = \{\ket{\cB_0^+}\}$ is a set of a single element when $d_1=d_2=\dots=d_N$ is odd, (which corresponds to the center vector in the cube). And hence there is no such term $\ket{\cB_0}\!\setminus\! \{\ket{\cB_0^+}\}$ in the result in Section~\ref{sec:upb}.
 
        The proof is similar to Theorem~\ref{thm:upb}. A subspace  of $\bbC^{d_1}\otimes\bbC^{d_2}\otimes \cdots\otimes \bbC^{d_N}$ is a  completely entangled subspace if it has no non-zero product state, and the maximum dimension of the completely entangled subspace is $\prod_{i=1}^Nd_i-\sum_{i=1}^Nd_i+N-1$ \cite{parthasarathy2004maximal}. The orthogonal complement of the subspace spanned by a UPB  is a  completely entangled subspace. Then our UPB of size $d_1d_2\cdots d_N-2^N\fl{d_1-1}{2}$ in $\bbC^{d_1}\otimes\bbC^{d_2}\otimes \cdots\otimes \bbC^{d_N}$ ($3\leq d_1\leq d_2 \leq \cdots\leq d_N$, $N$ is odd and $N\geq 3$) can be used to construct a completely entangled subspace with dimension $2^N\fl{d_1-1}{2}$, which is smaller than the maximum dimension.

\section{Applications}\label{sec:app}

In this section, we consider some applications of our results in quantum secret sharing, uncompletable product bases, and PPT entangled states.\\


\noindent\textbf{1. Quantum secret sharing.}

It is known that non-orthogonal states cannot be perfectly distinguished \cite{nielsen2010quantum}. Therefore, to  perfectly distinguish a set of multipartite orthogonal states, it is necessary to use OPLMs. Suppose that information is encoded into a strongly nonlocal orthogonal product set (OPS) in an $N$-partite system and then sent to $N$ players. 
If the $N$ players want to recover the original information,  then they must perform OPLMs. Note that it is impossible to eliminate one or more states from the strongly nonlocal OPS by OPLMs for every bipartition of the subsystems.  If $k$ ($k<N$) players collude with each other, which means they can perform joint OPLMs, then the original information cannot be perfectly  recovered by the $N$ players. To perfectly  recover the original information, the $N$ players can perform a global OPLM.


\noindent\textbf{2. Uncompletable product bases in every bipartition.} 

An OPS in $\bigotimes_{i=1}^N \cH_i$ is an \emph{uncompletable product basis (UCPB)} if it cannot be extended to an OPB in $\bigotimes_{i=1}^N \cH_i$  \cite{DivincenzoDavidP2003}. 
It is known that a UPB must be a UCPB, while the converse is not true in general.
In 2003,  DiVincenzo \emph{et al.} proposed an open question, whether there exists a UPB which is a UCPB for every bipartition of the subsystems. Recently, Shi \emph{et al.} answered this open question by showing such a UPB in arbitrary three-,and four-partite system \cite{shi2022unextendible}. 

Since $\upb$ is invariant under the cyclic permutation of the $N$ parties (Lemma~\ref{Lemma:cyclic}), there is only one case of bipartition when $N=3$, $A_1 | A_2A_3$.  When $N=5$, there are in total three kinds of different bipartitions, $A_1|A_2A_3A_4A_5$, $A_1A_2|A_3A_4A_5$ and $A_1A_3|A_2A_4A_5$. For general $N$, there are $\frac{\sum_{i=1}^{\frac{N-1}{2}}\binom{N}{i}}{N}$ kinds of different bipartitions.
When the number of parties $N$ grows, there would be much more different cases. 

Despite of the complexity of the bipartition types, we believe that our construction $\cU$ gives a UCPB for every bipartition of the subsystems for any odd $N\geq 3$. 
For every bipartition of the subsystems, as $\cU$ is a UPB, it suffices to prove that the orthogonal complement $\cH_{\cU}^{\bot}$ has no basis of orthogonal product states in every biparition. By definition of $\cU$ in Theorem~\ref{eq:upb}, we have $\dim \cH_{\cU}^\bot = 2^N$ and 
   \begin{equation*}
       \cH_{\cU}^\bot = \mathrm{Span}\left(\{\ket{\cB_0^+}\}\cup\{\ket{\cP_K^+}\mid K\in \Lambda, \cP \in\{\cC,\cD\}\}\right) \cap (\mathrm{Span}\ket{S})^\bot.
   \end{equation*}
In other words, $\cH_{\cU}^\bot$ consists of linear combinations of $\ket{\cB_0^+}$ and $\ket{\cP_K^+}$ ($2^N+1$ vectors in total) such that the linear combination is perpendicular to $\ket{S}$. Thus the problem became to prove there is no basis of orthogonal product states in such space in every bipartition, with the help of properties of $\cU$.

We present the conjecture and a possible proof procedure as follows.
\begin{conjecture}
    For any odd $N\geq 3$, the construction $\cU$ defined in Theorem~\ref{thm:upb} is an UCPB for every bipartition of the subsystems.
\end{conjecture}
\noindent {\bf Approach.} For any fixed bipartition of $N$ parties, we propose a possible proof idea with the following steps
\begin{enumerate}[label=(\roman*)]
    \item Suppose $d=3$ and there is a orthogonal product basis $\{\ket{\psi_i}\}_{i=1}^{2^N}$ of $\cH_{\cU}^\bot$ under the bipartition. Then there is at least one $\ket{\psi_j}$ contains $\ket{\cB_0^+}$ in the linear combination of $\ket{\cB_0^+}$ and $\ket{\cP_K^+}$'s with non-zero coefficient. Otherwise, 
    for the complete orthogonal basis $\cU\cup\{\ket{\psi_i}\}_{i=1}^{2^N}$, $\ket{\cB_0^+}$ is perpendicular to every vector of this basis except $\ket{S}$, which implies $\ket{\cB_0^+} \propto \ket{S}$. This is a contradiction.
    \item Consider $\ket{\psi_j}$ containing $\ket{\cB_0^+}$. Similar to the proof of UPB, we might be able to prove the support set of $\ket{\psi_j}$ must be $\bbZ_3^N$.
    \item Because $\ket{\psi_j}$ should be a product state under the certain bipartition, given that the support set of $\ket{\psi_j}$ is $\bbZ_3^N$, we might be able to prove that all the coefficients (under the linear combination in $\ket{\cB_0^+}$ and $\ket{\cP_K^+}$'s) of $\ket{\psi_j}$ are the same. This implies $\ket{\psi_j}\propto \ket{S}$, which contradicts to $\braket{S}{\psi_j}=0$.
    \item Finally, similar to the proof in Section~\ref{sec:general}, we can generalize the result to larger $d$ by recursion.
\end{enumerate}
Although we can apply the approach above to check the result technically with some small party numbers such as $N=5$, it is much challenging to deduce the general result because the large number of bipartition types and different shapes in the cube partition construction. \\

\noindent\textbf{3. PPT entangled states in every biparition.} 

A mixed state is called a PPT state if it is positive under partial transposition (PPT). PPT entangled states corresponds to bound entangled states, where no pure entanglement
can be distilled \cite{horodecki1998mixed}. UPBs can be used to construct PPT entangled states, that is, the normalized projector $\rho_{\cH_{\cU}^{\bot}}$ on the orthogonal complement $\cH_{\cU}^{\bot}$ of the subspace spanned by a UPB $\cU$ is a PPT entangled state  \cite{bennett1999unextendible}. For our normalized UPB $\cU=\{\ket{\psi_i}\}_{i=1}^{m}$ in $\bbC^{d_1}\otimes\bbC^{d_2}\otimes \cdots\otimes \bbC^{d_N}$ with odd $N\geq 3$, where $m=|\cU|=d_1d_2\cdots d_N-2^N\fl{d_1-1}{2}$, the state $\rho_{\cH_{\cU}^{\bot}}$ can be written as
\begin{equation}
    \rho_{\cH_{\cU}^{\bot}}=\frac{1}{2^N\fl{d_1-1}{2}}\left(\bbI-\sum_{i=1}^{m}\ketbra{\psi_i}{\psi_i}\right).
\end{equation}
Next we show that $\rho_{\cH_{\cU}^{\bot}}$ is in fact a PPT entangled state in every bipartition. Let  $S$ be a proper subset of $\{A_1,A_2,\cdots,A_N\}$, then under the partial transpose of the parties $S$,  $\bbI$ is invariant, and $\cU$ is transformed to another OPS $\cU'$. It means that $\rho_{\cH_{\cU'}^{\bot}}$ is a mixed state, and it must be positive. Thus $\rho_{\cH_{\cU}^{\bot}}$ is a PPT state in every bipartition. Furthermore, from the above discussion,  any biproduct state $\ket{\psi}_{S}\otimes \ket{\phi}_{\overline{S}}\in \cH_{\cU}^{\bot}$ cannot span $\cH_{\cU}^{\bot}$ when $N=5$ , where  $\overline{S}=\{A_1,A_2,\cdots,A_N\}\setminus S$, and this implies that   $\rho_{\cH_{\cU}^{\bot}}$ is a PPT entangled state in every biparition. Note that the existence of such states indicate that the set of mixed states which is separable in every bipartition is a proper subset of the set of mixed states which is PPT in every bipartition \cite{ranjan2021state}.

\section{Conclusion and Discussion}\label{sec:con}
In this paper, based on the decomposition of $N$-dimensional hypercubes for odd $N\geq 3$, we constructed strongly nonlocal orthogonal product sets and unextendible product bases in $\bbC^{d_1}\otimes\bbC^{d_2}\otimes \dots\otimes \bbC^{d_N}$ for odd $N\geq 3$,  $d_i\geq 3$, and $1\leq i\leq N$. For applications, we showed that our strongly  nonlocal orthogonal product sets can be used for quantum secret sharing, and our UPBs might be used to construct uncompletable product bases in every biparition,
and PPT entangled states in every biparition.

During the reviewing process of this paper, strongly nonlocal orthogonal product sets in $\bbC^{d_1}\otimes\bbC^{d_2}\otimes \cdots\otimes \bbC^{d_N}$ with even $N$ were found in \cite{zhou2023strong}. Combining our results, the phenomenon of strong quantum nonlocality without entanglement exists in arbitrary $N$-partite system.
There are some interesting problems left. 
The first question is whether one can show that our unextendible product bases are strongly nonlocal for $N\geq 5$? Note that, when $N=3$, strongly nonlocal unextendible product bases has been shown in \cite{Shi2021strongUPB}. What is the minimum size of the strongly nonlocal orthogonal product sets? How to prove our Conjecture 4?

\section*{Acknowledgments}
\label{sec:ack}		
The authors are very grateful to the editor and the
anonymous reviewers for providing many useful suggestions which have greatly improved the presentation
of our paper. 
Y.H. and X.Z. were supported by the Innovation Program for Quantum Science and Technology 2021ZD0302902, the NSFC under Grants No. 12171452 and No. 12231014,  and the National Key Research and Development Programs of China 2023YFA1010200 and 2020YFA0713100.
F.S. acknowledges funding from HKU Seed Fund for Basic
Research for New Staff via Project 2201100596, Guangdong Natural Science Fund via Project 2023A1515012185, National Natural Science Foundation of China (NSFC) via Project No. 12305030 and No. 12347104, Hong Kong Research Grant Council (RGC) via No. 27300823, N\_HKU718/23, and R6010-23, Guangdong Provincial Quantum Science Strategic Initiative GDZX2200001.



	\appendix
	\section{The example of the decomposition of $\bbZ_3^5$}
	\label{appendix:eg}

        In Table~\ref{Table:N5} we explain the decomposition of the hypercube $\bbZ_3^5$ by listing $\cC_K,\cD_K$ with corresponding choices of $K\subseteq\{A_1,A_2,\dots A_N\}$ where $|K|$ is even. The cross product signs are omitted for readability: for example, $\{0\}_{A_1}{}\{0\}_{A_2}{}\{0\}_{A_3}{}\{0\}_{A_4}{}\{0\}_{A_5}$ denotes $\{0\}_{A_1}\times\{0\}_{A_2}\times\{0\}_{A_3}\times\{0\}_{A_4}\times\{0\}_{A_5}$.
        
	\begin{table}[htbp]
		\centering
		\caption{The decomposition of $\bbZ_3^5$, where $\{\eta\}=\{0,1\}$, and $\{\xi\}=\{1,2\}$.} \label{Table:N5}
		\begin{tabular}{|c|c|c|}
			\hline
			$K$ & $\cC_K$ & $\cD_K$\\
			\hline
			$\emptyset$ & $\{0\}_{A_1}{}\{0\}_{A_2}{}\{0\}_{A_3}{}\{0\}_{A_4}{}\{0\}_{A_5}$ & $\{2\}_{A_1}{} \{2\}_{A_2}{}\{2\}_{A_3}{}\{2\}_{A_4}{}\{2\}_{A_5}$ \\
			\hline
			$\{A_1,A_2\}$ & $\{\eta\}_{A_1}{}\{\xi\}_{A_2}{}\{2\}_{A_3}{}\{2\}_{A_4}{}\{2\}_{A_5}$ & $\{\xi\}_{A_1}{} \{\eta\}_{A_2}{}\{0\}_{A_3}{}\{0\}_{A_4}{}\{0\}_{A_5}$ \\
			\hline
			$\{A_1,A_3\}$ & $\{\eta\}_{A_1}{}\{0\}_{A_2}{}\{\xi\}_{A_3}{}\{2\}_{A_4}{}\{2\}_{A_5}$ & $\{\xi\}_{A_1}{} \{2\}_{A_2}{}\{\eta\}_{A_3}{}\{0\}_{A_4}{}\{0\}_{A_5}$ \\
			\hline
			$\{A_1,A_4\}$ & $\{\eta\}_{A_1}{}\{0\}_{A_2}{}\{0\}_{A_3}{}\{\xi\}_{A_4}{}\{2\}_{A_5}$ & $\{\xi\}_{A_1}{} \{2\}_{A_2}{}\{2\}_{A_3}{}\{\eta\}_{A_4}{}\{0\}_{A_5}$ \\
			\hline
			$\{A_1,A_5\}$ & $\{\eta\}_{A_1}{}\{0\}_{A_2}{}\{0\}_{A_3}{}\{0\}_{A_4}{}\{\xi\}_{A_5}$ & $\{\xi\}_{A_1}{} \{2\}_{A_2}{}\{2\}_{A_3}{}\{2\}_{A_4}{}\{\eta\}_{A_5}$ \\
			\hline
			$\{A_2,A_3\}$ & $\{0\}_{A_1}{}\{\xi\}_{A_2}{}\{\eta\}_{A_3}{}\{0\}_{A_4}{}\{0\}_{A_5}$ & $\{2\}_{A_1}{} \{\eta\}_{A_2}{}\{\xi\}_{A_3}{}\{2\}_{A_4}{}\{2\}_{A_5}$ \\
			\hline
			$\{A_2,A_4\}$ & $\{0\}_{A_1}{}\{\xi\}_{A_2}{}\{2\}_{A_3}{}\{\eta\}_{A_4}{}\{0\}_{A_5}$ & $\{2\}_{A_1}{} \{\eta\}_{A_2}{}\{0\}_{A_3}{}\{\xi\}_{A_4}{}\{2\}_{A_5}$ \\
			\hline
			$\{A_2,A_5\}$ & $\{0\}_{A_1}{}\{\xi\}_{A_2}{}\{2\}_{A_3}{}\{2\}_{A_4}{}\{\eta\}_{A_5}$ & $\{2\}_{A_1}{} \{\eta\}_{A_2}{}\{0\}_{A_3}{}\{0\}_{A_4}{}\{\xi\}_{A_5}$ \\
			\hline
			$\{A_3,A_4\}$ & $\{0\}_{A_1}{}\{0\}_{A_2}{}\{\xi\}_{A_3}{}\{\eta\}_{A_4}{}\{0\}_{A_5}$ & $\{2\}_{A_1}{} \{2\}_{A_2}{}\{\eta\}_{A_3}{}\{\xi\}_{A_4}{}\{2\}_{A_5}$ \\
			\hline
			$\{A_3,A_5\}$ & $\{0\}_{A_1}{}\{0\}_{A_2}{}\{\xi\}_{A_3}{}\{2\}_{A_4}{}\{\eta\}_{A_5}$ & $\{2\}_{A_1}{} \{2\}_{A_2}{}\{\eta\}_{A_3}{}\{0\}_{A_4}{}\{\xi\}_{A_5}$ \\
			\hline$
			\{A_4,A_5\}$ & $\{0\}_{A_1}{}\{0\}_{A_2}{}\{0\}_{A_3}{}\{\xi\}_{A_4}{}\{\eta\}_{A_5}$ & $\{2\}_{A_1}{} \{2\}_{A_2}{}\{2\}_{A_3}{}\{\eta\}_{A_4}{}\{\xi\}_{A_5}$ \\
			\hline
			$\{A_1,A_2,A_3,A_4\}$ & $\{\eta\}_{A_1}{}\{\xi\}_{A_2}{}\{\eta\}_{A_3}{}\{\xi\}_{A_4}{}\{2\}_{A_5}$ & $\{\xi\}_{A_1}{} \{\eta\}_{A_2}{}\{\xi\}_{A_3}{}\{\eta\}_{A_4}{}\{0\}_{A_5}$ \\
			\hline
			$\{A_1,A_2,A_3,A_5\}$ & $\{\eta\}_{A_1}{}\{\xi\}_{A_2}{}\{\eta\}_{A_3}{}\{0\}_{A_4}{}\{\xi\}_{A_5}$ & $\{\xi\}_{A_1}{} \{\eta\}_{A_2}{}\{\xi\}_{A_3}{}\{2\}_{A_4}{}\{\eta\}_{A_5}$ \\
			\hline
			$\{A_1,A_2,A_4,A_5\}$ & $\{\eta\}_{A_1}{}\{\xi\}_{A_2}{}\{2\}_{A_3}{}\{\eta\}_{A_4}{}\{\xi\}_{A_5}$ & $\{\xi\}_{A_1}{} \{\eta\}_{A_2}{}\{0\}_{A_3}{}\{\xi\}_{A_4}{}\{\eta\}_{A_5}$ \\
			\hline
			$\{A_1,A_3,A_4,A_5\}$ & $\{\eta\}_{A_1}{}\{0\}_{A_2}{}\{\xi\}_{A_3}{}\{\eta\}_{A_4}{}\{\xi\}_{A_5}$ & $\{\xi\}_{A_1}{} \{2\}_{A_2}{}\{\eta\}_{A_3}{}\{\xi\}_{A_4}{}\{\eta\}_{A_5}$ \\
			\hline
			$\{A_2,A_3,A_4,A_5\}$ & $\{0\}_{A_1}{}\{\xi\}_{A_2}{}\{\eta\}_{A_3}{}\{\xi\}_{A_4}{}\{\eta\}_{A_5}$ & $\{2\}_{A_1}{} \{\eta\}_{A_2}{}\{\xi\}_{A_3}{}\{\eta\}_{A_4}{}\{\xi\}_{A_5}$ \\
			\hline
		\end{tabular}
	\end{table}	
	
	\section{The proof of Lemma~\ref{Lemma:cyclic}}
	\label{appendix:lem-cyclic}
	\begin{proof}
		If $K=\emptyset$, we can obtain that
		\begin{equation}\label{eq:1}
		\begin{aligned}
		\cC_{\emptyset}&=\{0\}_{A_1}\times \{0\}_{A_2}\times\ldots\times \{0\}_{A_N},\\
		\cD_{\emptyset}&=\{2\}_{A_1}\times \{2\}_{A_2}\times\ldots\times \{2\}_{A_N}.
		\end{aligned}
		\end{equation}
		If $K\neq \emptyset$, we fix $K=\{A_{k_1},A_{k_2},\ldots,A_{k_{2i}}\}$, where $k_1<k_2<\ldots<k_{2i}$. By Eq.~\eqref{eq:A1} and Table~\ref{Table:def}, we know that $\cP_K^{(A_{k_j})}\times \cP_K^{(A_{k_{j}+1})}=\{\eta\}_{A_{k_{j}}}\times \{\xi\}_{A_{k_{j}+1}}$ or  $\cP_K^{(A_{k_j})}\times \cP_K^{(A_{k_{j}+1})}=\{\xi\}_{A_{k_j}}\times \{\eta\}_{A_{k_{j}+1}}$ for $1\leq j\leq 2i-1$. Thus, $\{\eta\}$ and $\{\xi\}$ alternatively appear in $\cP_K^{(A_{k_1})}\times \cP_K^{(A_{k_2})}\times\cdots\times \cP_K^{(A_{k_{2i}})}$.

		There are four cases.
		
		1. If $A_1\notin  K$, $A_{N}\notin K$, by Eq.~\eqref{eq:A1} and Table~\ref{Table:def}, we have
		\begin{equation}\label{eq:2}
		\begin{aligned}
		\cC_K&=\{0\}_{A_1}\times\cdots \times \{0\}_{A_{k_1-1}}\times \{\xi\}_{A_{k_1}} \times \cdots \times \{\eta\}_{A_{k_{2i}}} \times \{0\}_{A_{k_{2i}+1}}\times\cdots\times \{0\}_{A_N},\\
		\cD_K&=\{2\}_{A_1}\times\cdots \times \{2\}_{A_{k_1-1}}\times \{\eta\}_{A_{k_1}} \times \cdots \times \{\xi\}_{A_{k_{2i}}} \times \{2\}_{A_{k_{2i}+1}}\times\cdots\times \{2\}_{A_N}.\\
		\end{aligned}
		\end{equation}
		
		2. If $A_1\in  K$, $A_{N}\notin K$, by Eq.~\eqref{eq:A1} and Table~\ref{Table:def}, we have
		\begin{equation}\label{eq:3}
		\begin{aligned}
		\cC_K&=\{\eta\}_{A_1}\times\cdots \times  \{\xi\}_{A_{k_{2i}}} \times \{2\}_{A_{k_{2i}+1}}\times\cdots\times \{2\}_{A_N},\\
		\cD_K&=\{\xi\}_{A_1}\times\cdots \times  \{\eta\}_{A_{k_{2i}}} \times \{0\}_{A_{k_{2i}+1}}\times\cdots\times \{0\}_{A_N}.\\
		\end{aligned}
		\end{equation}
		
		3. If $A_1\notin  K$, $A_{N}\in K$, by Eq.~\eqref{eq:A1} and Table~\ref{Table:def}, we have
		\begin{equation}\label{eq:4}
		\begin{aligned}
		\cC_K&=\{0\}_{A_1}\times\cdots \times \{0\}_{A_{k_1-1}}\times \{\xi\}_{A_{k_1}} \times \cdots \times \{\eta\}_{A_N},\\
		\cD_K&=\{2\}_{A_1}\times\cdots \times \{2\}_{A_{k_1-1}}\times \{\eta\}_{A_{k_1}} \times \cdots \times \{\xi\}_{A_N}.\\
		\end{aligned}
		\end{equation}
		
		\begin{figure}[h]
			\centering
			\begin{tikzpicture}[scale=1]
			\draw (11,0) node (p1) [label=right:$\cP_K^{(A_2)}$] {}
			arc (0:60:1.5cm) node (p2) [label=above:$\cP_K^{(A_1)}$] {}
			arc (60:120:1.5cm) node (p3) [label=above:$\cP_K^{(A_N)}$] {}
			arc (120:180:1.5cm) node (p4) [label=left:$\cP_K^{(A_{N-1})}$] {};
			\draw (p4)
			[dotted, line width=1]arc (180:300:1.5cm) node (p5)  [label=right:$\cP_K^{(A_3)}$] {};
			\draw (p5)
			arc (300:360:1.5cm) node (p6) [label=right:] {};
			
			\filldraw (p1) circle (0.08);
			\filldraw (p2) circle (0.08);
			\filldraw (p3) circle (0.08);
			\filldraw (p4) circle (0.08);
			\filldraw (p5) circle (0.08);
			\end{tikzpicture}
			\caption{The cyclic property of  $\cP_K$  for $\cP\in\{\cC,\cD\}$, where $\cP_K^{(A_{i+1})}$ can be determined by $\cP_K^{(A_i)}$ through Table~\ref{Table:def} for $1 \leq i\leq N-1$, and $\cP_K^{(A_1)}$ can be determined by $\cP_K^{(A_N)}$ through Table~\ref{Table:def}.   } \label{fig:cycle}
		\end{figure}
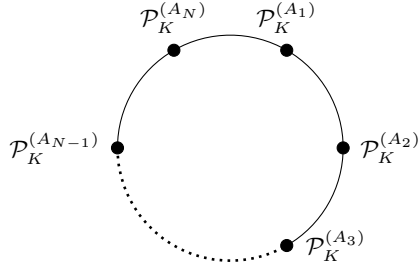

		4. If $A_1\in  K$, $A_{N}\in K$, by Eq.~\eqref{eq:A1}, and Table~\ref{Table:def}, we have
		\begin{equation}\label{eq:5}
		\begin{aligned}
		\cC_K&=\{\eta\}_{A_1}\times\cdots \times  \{\xi\}_{A_N} \\
		\cD_K&=\{\xi\}_{A_1}\times\cdots \times  \{\eta\}_{A_N}.
		\end{aligned}
		\end{equation}
		
		By the four cases and Eq.~\eqref{eq:1}, we know that $\cP_K^{(A_1)}$ can also be determined by $\cP_K^{(A_N)}$ through Table~\ref{Table:def} for $\cP\in\{\cC,\cD\}$. It means that $\cP_K$ forms a cycle for $\cP\in\{\cC,\cD\}$, where $\cP_K^{(A_{i+1})}$ can be determined by $\cP_K^{(A_i)}$ through Table~\ref{Table:def} for $1 \leq i\leq N-1$, and $\cP_K^{(A_1)}$ can be determined by $\cP_K^{(A_N)}$ through Table~\ref{Table:def}.   See Fig.~\ref{fig:cycle} for this phenomenon.
		
	For any $\cP_K = \cP_{K}^{(A_1)}\times \cP_{K}^{(A_2)}\times\dots\times \cP_{K}^{(A_N)}$, it becomes
    $$\cP_{K'}' = \cP_{K}^{(A_i)}\times \cP_{K}^{(A_{i+1})}\times\dots\times \cP_{K}^{(A_N)}\times \cP_{K}^{(A_1)}\times \dots\times \cP_{K}^{(A_{i-1})}$$
    after a cyclic permutation of the parties $\{A_1,A_2,\ldots, A_N\}$. Then to prove $\partition$ is invariant under cyclic permutations, it suffices to prove that $\cP_{K'}'\in \partition$.
    We can define $K'$ as the image of $K$ after the permutation.
    By the cyclic property of  $\cP_K$ in Fig.~\ref{fig:cycle},  the components of $\cP'_{K'}$ satisfies Table \ref{Table:def}. Then  we have $\cP'_{K'}\in \partition$, i.e., $\cP'_{K'}=\cC_{K'}$ if $\cP_{K}^{(A_i)}=\{0\}_{A_i}$ or $\{\eta\}_{A_i}$; $\cP'_{K'}=\cD_{K'}$ if $\cP_{K}^{(A_i)}=\{2\}_{A_i}$ or  $\{\xi\}_{A_i}$. Thus the set $\partition$ of $2^{N}+1$ subcubes  is invariant under the cyclic permutation of the $N$ parties.
	\end{proof}
	
	\section{The proof of Lemma~\ref{lem:decomposition}}
	\label{appendix:lem-decomposition}
	\begin{proof}
		First, we need to show that $\partition$ contains $3^N$ vectors. 	For each subcube $\cP_K = \cP_{K}^{(A_1)}\times \cP_{K}^{(A_2)}\times\dots\times \cP_{K}^{(A_N)}$, it contains $2^{|K|}$ vectors. Then $\partition$  contains
			\begin{equation}
			\label{combinatorics}
		1+\sum_{i=0}^{\frac{N-1}{2}}2\binom{N}{2i}2^{2i}=3^{N}
			\end{equation}
	vectors. Next, for any vector $\{j_1\}_{A_1}\times \{j_2\}_{A_2}\times\cdots\times \{j_N\}_{A_N}\in \bbZ_3^N$, we need to find a $\cP_K\in\partition$ such that $\{j_1\}_{A_1}\times \{j_2\}_{A_2}\times\cdots\times \{j_N\}_{A_N}\in\cP_K$.
		
		There are two possible cases.
		
		1. If $j_k=1$ for all $1\leq k\leq N$, then $\{j_1\}_{A_1}\times \{j_2\}_{A_2}\times \cdots \times \{j_N\}_{A_N}=\cB_0$.
		
		2. If there exists a $j_k\neq 1$ for $1\leq k\leq N$,  then $j_k\in\{0,2\}$. Without loss of generality, we assume that $j_k=0$. Then we can find $\cP_K^{(A_{k})}={\{0\}}_{A_k}$ or ${\{\eta\}}_{A_k}$  for $\cP\in \{\cC,\cD\}$. Next,  if $j_{k+1}=0$, then by Table~\ref{Table:def}, we can determine that $\cP_K^{(A_{k+1})}={\{0\}}_{A_{k+1}}$ and $A_{k+1}\notin K$; if  $j_{k+1}\neq 0$, then by Table~\ref{Table:def}, we can determined that $\cP_K^{(A_{k+1})}=\{\xi\}_{A_{k+1}}$ and $A_{k+1}\in K$. By Fig.~\ref{fig:cycle}, we can repeat this process $N$ times through Table~\ref{Table:def}. Then we can determine $\cP_K^{(A_{i})}$ for $1\leq i\leq N$, and determine whether $A_{i}\in K$. Furthermore, $\cP=\cC$ if  $\cP_K^{(A_{1})}= \{0\}_{A_1}$ or $\{\eta\}_{A_1}$, and $\cP=\cD$ if $\cP_K^{(A_{1})}=\{2\}_{A_1}$ or $\{\xi\}_{A_1}$. In this way, we can find the subcube $\cP_K$ such that  $\{j_1\}_{A_1}\times \{j_2\}_{A_2}\times\cdots\times \{j_N\}_{A_N}\in\cP_K$.
		
		Thus the $2^N+1$ subcubes of  $\partition$ is a decomposition of $\bbZ_3^N$. From the above proof, we know that each $\cP_K$ contains exactly one vector in $\{0,2\}^N$.
	\end{proof}
	
	\section{The proof of Theorem~\ref{thm:nonlocal}}
	\label{appendix:nonlocal}
	Before proving the theorem, we need to introduce some notations and two useful lemmas given in \cite{Shi2022UPB}.  Given an $n$-dimensional Hilbert space, say $\cH_n$, we can take an orthonormal basis $\{|0\rangle,|1\rangle,\cdots, |n-1\rangle\}$. Then consider an operator $E$ on $\cH_n$. It will have a matrix representation under the basis, and we also denote it as $E$. When there is no ambiguity, we will not distinguish the notation of the operator and the matrix.
	
	Next, for the $n\times n$ matrix $E:=\sum_{i=0}^{n-1}\sum_{j=0}^{n-1} a_{i,j}|i\rangle\langle j|$, we introduce the following notation
	\begin{equation*}
	{}_\mathcal{S}E_{\mathcal{T}}:=\sum_{|s\rangle \in \mathcal{S}}\sum_{|t\rangle \in \mathcal{T}}a_{s,t} |s\rangle\langle t|, \quad  \text{ where }\mathcal{S},\mathcal{T}\subseteq \{|0\rangle,|1\rangle,\cdots, |n-1\rangle\}.
	\end{equation*}
	In other words, ${}_\mathcal{S}E_{\mathcal{T}}$ is a sub-matrix of $E$, with  $\mathcal{S},\mathcal{T}$ indicating the chosen row coordinates and column coordinates. In particular, when $\cS=\cT$, we simplify the notation as $E_{\cS}:={}_{\cS}E_{\cS}$. Moreover, we say that an orthogonal set $\{\ket{\psi_i}\}_{i\in\bbZ_s}$ is \emph{spanned} by $\cS\subseteq\{|0\rangle,|1\rangle,\cdots, |n-1\rangle\}$, if any $\ket{\psi_i}$ can be linearly generated by states from $\cS$.
	
	\begin{lemma}[{\bf Block Zeros Lemma} \cite{Shi2022UPB}]
		\label{Lemma:1}
		Let  an  $n\times n$ matrix $E=(a_{i,j})_{i,j\in\bbZ_n}$ be the matrix representation of an operator  $E=M^{\dagger}M$  under the basis  $\cB:=\{\ket{0},\ket{1},\ldots,\ket{n-1}\}$. Given two nonempty disjoint subsets $\cS$ and $\cT$ of $\cB$, assume  that  $\{\ket{\psi_i}\}_{i=0}^{s-1}$, $\{\ket{\phi_j}\}_{j=0}^{t-1}$ are two orthogonal sets  spanned by $\cS$ and $\cT$ respectively, where $s=|\cS|,$ and $t=|\cT|.$  If  $\langle \psi_i| E| \phi_j\rangle =0$
for any $i\in \mathbb{Z}_s,j\in\mathbb{Z}_t$(we call these zero conditions), then   ${}_\mathcal{S}E_{\mathcal{T}}=\mathbf{0}$  and  ${}_\mathcal{T}E_{\mathcal{S}}=\mathbf{0}$.
	\end{lemma}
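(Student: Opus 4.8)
The plan is to reduce the stated zero conditions, which involve only the two special orthogonal sets $\{\ket{\psi_i}\}$ and $\{\ket{\phi_j}\}$, to the vanishing of the individual matrix entries $a_{s,t}=\bra{s}E\ket{t}$ for every $\ket{s}\in\cS$ and $\ket{t}\in\cT$. The bridge is a dimension count. Since $\cB$ is orthonormal, $\cS$ is a set of $s$ orthonormal vectors and so $\lin(\cS)$ is an $s$-dimensional space; the set $\{\ket{\psi_i}\}_{i=0}^{s-1}$ consists of $s$ mutually orthogonal (hence linearly independent) vectors, each lying in $\lin(\cS)$, so it is in fact an orthogonal basis of $\lin(\cS)$. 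Likewise $\{\ket{\phi_j}\}_{j=0}^{t-1}$ is an orthogonal basis of $\lin(\cT)$. In particular every $\ket{s}\in\cS$ lies in the span of the $\ket{\psi_i}$, and every $\ket{t}\in\cT$ lies in the span of the $\ket{\phi_j}$.

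First I would upgrade the zero conditions from the two bases to the whole subspaces. Writing an arbitrary $\ket{u}\in\lin(\cS)$ as $\ket{u}=\sum_i c_i\ket{\psi_i}$ and an arbitrary $\ket{v}\in\lin(\cT)$ as $\ket{v}=\sum_j d_j\ket{\phi_j}$, conjugate-linearity in the first slot and linearity in the second slot of the form $(\ket{u},\ket{v})\mapsto\bra{u}E\ket{v}$ give
\begin{equation}
\bra{u}E\ket{v}=\sum_{i,j}\overline{c_i}\,d_j\,\bra{\psi_i}E\ket{\phi_j}=0,
\end{equation}
where the last equality uses the hypothesis $\bra{\psi_i}E\ket{\phi_j}=0$ for all $i,j$. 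Hence $\bra{u}E\ket{v}=0$ for every $\ket{u}\in\lin(\cS)$ and every $\ket{v}\in\lin(\cT)$.

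Next I would specialize this to the computational basis vectors themselves. Taking $\ket{u}=\ket{s}$ with $\ket{s}\in\cS$ and $\ket{v}=\ket{t}$ with $\ket{t}\in\cT$ — both legitimate since $\cS\subseteq\lin(\cS)$ and $\cT\subseteq\lin(\cT)$ — yields $a_{s,t}=\bra{s}E\ket{t}=0$ for all such $s,t$. This is precisely the assertion ${}_{\cS}E_{\cT}=\mathbf{0}$. Finally, because $E=M^{\dagger}M$ is Hermitian, we have $a_{t,s}=\overline{a_{s,t}}=0$ for the same pairs, which gives ${}_{\cT}E_{\cS}=\mathbf{0}$ and completes the argument.

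I do not expect a genuine obstacle here: the content is elementary linear algebra, and the only point that requires care is the dimension-counting step that promotes each orthogonal set to a full basis of its subspace. That step is exactly what licenses passing from the finitely many hypothesized zero conditions to the vanishing of the form on all of $\lin(\cS)\times\lin(\cT)$, and thence to the vanishing of every entry of the block. Everything else is bookkeeping, together with the single use of Hermiticity of $E=M^{\dagger}M$ to obtain the symmetric conclusion ${}_{\cT}E_{\cS}=\mathbf{0}$.
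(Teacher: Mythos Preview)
Your argument is correct: the dimension count promotes $\{\ket{\psi_i}\}$ and $\{\ket{\phi_j}\}$ to orthogonal bases of $\lin(\cS)$ and $\lin(\cT)$, sesquilinearity extends the vanishing to the full subspaces, specializing to computational basis vectors gives ${}_{\cS}E_{\cT}=\mathbf{0}$, and Hermiticity of $E=M^{\dagger}M$ yields ${}_{\cT}E_{\cS}=\mathbf{0}$.

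There is nothing to compare against in this paper: the Block Zeros Lemma is quoted here from \cite{shi2021} and is not reproved; the authors simply invoke it as a tool in the proof of Theorem~\ref{thm:nonlocal}. Your self-contained proof is exactly the elementary linear-algebra argument one would expect and would serve perfectly well as a replacement for the citation.
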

	\begin{lemma}[{\bf Block Trivial Lemma} \cite{Shi2022UPB}]
		\label{Lemma:2}
		Let  an  $n\times n$ matrix $E=(a_{i,j})_{i,j\in\bbZ_n}$ be the matrix representation of an operator  $E=M^{\dagger}M$  under the basis  $\cB:=\{\ket{0},\ket{1},\ldots,\ket{n-1}\}$. Given a nonempty  subset $\cS:=\{\ket{u_0},\ket{u_1},\ldots,\ket{u_{s-1}}\}$  of $\cB$, let $\{\ket{\psi_j} \}_{j=0}^{s-1}$ be an orthogonal  set spanned by $\cS$.     Assume that $\langle \psi_i|E |\psi_j\rangle=0$ for any $i\neq j\in \mathbb{Z}_s$.  If there exists a state $|u_t\rangle \in\cS$,  such that $ {}_{\{|u_t\rangle\}}E_{\cS\setminus \{|u_t\rangle\}}=\mathbf{0}$   and $\langle u_t|\psi_j\rangle \neq 0$  for any $j\in \mathbb{Z}_s$,   then  $E_{\cS}\propto \mathbb{I}_{\cS}$. (Note that if we consider $\{\ket{\psi_j} \}_{j=0}^{s-1}$ as the Fourier basis, i.e. $\ket{\psi_j}=\sum_{i=0}^{s-1}w_s^{ij}\ket{u_i}$ for  $j\in \mathbb{Z}_s$, then it must have $\langle u_t|\psi_j\rangle \neq 0$  for any $j\in \mathbb{Z}_s$).
	\end{lemma}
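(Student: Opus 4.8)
The plan is to pass from operators to the $s$-dimensional coordinate space attached to $\cS$ and to read the two families of vanishing conditions as the single statement that the restricted matrix $E_\cS$ is diagonalized by an orthogonal basis; the special state $\ket{u_t}$ then pins all eigenvalues to one common value. First I would fix coordinates: since each $\ket{\psi_j}$ lies in $\lin\{\cS\}$, write $\ket{\psi_j}=\sum_{p=0}^{s-1}c_{pj}\ket{u_p}$ and let $\mathbf{c}_j=(c_{0j},\dots,c_{s-1,j})^{T}\in\bbC^s$, collecting these as the columns of an $s\times s$ matrix $C$. Because $\{\ket{\psi_j}\}$ is an orthogonal set inside the $s$-dimensional space $\lin\{\cS\}$, it is a basis of that space, so $C$ is invertible and $\mathbf{c}_i^{\dagger}\mathbf{c}_j=\braket{\psi_i}{\psi_j}=0$ for $i\neq j$. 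Writing $E_\cS$ for the principal submatrix $(\bra{u_p}E\ket{u_q})_{p,q}$, the zero conditions $\bra{\psi_i}E\ket{\psi_j}=0$ for $i\neq j$ become $\mathbf{c}_i^{\dagger}E_\cS\mathbf{c}_j=0$ for $i\neq j$.

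Next I would show each $\mathbf{c}_j$ is an eigenvector of $E_\cS$. Expanding $E_\cS\mathbf{c}_j=\sum_k\alpha_{kj}\mathbf{c}_k$ in the basis $\{\mathbf{c}_k\}$ and pairing on the left with $\mathbf{c}_i^{\dagger}$, orthogonality gives $\mathbf{c}_i^{\dagger}E_\cS\mathbf{c}_j=\alpha_{ij}\,\norm{\mathbf{c}_i}^2$; the vanishing of the left side for $i\neq j$ forces $\alpha_{ij}=0$ there, hence $E_\cS\mathbf{c}_j=\lambda_j\mathbf{c}_j$ for a scalar $\lambda_j$. Thus $\{\mathbf{c}_j\}$ is a complete orthogonal eigenbasis of $E_\cS$, and the lemma is reduced to proving that all $\lambda_j$ coincide. (Note this step uses only orthogonality and the zero conditions, not positivity.)

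Then I would exploit the special state. Since $E=M^{\dagger}M$ is Hermitian, $E_\cS$ is Hermitian, so the row condition ${}_{\{\ket{u_t}\}}E_{\cS\setminus\{\ket{u_t}\}}=\mathbf{0}$ together with its conjugate transpose shows that both the $t$-th row and the $t$-th column of $E_\cS$ vanish off the diagonal; equivalently $E_\cS\mathbf{e}_t=\mu\,\mathbf{e}_t$ with $\mu=(E_\cS)_{tt}$, where $\mathbf{e}_t$ is the standard basis vector for $\ket{u_t}$. Expanding $\mathbf{e}_t=\sum_j\beta_j\mathbf{c}_j$, the coefficient $\beta_j=\mathbf{c}_j^{\dagger}\mathbf{e}_t/\norm{\mathbf{c}_j}^2$ is proportional to $\overline{\braket{u_t}{\psi_j}}$, which is nonzero for every $j$ by hypothesis. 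Comparing $E_\cS\mathbf{e}_t=\sum_j\beta_j\lambda_j\mathbf{c}_j$ with $\mu\mathbf{e}_t=\sum_j\mu\beta_j\mathbf{c}_j$ and using linear independence of $\{\mathbf{c}_j\}$ yields $\lambda_j=\mu$ for all $j$, whence $E_\cS=\mu\,\mathbb{I}_\cS\propto\mathbb{I}_\cS$.

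I expect the main obstacle to be conceptual organization rather than technical difficulty: the crux is recognizing that the two hypotheses say precisely "$E_\cS$ is diagonal in the orthogonal basis $\{\mathbf{c}_j\}$," so each $\mathbf{c}_j$ is an eigenvector, while $\ket{u_t}$ is itself an eigenvector whose overlap with every $\ket{\psi_j}$ is nonzero, forcing a shared eigenvalue. The one place demanding care is deriving the column-vanishing from the stated row-vanishing, which is exactly where Hermiticity of $E=M^{\dagger}M$ is used; I would also make explicit that the invertibility of $C$ (so that $\{\mathbf{c}_j\}$ spans $\bbC^s$) is what legitimizes both the eigen-expansion and the expansion of $\mathbf{e}_t$.
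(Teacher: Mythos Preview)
Your argument is correct. The paper does not supply its own proof of this lemma; it is quoted verbatim from \cite{shi2021} and used as a black box in Appendix~\ref{appendix:nonlocal}, so there is nothing in the present paper to compare against. Your coordinate approach---showing that the orthogonality hypotheses force $E_{\cS}$ to be diagonalized by the $\{\mathbf{c}_j\}$, and then using the eigenvector $\mathbf{e}_t$ with full support in that basis to equate all eigenvalues---is a clean and complete proof; the only place positivity (via Hermiticity) enters is exactly where you flag it, turning the row-vanishing into column-vanishing so that $\mathbf{e}_t$ is an eigenvector.
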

	
	\begin{proof}
		By Lemmas~\ref{lem:cyc} and \ref{Lemma:cyclic}, we only need to show that $A_2,\dots, A_N$ can only perform a trivial OPLM. Assume that $A_2,\dots, A_N$ come together to perform an OPLM $\{E=M^\dagger M\}$, where $E=(a_{i_2i_3\ldots i_{N},j_2j_3\ldots j_{N}})_{\{i_2\}_{A_2}\times\cdots\times \{i_N\}_{A_N},\{j_2\}_{A_2}\times\cdots\times \{j_N\}_{A_N}\in\bbZ_3^{N-1}}$,  then $\{\bbI_{A_1}\otimes M\ket{\psi}: \ket{\psi}\in \cO\}$ should be mutually orthogonal. Note that if $a_{i_2i_3\ldots i_{N},j_2j_3\ldots j_{N}}=0$, then $a_{j_2j_3\ldots j_{N},i_2i_3\ldots i_{N}}=0$.
		
		 For any vector $v=\{j_1\}_{A_1}\times \{j_2\}_{A_2}\times \cdots \times   \{j_N\}_{A_N}\in\bbZ_3^N$, we denote $v^{(A_i)}=\{j_i\}_{A_i}$ for $1\leq i\leq N$.
		  Let $A_J := \{A_2,A_3,\dots,A_N\}$. For any subcube $\cP_K=\cP_K^{(A_1)}\times \cP_K^{(A_2)}\times\cdots\times \cP_K^{(A_N)}$, we denote $\cP_K^{(A_J)}:=\cP_K^{(A_2)}\times\cdots\times \cP_K^{(A_N)}$, and $\ket{\cP_K^{(A_J)}}:=\ket{\cP_K}_{A_2}\otimes\cdots\otimes \ket{\cP_K}_{A_N}$.  For two different subcubes $\cP_K, \cP'_{K'}$, denote $_{\cP_K^{(A_J)}} E_{{\cP'_{K'}}^{(A_J)}}:=(a_{u,v})_{u\in \cP_K^{(A_J)}, v\in {\cP'_{K'}}^{(A_J)}}$.
	 	For any state $\ket{\psi} = \ket{\psi}_{A_1}  \ket{\psi}_{A_2}\dots\ket{\psi}_{A_N} \in\cO$, we also denote
	$\ket{\psi^{(A_J)}}= \ket{\psi}_{A_2}\dots\ket{\psi}_{A_N}$.

		Then for two different $\ket{\psi},\ket{\phi}\in \ops$, we have
		\begin{equation}
		0 =\bra{\psi}\bbI_{A_1}\otimes E\ket{\phi}=  \braket{\psi}{\phi}_{A_1} \bra{\psi^{(A_J)}}E\ket{\phi^{(A_J)}}.
		\end{equation}
		Therefore, if $\braket{\psi}{\phi}_{A_1}\neq 0$, then  $\bra{\psi^{(A_J)}}E\ket{\phi^{(A_J)}}=0$. We need to use these orthogonality relations to show that $E\propto \bbI$. \\
		
\noindent		{\bf Step 1:} Let $\cP_K,\cP'_{K'}$ be any two different subcubes with $\cP_K^{(A_1)}\cap{\cP'_{K'}}^{(A_1)}\neq \emptyset$. Then by Eq.~\eqref{eq:ket}, we can always find a state $\ket{\psi}_{A_1}\in \ket{\cP_K}_{A_1}$ and a state $\ket{\phi}_{A_1}\in \ket{\cP'_{K'}}_{A_1}$, such that $\braket{\psi}{{\phi}}_{A_1}\neq 0$. Applying Lemma~\ref{Lemma:1} to $\ket{\cP_K^{(A_J)}}$ and $\ket{{\cP'_{K'}}^{(A_J)}}$, we obtain
		\begin{equation}\label{eq:zero}
		_{\cP_K^{(A_J)}}E_{{\cP'_{K'}}^{(A_J)}} = \0.
		\end{equation}
For any $\cP_K$, assume $v\in \cP_K^{(A_J)}$. Then for any $u\notin  \cP_K^{(A_J)}\subset\bbZ_3^{N-1}$, we can find a $\cP'_{K'}$ such that $u\in {\cP'_{K'}}^{(A_J)}$ and ${\cP'_{K'}}^{(A_1)}\cap \cP_K^{(A_1)}=\emptyset$. By Eq.~\eqref{eq:zero}, we obtain that $a_{v,u}=0$.

\noindent	{\bf Step 2:} Define two vectors $v_0=\{0\}_{A_2}\times\{0\}_{A_3}\times\cdots\times\{0\}_{A_N}\in\bbZ_3^{N-1}$ and $v_1 = \{1\}_{A_2}\times\{0\}_{A_3}\times\cdots\times\{0\}_{A_N}\in\bbZ_3^{N-1}$. We can choose two subcubes
	\begin{eqnarray}
\cC_{\emptyset} &=& \{0\}_{A_1}\times \{0\}_{A_2}\times \{0\}_{A_3}\times \dots\times\{0\}_{A_N},\\
\cD_{\{A_1,A_2\}}& =& \{\xi\}_{A_1}\times \{\eta\}_{A_2}\times\{0\}_{A_3}\times\{0\}_{A_4}\times \dots\times\{0\}_{A_N},
\end{eqnarray}
such that $v_0=\cC_{\emptyset}^{(A_J)}$, and $v_1\in\cD_{\{A_1,A_2\}}^{(A_J)}$.
By Step 1, we obtain that $a_{v_0,v}=0$ for any $v\neq v_0\in\bbZ_3^{N-1}$, and   $a_{v_1,u}=0$ for any $u\neq v_0,v_1\in\bbZ_3^{N-1}$. Then  applying Lemma~\ref{Lemma:2} to $\ket{\cD_{\{A_1,A_2\}}^{(A_J)}}$, we have $a_{v_0,v_0} = a_{v_1,v_1} = k$.  Define
		\begin{equation}
		M = \left\{v\in\bbZ_3^{N-1}\mid a_{v,v}=k, a_{v,w}=0 \; \text{for any} \; w\neq v\right\}.
		\end{equation}
Then we have shown that $v_0,v_1\in M$. In order to show $E=k\bbI$, we only need to prove that  $M=\bbZ_3^{N-1}$.
		

\noindent	{\bf Step 3:} Assume  $v\in M$, then there must exist a $\cP_K$ such that $v\in \cP_K^{(A_J)}$. For any $u\notin \cP_K^{(A_J)}$, we must have $a_{v',u}=0$ for any $v'\in \cP_K^{(A_J)}$ by Step 1. Applying Lemma~\ref{Lemma:2} to $\ket{\cP_K^{(A_J)}}$, we have
\begin{equation}\label{eq:M}
 \cP_K^{(A_J)}\subseteq M.
\end{equation}
By using this fact, we have
\begin{equation}\label{eq:1111}
\begin{aligned}
v_1\in M &\Longrightarrow \{\xi\}_{A_2}\times\{\eta\}_{A_3}\times\{0\}_{A_4}\times \dots\times\{0\}_{A_N} \subseteq M,\\
&\Longrightarrow \{1\}_{A_2}\times \{1\}_{A_3}\times \{0\}_{A_4}\times \cdots \times \{0\}_{A_N} \in M,\\
&\Longrightarrow \{\eta\}_{A_2}\times\{\xi\}_{A_3}\times\{\eta\}_{A_4}\times\{0\}_{A_5}\times \dots\times\{0\}_{A_N} \subseteq M,\\
&\Longrightarrow \{1\}_{A_2}\times \{1\}_{A_3}\times \{1\}_{A_4}\times \{0\}_{A_5}\cdots \{0\}_{A_N}\in M,\\
&\Longrightarrow \dots,\\
&\Longrightarrow \{1\}_{A_2}\times \{1\}_{A_3}\times \cdots\times \{1\}_{A_N}\in M.
\end{aligned}
\end{equation}

\noindent {\bf Step 4:} Finally, we will prove that $M=\bbZ_3^{N-1}$.  For any $w\in \{0,2\}^{N-1}:=\{0,2\}_{A_2}\times\cdots\times \{0,2\}_{A_N}$.  We define $w_k\in \bbZ_3^{N-1}$ for $1\leq k\leq N$ as follows,
		\begin{equation}
		w_k^{(A_i)} = \left\{\begin{aligned}
		&\{1\}_{A_i} &\ifel i\leq k,\\
		&\{w\}_{A_i} &\ifel i > k.
		\end{aligned}\right.
		\end{equation}
		Note that $w_1=w$.
		Then we can prove $w_k\in M$ for $1\leq k\leq N$ by backward induction. The base case is $w_N = \{1\}_{A_2}\times \{1\}_{A_3}\times \cdots \times\{1\}_{A_N}\in M$ proved in Eq.~\eqref{eq:1111}. Next assume $w_k\in M$ for some $k\geq 2$ and consider $w_{k-1}$. Note that $w_k^{(A_k)}\neq w_{k-1}^{(A_k)}$, and $w_k^{(A_i)}= w_{k-1}^{(A_i)}$ for $i\neq k$. We can find two subcubes $\cC_{K_1}$ and $\cD_{K_2}$, correspondingly $\{0\}_{A_1}\times w_k\in\cC_{K_1}$ and $\{2\}_{A_1}\times w_k\in\cD_{K_2}$. Then by Step 3, as $w_k\in M$, we have $\cC_{K_1}^{(A_J)},\cD_{K_2}^{(A_J)}\subseteq M$.
  
		  By Table~\ref{Table:def}, we know that $\{\xi\}$ and $\{\eta\}$ appears alternatively in $\cC_{K_1}^{(A_2)}\times \cdots \times \cC_{K_1}^{(A_k)}$ and $\cD_{K_2}^{(A_2)}\times \cdots \times \cD_{K_2}^{(A_k)}$. Then either $\cC_{K_1}^{(A_k)}=\{\eta\}_{A_k}$, $\cD_{K_2}^{(A_k)}=\{\xi\}_{A_k}$ or $\cC_{K_1}^{(A_k)}=\{\xi\}_{A_k}$, $\cD_{K_2}^{(A_k)}=\{\eta\}_{A_k}$. Without loss of generality, we can assume that $\cC_{K_1}^{(A_k)}=\{\eta\}_{A_k}$, $\cD_{K_2}^{(A_k)}=\{\xi\}_{A_k}$. If  $w_{k-1}^{(A_k)}=\{0\}_{A_k}$, then $w_{k-1}\in \cC_{K_1}^{(A_J)}$; if $w_{k-1}^{(A_k)}=\{2\}_{A_k}$, then $w_{k-1}\in \cD_{K_2}^{(A_J)}$. In either case, we must have $w_{k-1}\in M$. Thus, by induction, $w_k\in M$ for $1\leq k\leq N$. It means that  $\{0,2\}^{N-1}\subseteq M$.
		
		For any $v\in\bbZ_3^{N-1}$, we can find a $\cP_K$, such that $v\in \cP_K^{(A_J)}$. Moreover, we can always find a $w\in \{0,2\}^{N-1}$, such that $w\in \cP_K^{(A_J)}$ by Lemma~\ref{lem:decomposition}. By Step 3, we know that $v\in M$. Thus $M=\bbZ_3^{N-1}$. This completes the proof.	
	\end{proof}
	
	\section{The proof of Theorem~\ref{thm:upb}}
	\label{appendix:upb}
		\begin{proof}
            {\bf Step 1: Transform the problem into the cube partition framework.}

		By Lemma~\ref{Lem:orthogonal_basis}, we know that $\{\ket{\cB_0}\}\cup\bigcup_{\begin{subarray}c
			K\in\indexset\\\cP\in\{\cC,\cD\}
			\end{subarray}}\ket{\cP_K}$ is a complete OPB. Since $|\indexset|=2^{N-1}$, it implies that there are $2^N$ product states of $\ket{\cP_K^+}$. Then $|\upb|=3^N-1-2^N+1=3^N-2^N$.
		
		Let $\cH$ be the space spanned by the states in $\upb$. For any state $\ket{\psi}\in\cH^{\bot}$, we only need to show that $\ket{\psi}$ must be an entangled state. We will prove it by contradiction. Assume that there exists a product state $\ket{\psi}\neq 0\in\cH^{\bot}$. Then there exists a unique subcube  $E = E_{A_1}\times E_{A_2}\times\dots\times E_{A_N}$ of $\bbZ_3^N$ with $|E_{A_i}|=e_i$ for $1\leq i\leq N$, and nonzero complex numbers $a_{j_i}^{(A_i)}$ for $j_i\in E_{A_i}$, such that
		\begin{equation}
		\label{eq:psi1}
		\ket{\psi}=\left(\sum_{j_1\in E_{A_1}}a^{(A_1)}_{j_1}\ket{j_1}\right)_{A_1}\otimes \left(\sum_{j_2\in E_{A_2}}a^{(A_2)}_{j_2}\ket{j_2}\right)_{A_2}\otimes\cdots\otimes\left(\sum_{j_N\in E_{A_N}}a^{(A_N)}_{j_N}\ket{j_N}\right)_{A_N}.
		\end{equation}
		In other words, $E$ is the support set of $\ket{\psi}$.
		Let $\cH_1$ be the space spanned the states in $\upb\setminus\{\ket{S}\}$, then $\cH_1^{\bot}$ is spanned by the states in  $\{\ket{\cB_0}\}\cup \bigcup_{\begin{subarray}c
			K\in\indexset\\\cP\in\{\cC,\cD\}
			\end{subarray}}\ket{\cP_K^+}$. Since $\cH_1\subseteq \cH$, it implies $\cH^{\bot}\subseteq \cH_1^{\bot}$. It means that $\ket{\psi}\in \cH_1^{\bot}$. Then there exist $b_{\cB_0}\in \bbC$ and $b_{\cP_K}\in\bbC$, such that
		\begin{equation}
		\label{eq:psi2}
		\ket{\psi}= b_{\cB_0}\ket{\cB_0} + \sum_{\begin{subarray}c
			K\in\indexset\\\cP\in\{\cC,\cD\}
			\end{subarray}} b_{\cP_K} \ket{\cP_K^+},
		\end{equation}
		and
		\begin{equation}
		\braket{\psi}{S}=0.
		\end{equation}
		
		By Comparing Eq.~\eqref{eq:psi1} and Eq.~\eqref{eq:psi2}, we know that
		\begin{equation}
		E = \left\{\begin{aligned}
		&\bigcup_{b_{\cP_K}\neq 0} \cP_K & \textrm{ if } b_{\cB_0}=0,\\
		&\cB_0\cup \bigcup_{b_{\cP_K}\neq 0} \cP_K  & \textrm{ if } b_{\cB_0}\neq 0.
		\end{aligned}
		\right.
		\end{equation}

            \noindent{\bf Step 2: $b_{\cP_K}$ is nonzero for at least two entries.}

		Firstly, if $b_{\cP_K}=0$ for all $\cP_K$, then $\ket{\psi}= b_{\cB_0}\ket{\cB_0}$ with $b_{\cB_0}\neq 0$, which contradicts to $\braket{\psi}{S}\neq 0$. Therefore, there is at least one nonzero $b_{\cP_K}$, i.e. at least one $\cP_K\subseteq E$. By Lemma~\ref{lem:decomposition}, $P_K$ contains one vector of $\{0,2\}^N$, so we must have
		\begin{equation}
		\label{eq:E_i}
		E_{A_i}\in \left\{\{0\}_{A_i},\{2\}_{A_i},\{\eta\}_{A_i},\{\xi\}_{A_i}, \{0,2\}_{A_i},\{0,1,2\}_{A_i}\right\}.
		\end{equation}

            Secondly, assume that there is only one nonzero $b_{\cP_K}$, then there are two cases. If $b_{\cB_0}=0$, then $\ket{\psi}=b_{\cP_K}\ket{\cP_K^+}$ for some $\cP_K$, which is not orthogonal to $\ket{S}$. Otherwise, $b_{\cB_0}\neq 0$, then $E=\cP_K\cup \cB_0$ is not a subcube, which is also impossible.
            \ \\

            \noindent{\bf Step 3:  Prove that $E$ contains the vectors in $\{0,2\}^n$}
		
		Now we are proving that $\{0,2\}_{A_i} \subseteq E_{A_i}$ for $1\leq i\leq N$ by contradiction. Assume that there exists $i_1$ such that $\{0,2\}_{A_{i_1}}\not\subseteq E_{A_{i_1}}$.  
        By Step 2 and Lemma~\ref{lem:decomposition},  there are two different vectors $v\neq w\in \{0,2\}^N\in E$. Suppose $v^{(A_{i_2})}\neq w^{(A_{i_2})}$, then $\{0,2\}_{A_{i_2}}\subseteq E_{A_{i_2}}$. (So apparently $i_2\neq i_1$.) By considering the cyclic property Eq.~\eqref{eq:cyclic}, we can always find $1\leq i\leq N$ such that
		\begin{equation}
		\{0,2\}_{A_i}\subseteq E_{A_i}\textrm{\; and \;} \{0,2\}_{A_{i+1}}\not\subseteq E_{A_{i+1}}.
		\end{equation}
		Then by Eq.~\eqref{eq:E_i}, $E_{A_{i+1}}\in \left\{\{0\}_{A_{i+1}},\{2\}_{A_{i+1}},\{\eta\}_{A_{i+1}},\{\xi\}_{A_{i+1}}\right\}$. There are the following two cases.
		\begin{enumerate}[label={(\alph*)}]
			\item Assume $E_{A_{i+1}} = \{0\}_{A_{i+1}}$ ($E_{A_{i+1}}=\{2\}_{A_{i+1}}$ is similar), then for all $\cP_K\subseteq E$, the only possible choice is 
			\begin{equation*}
			\cP_K^{(A_{i+1})}=\{0\}_{A_{i+1}}.
			\end{equation*}
			By Table~\ref{Table:def},  we know that $\cP_K^{(A_i)} \in \left\{\{0\}_{A_i}, \{\eta\}_{A_i}\right\}$. Therefore,
			\begin{equation}
			2\notin \bigcup_{b_{\cP_K}\neq 0}\cP_K^{(A_i)} \;\Rightarrow\; 2 \notin E_{A_i}.
			\end{equation}
			A contradiction.
			
			\item Assume $E_{A_{i+1}}=\{\eta\}_{A_{i+1}}$  ($E_{A_{i+1}}=\{\xi\}_{A_{i+1}}$ is similar).  Note that there exists a $\cP_K\subseteq E$ such that $\cP_K^{(A_{i+1})}=\{\eta\}_{A_{i+1}}$, because all possible values of $\cP_K^{(A_{i+1})}$ are just $\{0\}_{A_{i+1}},\{2\}_{A_{i+1}},\{\eta\}_{A_{i+1}},\{\xi\}_{A_{i+1}}$.
			
			Then by Table~\ref{Table:def} there are four possibilities of $\cP_K^{(A_i)}\times \cP_K^{(A_{i+1})}\times\cP_K^{(A_{i+2})}$, namely
			\begin{equation}
			\begin{aligned}
			\cF_1:=\{2\}_{A_i}\times\{\eta\}_{A_{i+1}}\times\{0\}_{A_{i+2}},\\
			\cF_2:=\{2\}_{A_i}\times\{\eta\}_{A_{i+1}}\times\{\xi\}_{A_{i+2}},\\
			\cF_3:=\{\xi\}_{A_i}\times\{\eta\}_{A_{i+1}}\times\{0\}_{A_{i+2}},\\
			\cF_4:=\{\xi\}_{A_i}\times\{\eta\}_{A_{i+1}}\times\{\xi\}_{A_{i+2}}.
			\end{aligned}
			\end{equation}
			Then we define another subcube $\cP_{K'}'$	by preserving $\cP_{K'}'^{(A_j)}=\cP_{K}^{(A_j)}$ for $j\notin\{i,i+1,i+2\}$, and
   \begin{small}
			\begin{equation}\label{eq:def}
			\cP_{K'}'^{(A_i)}\times \cP_{K'}'^{(A_{i+1})}\times\cP_{K'}'^{(A_{i+2})}\!= \!\left\{
			\begin{aligned}
			&\{\eta\}_{A_i}\times\{\xi\}_{A_{i+1}}\times\{\eta\}_{A_{i+2}} \ifel \cP_K^{(A_i)}\times \cP_K^{(A_{i+1})}\times\cP_K^{(A_{i+2})} =\cF_1,\\
			&\{\eta\}_{A_i}\times\{\xi\}_{A_{i+1}}\times\{2\}_{A_{i+2}}\ifel \cP_K^{(A_i)}\times \cP_K^{(A_{i+1})}\times\cP_K^{(A_{i+2})} =\cF_2,\\
			&\{0\}_{A_i}\times\{\xi\}_{A_{i+1}}\times\{\eta\}_{A_{i+2}} \ifel \cP_K^{(A_i)}\times \cP_K^{(A_{i+1})}\times\cP_K^{(A_{i+2})} =\cF_3,\\
			&\{0\}_{A_i}\times\{\xi\}_{A_{i+1}}\times\{2\}_{A_{i+2}} \ifel \cP_K^{(A_i)}\times \cP_K^{(A_{i+1})}\times\cP_K^{(A_{i+2})} = \cF_4.\\
			\end{aligned}
			\right.
			\end{equation}
    \end{small}
			Then we can check that $|K'|$ is still even and $\cP_{K'}'$ satisfies Table~\ref{Table:def}, so $\cP_{K'}'\in \{\cC_K,\cD_K\}_{K\in\indexset}$. Moreover, we consider an vector $v\in \cP_K$ with $v^{(A_{i})}=\{2\}_{A_i}$, $v^{(A_{i+1})}=\{1\}_{A_{i+1}}$ and $v^{(A_{i+2})}\in\{\{0\}_{A_{i+2}},\{2\}_{A_{i+2}}\}$.
			Next, define $v'$, where $v'^{(A_k)}=v^{(A_k)}$ for $k\neq i$, and $v'^{(A_i)}=\{0\}_{A_i}$.
			Then by definition of $\cP_{K'}'$,  we know that $v'\in \cP_{K'}'$. However, as $\{0,2\}_{A_i}\subseteq E_{A_i}$, we also have $v'\in E$, which means $P_{K'}'\subseteq E$ and $P_{K'}'^{(A_{i+1})}=\{\xi\}_{A_{i+1}} \subseteq E_{A_{i+1}}$. This contradicts to $E_{A_{i+1}}=\{\eta\}_{A_{i+1}}$.
		\end{enumerate}
		
		Thus we proved $\{0,2\}^N\subseteq E$ by contradiction.
            \ \\

            \noindent{\bf Step 4: Show $E$ is the whole cube $\bbZ_3^n$ and all $b_{\cP_K}$ are the same}
    
            By Lemma~\ref{lem:decomposition},
		\begin{equation}
		\{0,2\}^N\subseteq E \quad \Longrightarrow \quad \cP_K\subseteq E \textrm{ for all } \cP_K \quad \Longrightarrow \quad \bbZ_3^N \setminus \cB_0 \subseteq E.
		\end{equation}
		And since $E$ is a subcube, we must have
		$E = \bbZ_3^N.$\\
  
            Then there are nonzero $a^{(A_i)}_{j_i}$ for $j_i\in \bbZ_3$ and $1\leq  i\leq N$, such that
		\begin{equation}\label{eq:coeff}
		\ket{\psi}=\left(\sum_{j_1\in \bbZ_3}a^{(A_1)}_{j_1}\ket{j_1}\right)_{A_1}\otimes\left(\sum_{j_2\in \bbZ_3}a^{(A_2)}_{j_2}\ket{j_2}\right)_{A_2}\otimes\cdots\otimes\left(\sum_{j_N\in \bbZ_3}a^{(A_N)}_{j_N}\ket{j_N}\right)_{A_N}.
		\end{equation}
            To make a contradiction to $\braket{\psi}{S} = 0$, we will show that $\ket{\psi}\propto \ket{S}$. 
  
		To simplify the discussion, for all $1\leq i\leq N$, we define
		\begin{equation}
		\begin{aligned}
		\ket{v_i} &:= \ket{0}_{A_1}\otimes\dots \otimes\ket{0}_{A_{i-1}}\otimes\ket{1}_{A_i}\otimes\ket{0}_{A_{i+1}}\otimes\dots \otimes\ket{0}_{A_N},\\
		\ket{v_i'} &:= \ket{0}_{A_1}\otimes\dots \otimes\ket{0}_{A_{i-1}}\otimes\ket{2}_{A_i}\otimes\ket{0}_{A_{i+1}}\otimes\dots \otimes\ket{0}_{A_N},\\
		\cR_i &:= \{0\}_{A_1}\times\{0\}_{A_2}\times\dots \times\{\xi\}_{A_i}\times\{\eta\}_{A_{i+1}}\times \dots\times\{0\}_{A_N}.
		\end{aligned}
		\end{equation}
		Then $\cR_i \in \{\cC_K,\cD_K\}_{K\in\indexset}$ containing the vectors $v_i,v_i'$. By computing the coefficients of Eq.~\eqref{eq:coeff}, for $1\leq i\leq N$, we have
		\begin{equation}
		\begin{aligned}
		\prod_{j\neq i}a_0^{(A_j)}\cdot a_1^{(A_i)}=\text{Coeff. of $\ket{v_i}$} =b_{\cR_i}= \text{Coeff. of $\ket{v_i'}$} = \prod_{j\neq i}a_0^{(A_j)}\cdot a_2^{(A_i)}
		\end{aligned}
		\end{equation}
		Then it implies that $a_1^{(A_i)}=a_2^{(A_i)}$ for $1\leq i\leq N$. Similarly, by using
		\begin{equation}
		\begin{aligned}
		&\ket{2}_{A_1}\otimes\cdots\otimes \ket{2}_{A_{i-1}}\otimes \ket{0}_{A_i}\otimes\ket{2}_{A_{i+1}}\otimes\cdots \otimes\ket{2}_{A_N},\\
		&\ket{2}_{A_1}\otimes\cdots\otimes \ket{2}_{A_{i-1}}\otimes \ket{1}_{A_i}\otimes\ket{2}_{A_{i+1}}\otimes\cdots\otimes \ket{2}_{A_N},\\
		&\{2\}_{A_1}\times\{2\}_{A_2}\times\dots \times\{\eta\}_{A_i}\times\{\xi\}_{A_{i+1}} \times\dots\times\{2\}_{A_N},
		\end{aligned}
		\end{equation}
		we can also obtain that $a_1^{(A_i)}=a_0^{(A_i)}$ for $1\leq i\leq N$.  Therefore, we conclude that
		\begin{equation}
		\ket{\psi} = \bigotimes_{i=1}^N \left(a_0^{(A_i)}\left(\ket{0}+\ket{1}+\ket{2}\right)_{A_i}\right) = \prod_{i=1}^N a_0^{(A_i)}\cdot \ket{S},
		\end{equation}
		which is not orthogonal to $\ket{S}$. This completes the proof.
	\end{proof}

\bibliographystyle{unsrtnat}
\bibliography{reference}	

\begin{thebibliography}{57}
\providecommand{\natexlab}[1]{#1}
\providecommand{\url}[1]{\texttt{#1}}
\expandafter\ifx\csname urlstyle\endcsname\relax
  \providecommand{\doi}[1]{doi: #1}\else
  \providecommand{\doi}{doi: \begingroup \urlstyle{rm}\Url}\fi

\bibitem[Horodecki et~al.(2009)Horodecki, Horodecki, Horodecki, and
  Horodecki]{horodecki2009quantum}
Ryszard Horodecki, Pawe\l{} Horodecki, Micha\l{} Horodecki, and Karol
  Horodecki.
\newblock Quantum entanglement.
\newblock \emph{Rev. Mod. Phys.}, 81:\penalty0 865--942, Jun 2009.
\newblock \doi{10.1103/RevModPhys.81.865}.

\bibitem[Brunner et~al.(2014)Brunner, Cavalcanti, Pironio, Scarani, and
  Wehner]{brunner2014bell}
Nicolas Brunner, Daniel Cavalcanti, Stefano Pironio, Valerio Scarani, and
  Stephanie Wehner.
\newblock Bell nonlocality.
\newblock \emph{Rev. Mod. Phys.}, 86:\penalty0 419--478, Apr 2014.
\newblock \doi{10.1103/RevModPhys.86.419}.

\bibitem[Bennett et~al.(1999{\natexlab{a}})Bennett, DiVincenzo, Fuchs, Mor,
  Rains, Shor, Smolin, and Wootters]{bennett1999quantum}
Charles~H. Bennett, David~P. DiVincenzo, Christopher~A. Fuchs, Tal Mor, Eric
  Rains, Peter~W. Shor, John~A. Smolin, and William~K. Wootters.
\newblock Quantum nonlocality without entanglement.
\newblock \emph{Phys. Rev. A}, 59:\penalty0 1070--1091, Feb 1999{\natexlab{a}}.
\newblock \doi{10.1103/PhysRevA.59.1070}.

\bibitem[Walgate et~al.(2000)Walgate, Short, Hardy, and
  Vedral]{walgate2000local}
Jonathan Walgate, Anthony~J. Short, Lucien Hardy, and Vlatko Vedral.
\newblock Local distinguishability of multipartite orthogonal quantum states.
\newblock \emph{Phys. Rev. Lett.}, 85:\penalty0 4972--4975, Dec 2000.
\newblock \doi{10.1103/PhysRevLett.85.4972}.

\bibitem[Ghosh et~al.(2001)Ghosh, Kar, Roy, Sen(De), and
  Sen]{ghosh2001distinguishability}
Sibasish Ghosh, Guruprasad Kar, Anirban Roy, Aditi Sen(De), and Ujjwal Sen.
\newblock Distinguishability of bell states.
\newblock \emph{Phys. Rev. Lett.}, 87:\penalty0 277902, Dec 2001.
\newblock \doi{10.1103/PhysRevLett.87.277902}.

\bibitem[Horodecki et~al.(2003)Horodecki, Sen(De), Sen, and
  Horodecki]{Horodecki2003}
Micha\l{} Horodecki, Aditi Sen(De), Ujjwal Sen, and Karol Horodecki.
\newblock Local indistinguishability: More nonlocality with less entanglement.
\newblock \emph{Phys. Rev. Lett.}, 90:\penalty0 047902, Jan 2003.
\newblock \doi{10.1103/PhysRevLett.90.047902}.

\bibitem[Divincenzo et~al.(2003)Divincenzo, Mor, Shor, Smolin, and
  Terhal]{DivincenzoDavidP2003}
David~P Divincenzo, Tal Mor, Peter~W Shor, John~A Smolin, and Barbara~M Terhal.
\newblock Unextendible product bases, uncompletable product bases and bound
  entanglement.
\newblock \emph{Commun. Math. Phys.}, 238\penalty0 (3):\penalty0 379--410,
  2003.
\newblock \doi{10.1007/s00220-003-0877-6}.

\bibitem[De~Rinaldis(2004)]{de2004distinguishability}
S.~De~Rinaldis.
\newblock Distinguishability of complete and unextendible product bases.
\newblock \emph{Phys. Rev. A}, 70:\penalty0 022309, Aug 2004.
\newblock \doi{10.1103/PhysRevA.70.022309}.

\bibitem[Ghosh et~al.(2004)Ghosh, Kar, Roy, and Sarkar]{GhoshSibasish2004}
Sibasish Ghosh, Guruprasad Kar, Anirban Roy, and Debasis Sarkar.
\newblock Distinguishability of maximally entangled states.
\newblock \emph{Phys. Rev. A}, 70:\penalty0 022304, Aug 2004.
\newblock \doi{10.1103/PhysRevA.70.022304}.

\bibitem[Fan(2004)]{FanHeng2004}
Heng Fan.
\newblock Distinguishability and indistinguishability by local operations and
  classical communication.
\newblock \emph{Phys. Rev. Lett.}, 92:\penalty0 177905, Apr 2004.
\newblock \doi{10.1103/PhysRevLett.92.177905}.

\bibitem[Niset and Cerf(2006)]{Niset2006}
J.~Niset and N.~J. Cerf.
\newblock Multipartite nonlocality without entanglement in many dimensions.
\newblock \emph{Phys. Rev. A}, 74:\penalty0 052103, Nov 2006.
\newblock \doi{10.1103/PhysRevA.74.052103}.

\bibitem[Fan(2007)]{fan2007distinguishing}
Heng Fan.
\newblock Distinguishing bipartite states by local operations and classical
  communication.
\newblock \emph{Phys. Rev. A}, 75:\penalty0 014305, Jan 2007.
\newblock \doi{10.1103/PhysRevA.75.014305}.

\bibitem[Feng and Shi(2009)]{FengY2009}
Yuan Feng and Yaoyun Shi.
\newblock Characterizing locally indistinguishable orthogonal product states.
\newblock \emph{IEEE Trans. Inf. Theory}, 55\penalty0 (6):\penalty0 2799--2806,
  2009.
\newblock \doi{10.1109/TIT.2009.2018330}.

\bibitem[Yu et~al.(2012)Yu, Duan, and Ying]{YuNengkun2012}
Nengkun Yu, Runyao Duan, and Mingsheng Ying.
\newblock Four locally indistinguishable ququad-ququad orthogonal maximally
  entangled states.
\newblock \emph{Phys. Rev. Lett.}, 109:\penalty0 020506, Jul 2012.
\newblock \doi{10.1103/PhysRevLett.109.020506}.

\bibitem[Bandyopadhyay(2012)]{Bandyopadhyay2012}
Somshubhro Bandyopadhyay.
\newblock Entanglement, mixedness, and perfect local discrimination of
  orthogonal quantum states.
\newblock \emph{Phys. Rev. A}, 85:\penalty0 042319, Apr 2012.
\newblock \doi{10.1103/PhysRevA.85.042319}.

\bibitem[Cosentino(2013)]{cosetino2013}
Alessandro Cosentino.
\newblock Positive-partial-transpose-indistinguishable states via semidefinite
  programming.
\newblock \emph{Phys. Rev. A}, 87:\penalty0 012321, Jan 2013.
\newblock \doi{10.1103/PhysRevA.87.012321}.

\bibitem[Wang et~al.(2015)Wang, Li, Zheng, and Fei]{Zhang2014}
Yan-Ling Wang, Mao-Sheng Li, Zhu-Jun Zheng, and Shao-Ming Fei.
\newblock Nonlocality of orthogonal product-basis quantum states.
\newblock \emph{Phys. Rev. A}, 92:\penalty0 032313, Sep 2015.
\newblock \doi{10.1103/PhysRevA.92.032313}.

\bibitem[Li et~al.(2015)Li, Wang, Fei, and Zheng]{li2015d}
Mao-Sheng Li, Yan-Ling Wang, Shao-Ming Fei, and Zhu-Jun Zheng.
\newblock $d$ locally indistinguishable maximally entangled states in
  ${\bbC}^{d}\ensuremath{\bigotimes}{\bbC}^{d}$.
\newblock \emph{Phys. Rev. A}, 91:\penalty0 042318, Apr 2015.
\newblock \doi{10.1103/PhysRevA.91.042318}.

\bibitem[Zhang et~al.(2016)Zhang, Gao, Cao, Qin, and Wen]{zhang2016local}
Zhi-Chao Zhang, Fei Gao, Ya~Cao, Su-Juan Qin, and Qiao-Yan Wen.
\newblock Local indistinguishability of orthogonal product states.
\newblock \emph{Phys. Rev. A}, 93:\penalty0 012314, Jan 2016.
\newblock \doi{10.1103/PhysRevA.93.012314}.

\bibitem[Xu et~al.(2017)Xu, Wen, Gao, Qin, and Zuo]{xuguangbao2017}
Guangbao Xu, Qiaoyan Wen, Fei Gao, Sujuan Qin, and Huijuan Zuo.
\newblock Local indistinguishability of multipartite orthogonal product bases.
\newblock \emph{Quantum Inf. Process.}, 16\penalty0 (11):\penalty0 276, 2017.
\newblock \doi{10.1007/s11128-017-1725-5}.

\bibitem[Wang et~al.(2017)Wang, Li, Zheng, and Fei]{wangyanling2017}
Yan-Ling Wang, Mao-Sheng Li, Zhu-Jun Zheng, and Shao-Ming Fei.
\newblock The local indistinguishability of multipartite product states.
\newblock \emph{Quantum Inf. Process.}, 16:\penalty0 1--13, 2017.
\newblock \doi{10.1007/s11128-016-1477-7}.

\bibitem[Zhang et~al.(2017)Zhang, Zhang, Gao, Wen, and Oh]{zhangzhichao2017}
Zhi-Chao Zhang, Ke-Jia Zhang, Fei Gao, Qiao-Yan Wen, and C.~H. Oh.
\newblock Construction of nonlocal multipartite quantum states.
\newblock \emph{Phys. Rev. A}, 95:\penalty0 052344, May 2017.
\newblock \doi{10.1103/PhysRevA.95.052344}.

\bibitem[Halder(2018)]{halder2018}
Saronath Halder.
\newblock Several nonlocal sets of multipartite pure orthogonal product states.
\newblock \emph{Phys. Rev. A}, 98:\penalty0 022303, Aug 2018.
\newblock \doi{10.1103/PhysRevA.98.022303}.

\bibitem[Xu and Jiang(2021)]{xu2021novel}
Guang-Bao Xu and Dong-Huan Jiang.
\newblock Novel methods to construct nonlocal sets of orthogonal product states
  in an arbitrary bipartite high-dimensional system.
\newblock \emph{Quantum Inf. Process.}, 20:\penalty0 1--38, 2021.
\newblock \doi{10.1007/s11128-021-03062-8}.

\bibitem[Xiong et~al.(2019)Xiong, Li, Zheng, Zhu, and Fei]{xiong2019positive}
Zong-Xing Xiong, Mao-Sheng Li, Zhu-Jun Zheng, Chuan-Jie Zhu, and Shao-Ming Fei.
\newblock Positive-partial-transpose distinguishability for lattice-type
  maximally entangled states.
\newblock \emph{Phys. Rev. A}, 99:\penalty0 032346, Mar 2019.
\newblock \doi{10.1103/PhysRevA.99.032346}.

\bibitem[Zuo et~al.(2021)Zuo, Liu, Zhen, and Fei]{zuo2021nonlocal}
Hui-Juan Zuo, Jia-Huan Liu, Xiao-Fan Zhen, and Shao-Ming Fei.
\newblock Nonlocal sets of orthogonal multipartite product states with less
  members.
\newblock \emph{Quantum Inf. Process.}, 20:\penalty0 1--15, 2021.
\newblock \doi{10.1007/s11128-021-03320-9}.

\bibitem[Li et~al.(2021)Li, Wang, Shi, and Yung]{li2021local}
Mao-Sheng Li, Yan-Ling Wang, Fei Shi, and Man-Hong Yung.
\newblock Local distinguishability based genuinely quantum nonlocality without
  entanglement.
\newblock \emph{J. Phys. A}, 54\penalty0 (44):\penalty0 445301, 2021.
\newblock \doi{10.1088/1751-8121/ac28cd}.

\bibitem[Zhu et~al.(2022)Zhu, Jiang, Liang, Xu, and Yang]{zhu2022nonlocal}
Yan-Ying Zhu, Dong-Huan Jiang, Xiang-Qian Liang, Guang-Bao Xu, and Yu-Guang
  Yang.
\newblock Nonlocal sets of orthogonal product states with the less amount of
  elements in tripartite quantum systems.
\newblock \emph{Quantum Inf. Process.}, 21\penalty0 (7):\penalty0 252, 2022.
\newblock \doi{10.1007/s11128-022-03601-x}.

\bibitem[Zhen et~al.(2022)Zhen, Fei, and Zuo]{Zhen2022}
Xiao-Fan Zhen, Shao-Ming Fei, and Hui-Juan Zuo.
\newblock Nonlocality without entanglement in general multipartite quantum
  systems.
\newblock \emph{Phys. Rev. A}, 106\penalty0 (6), December 2022.
\newblock \doi{10.1103/physreva.106.062432}.

\bibitem[Wang et~al.(2022)Wang, Chen, and Li]{wang2022small}
Yan-Ling Wang, Wei Chen, and Mao-Sheng Li.
\newblock Small set of orthogonal product states with nonlocality.
\newblock \emph{Quantum Inf. Process.}, 22\penalty0 (1):\penalty0 15, 2022.
\newblock \doi{10.1007/s11128-022-03764-7}.

\bibitem[Li and Wang(2023)]{li2023bounds}
Mao-Sheng Li and Yan-Ling Wang.
\newblock Bounds on the smallest sets of quantum states with special quantum
  nonlocality.
\newblock \emph{{Quantum}}, 7:\penalty0 1101, September 2023.
\newblock ISSN 2521-327X.
\newblock \doi{10.22331/q-2023-09-07-1101}.

\bibitem[Cao et~al.(2023)Cao, Li, and Zuo]{cao2023locally}
Hai-Qing Cao, Mao-Sheng Li, and Hui-Juan Zuo.
\newblock Locally stable sets with minimum cardinality.
\newblock \emph{Phys. Rev. A}, 108:\penalty0 012418, Jul 2023.
\newblock \doi{10.1103/PhysRevA.108.012418}.

\bibitem[Halder et~al.(2019)Halder, Banik, Agrawal, and
  Bandyopadhyay]{Halder2019Strong}
Saronath Halder, Manik Banik, Sristy Agrawal, and Somshubhro Bandyopadhyay.
\newblock Strong quantum nonlocality without entanglement.
\newblock \emph{Phys. Rev. Lett.}, 122:\penalty0 040403, Feb 2019.
\newblock \doi{10.1103/PhysRevLett.122.040403}.

\bibitem[Shi et~al.(2020{\natexlab{a}})Shi, Hu, Chen, and Zhang]{2020Strong}
Fei Shi, Mengyao Hu, Lin Chen, and Xiande Zhang.
\newblock Strong quantum nonlocality with entanglement.
\newblock \emph{Phys. Rev. A}, 102:\penalty0 042202, Oct 2020{\natexlab{a}}.
\newblock \doi{10.1103/PhysRevA.102.042202}.

\bibitem[Yuan et~al.(2020)Yuan, Tian, and Sun]{yuan2020strong}
Pei Yuan, Guojing Tian, and Xiaoming Sun.
\newblock Strong quantum nonlocality without entanglement in multipartite
  quantum systems.
\newblock \emph{Phys. Rev. A}, 102:\penalty0 042228, Oct 2020.
\newblock \doi{10.1103/PhysRevA.102.042228}.

\bibitem[Wang et~al.(2021)Wang, Li, and Yung]{Wang2021}
Yan-Ling Wang, Mao-Sheng Li, and Man-Hong Yung.
\newblock Graph-connectivity-based strong quantum nonlocality with genuine
  entanglement.
\newblock \emph{Phys. Rev. A}, 104\penalty0 (012424), July 2021.
\newblock \doi{10.1103/physreva.104.012424}.

\bibitem[Shi et~al.(2021{\natexlab{a}})Shi, Li, Chen, and
  Zhang]{Shi2021strongUPB}
Fei Shi, Mao-Sheng Li, Lin Chen, and Xiande Zhang.
\newblock Strong quantum nonlocality for unextendible product bases in
  heterogeneous systems.
\newblock \emph{J. Phys. A}, 55\penalty0 (1):\penalty0 015305, December
  2021{\natexlab{a}}.
\newblock \doi{10.1088/1751-8121/ac3bea}.

\bibitem[Shi et~al.(2022{\natexlab{a}})Shi, Li, Hu, Chen, Yung, Wang, and
  Zhang]{Shi2022UPB}
Fei Shi, Mao-Sheng Li, Mengyao Hu, Lin Chen, Man-Hong Yung, Yan-Ling Wang, and
  Xiande Zhang.
\newblock Strongly nonlocal unextendible product bases do exist.
\newblock \emph{Quantum}, 6:\penalty0 619, January 2022{\natexlab{a}}.
\newblock \doi{10.22331/q-2022-01-05-619}.

\bibitem[Zhou et~al.(2022)Zhou, Gao, and Yan]{zhou2022orthogonal}
Huaqi Zhou, Ting Gao, and Fengli Yan.
\newblock Orthogonal product sets with strong quantum nonlocality on a plane
  structure.
\newblock \emph{Phys. Rev. A}, 106:\penalty0 052209, Nov 2022.
\newblock \doi{10.1103/PhysRevA.106.052209}.

\bibitem[Shi et~al.(2022{\natexlab{b}})Shi, Ye, Chen, and Zhang]{Shi2022}
Fei Shi, Zuo Ye, Lin Chen, and Xiande Zhang.
\newblock Strong quantum nonlocality in {$N$}-partite systems.
\newblock \emph{Phys. Rev. A}, 105\penalty0 (022209), February
  2022{\natexlab{b}}.
\newblock \doi{10.1103/physreva.105.022209}.

\bibitem[Li et~al.(2023)Li, Shi, and Zhang]{li2023strongest}
Jicun Li, Fei Shi, and Xiande Zhang.
\newblock Strongest nonlocal sets with small sizes.
\newblock \emph{Phys. Rev. A}, 108:\penalty0 062407, Dec 2023.
\newblock \doi{10.1103/PhysRevA.108.062407}.

\bibitem[Xiong and Li(2024)]{xiong2024existence}
Zong-Xing Xiong and Mao-Sheng Li.
\newblock Existence of strongly nonlocal sets of three states in any
  $n$-partite system.
\newblock \emph{arXiv preprint arXiv:2403.10969}, 2024.
\newblock \doi{10.48550/arXiv.2403.10969}.

\bibitem[Hu et~al.(2024)Hu, Gao, and Yan]{hu2024strong}
Mengying Hu, Ting Gao, and Fengli Yan.
\newblock Strong quantum nonlocality with genuine entanglement in an $n$-qutrit
  system.
\newblock \emph{Phys. Rev. A}, 109:\penalty0 022220, Feb 2024.
\newblock \doi{10.1103/PhysRevA.109.022220}.

\bibitem[Bhunia et~al.(2024)Bhunia, Bera, Biswas, Chattopadhyay, and
  Sarkar]{bhunia2024strong}
Atanu Bhunia, Subrata Bera, Indranil Biswas, Indrani Chattopadhyay, and Debasis
  Sarkar.
\newblock Strong quantum nonlocality: Unextendible biseparability beyond
  unextendible product basis.
\newblock \emph{arXiv preprint arXiv:2404.05882}, 2024.
\newblock \doi{10.48550/arXiv.2404.05882}.

\bibitem[Shi et~al.(2021{\natexlab{b}})Shi, Li, Hu, Chen, Yung, Wang, and
  Zhang]{shi2021hyper}
Fei Shi, Mao-Sheng Li, Mengyao Hu, Lin Chen, Man-Hong Yung, Yan-Ling Wang, and
  Xiande Zhang.
\newblock Strong quantum nonlocality from hypercubes.
\newblock \emph{arXiv:2110.08461}, 2021{\natexlab{b}}.
\newblock \doi{10.48550/arXiv.2110.08461}.

\bibitem[Bennett et~al.(1999{\natexlab{b}})Bennett, DiVincenzo, Mor, Shor,
  Smolin, and Terhal]{bennett1999unextendible}
Charles~H. Bennett, David~P. DiVincenzo, Tal Mor, Peter~W. Shor, John~A.
  Smolin, and Barbara~M. Terhal.
\newblock Unextendible product bases and bound entanglement.
\newblock \emph{Phys. Rev. Lett.}, 82:\penalty0 5385--5388, Jun
  1999{\natexlab{b}}.
\newblock \doi{10.1103/PhysRevLett.82.5385}.

\bibitem[Augusiak et~al.(2012)Augusiak, Fritz, Kotowski, Kotowski,
  Paw\l{}owski, Lewenstein, and Ac\'{\i}n]{Augusiak2012tight}
R.~Augusiak, T.~Fritz, Ma. Kotowski, Mi. Kotowski, M.~Paw\l{}owski,
  M.~Lewenstein, and A.~Ac\'{\i}n.
\newblock Tight bell inequalities with no quantum violation from qubit
  unextendible product bases.
\newblock \emph{Phys. Rev. A}, 85:\penalty0 042113, Apr 2012.
\newblock \doi{10.1103/PhysRevA.85.042113}.

\bibitem[Augusiak et~al.(2011)Augusiak, Stasi\ifmmode~\acute{n}\else
  \'{n}\fi{}ska, Hadley, Korbicz, Lewenstein, and Ac\'{\i}n]{augusiak2011bell}
R.~Augusiak, J.~Stasi\ifmmode~\acute{n}\else \'{n}\fi{}ska, C.~Hadley, J.~K.
  Korbicz, M.~Lewenstein, and A.~Ac\'{\i}n.
\newblock Bell inequalities with no quantum violation and unextendable product
  bases.
\newblock \emph{Phys. Rev. Lett.}, 107:\penalty0 070401, Aug 2011.
\newblock \doi{10.1103/PhysRevLett.107.070401}.

\bibitem[Chen et~al.(2014)Chen, Chen, and Zeng]{Chen2014Unextendible}
Jianxin Chen, Lin Chen, and Bei Zeng.
\newblock Unextendible product basis for fermionic systems.
\newblock \emph{J. Math. Phys.}, 55\penalty0 (8), 2014.
\newblock \doi{10.1063/1.4893358}.

\bibitem[Shi et~al.(2020{\natexlab{b}})Shi, Zhang, and
  Chen]{Shi2020Unextendible}
Fei Shi, Xiande Zhang, and Lin Chen.
\newblock Unextendible product bases from tile structures and their local
  entanglement-assisted distinguishability.
\newblock \emph{Phys. Rev. A}, 101:\penalty0 062329, Jun 2020{\natexlab{b}}.
\newblock \doi{10.1103/PhysRevA.101.062329}.

\bibitem[Agrawal et~al.(2019)Agrawal, Halder, and Banik]{Agrawal2019Genuinely}
Sristy Agrawal, Saronath Halder, and Manik Banik.
\newblock Genuinely entangled subspace with all-encompassing distillable
  entanglement across every bipartition.
\newblock \emph{Phys. Rev. A}, 99:\penalty0 032335, Mar 2019.
\newblock \doi{10.1103/PhysRevA.99.032335}.

\bibitem[Shi et~al.(2022{\natexlab{c}})Shi, Li, Zhang, and
  Zhao]{shi2022unextendible}
Fei Shi, Mao-Sheng Li, Xiande Zhang, and Qi~Zhao.
\newblock Unextendible and uncompletable product bases in every bipartition.
\newblock \emph{New J. Phys.}, 24\penalty0 (11):\penalty0 113025,
  2022{\natexlab{c}}.
\newblock \doi{10.1088/1367-2630/ac9e14}.

\bibitem[Horodecki et~al.(1998)Horodecki, Horodecki, and
  Horodecki]{horodecki1998mixed}
Micha\l{} Horodecki, Pawe\l{} Horodecki, and Ryszard Horodecki.
\newblock Mixed-state entanglement and distillation: Is there a ``bound''
  entanglement in nature?
\newblock \emph{Phys. Rev. Lett.}, 80:\penalty0 5239--5242, Jun 1998.
\newblock \doi{10.1103/PhysRevLett.80.5239}.

\bibitem[S~V et~al.(2021)S~V, Ranjan, and Banik]{ranjan2021state}
Hari~krishnan S~V, Ashish Ranjan, and Manik Banik.
\newblock State space structure of tripartite quantum systems.
\newblock \emph{Phys. Rev. A}, 104:\penalty0 022437, Aug 2021.
\newblock \doi{10.1103/PhysRevA.104.022437}.

\bibitem[Parthasarathy(2004)]{parthasarathy2004maximal}
Kalyanapuram~Rangachari Parthasarathy.
\newblock On the maximal dimension of a completely entangled subspace for
  finite level quantum systems.
\newblock \emph{Proc. Math. Sci.}, 114\penalty0 (4):\penalty0 365--374, 2004.
\newblock \doi{10.1007/BF02829441}.

\bibitem[Nielsen and Chuang(2010)]{nielsen2010quantum}
Michael~A Nielsen and Isaac~L Chuang.
\newblock \emph{Quantum computation and quantum information}.
\newblock Cambridge university press, 2010.
\newblock \doi{10.1017/CBO9780511976667}.

\bibitem[Zhou et~al.(2023)Zhou, Gao, and Yan]{zhou2023strong}
Huaqi Zhou, Ting Gao, and Fengli Yan.
\newblock Strong quantum nonlocality without entanglement in an $n$-partite
  system with even $n$.
\newblock \emph{Phys. Rev. A}, 107:\penalty0 042214, Apr 2023.
\newblock \doi{10.1103/PhysRevA.107.042214}.

\end{thebibliography}
\end{document}